\documentclass[11pt]{article}
\usepackage[utf8]{inputenc}
\usepackage[margin=1in]{geometry}
\usepackage{epsfig}
\usepackage{times}
\usepackage{amsmath, amsthm, amssymb}
\usepackage{amsmath,blkarray,booktabs,bigstrut}
\usepackage{verbatim} 
\usepackage{bbm}
\usepackage{mathtools}

\usepackage[linesnumbered, ruled,vlined]{algorithm2e}

\usepackage{algpseudocode}
\SetKwComment{Comment}{// }{}
\usepackage{hyperref}
\hypersetup{
    linkcolor=red,
    filecolor=magenta,      
    urlcolor=cyan,
    pdftitle={Your Document Title},
    pdfpagemode=FullScreen,
}
\usepackage{natbib}
\usepackage{cleveref}
\usepackage{nicefrac}
\Crefname{algorithm}{Algorithm}{Algorithms}
\newcommand{\eat}[1]{}

\usepackage{color}

\newcommand{\Problem}[1]{$(k,\ell)$-\textsc{Trading Prophet Problem}}
\newcommand{\ProblemTwo}[1]{$(k,\ell, \ell')$-\textsc{Trading Prophet Problem}}

\newcommand{\MatroidSetting}[1]{$\mathcal{M}$-\textsc{Trading Prophet Problem}}

\setlength{\arrayrulewidth}{0.2mm}
\setlength{\tabcolsep}{18pt}

\makeatletter
\newtheorem*{rep@theorem}{\rep@title}
\newcommand{\newreptheorem}[2]{%
\newenvironment{rep#1}[1]{%
 \def\rep@title{#2 \ref{##1}}%
 \begin{rep@theorem}}%
 {\end{rep@theorem}}}
\makeatother

\newtheorem{theorem}{Theorem}[section]
\newtheorem{lemma}{Lemma}[section]
\newtheorem{corollary}{Corollary}[section]
\newtheorem{definition}{Definition}[section]
\newtheorem{fact}{Fact}[section]
\newtheorem{Observation}{Observation}[section]
\newtheorem{proposition}{Proposition}[section]

\author{
Surbhi Rajput\thanks{Indian Institute of Technology, Delhi, India} \and Ashish Chiplunkar\thanks{Indian Institute of Technology, Delhi, India, \url{https://www.cse.iitd.ac.in/\~ashishc/}} \and Rohit Vaish\thanks{Indian Institute of Technology, Delhi, India, \url{https://www.cse.iitd.ac.in/\~rvaish/}}
}
\title{Trading Prophets: How to Trade Multiple Stocks Optimally}
\date{\vspace{-5ex}}
\begin{document}

\maketitle

\begin{abstract}
\small\baselineskip=9pt In the (single stock) \emph{trading prophet} problem formulated by~\citet{CCD+23trading}, an online algorithm observes a sequence of prices of a stock. At each step, the algorithm can either buy the stock by paying the current price if it doesn't already hold the stock, or it can sell the currently held stock and collect the current price as a reward. The goal of the algorithm is to maximize its overall profit. \citet{CCD+23trading} showed that the optimal competitive ratio for this problem is $\nicefrac{1}{2}$ when the stock prices are identically and independently distributed.

In this work, we generalize the model and the results of~\citet{CCD+23trading} by allowing the algorithm to trade multiple stocks. First, we formulate the \ProblemTwo{}, wherein there are $k$ stocks in the market, and the online algorithm can hold up to $\ell$ stocks at any time, where $\ell \leq k$. The online algorithm competes against an offline algorithm that can hold at most $\ell' \leq \ell$ stocks at any time. Under the assumption that prices of different stocks are independent, we show that, for any $\ell$, $\ell'$, and $k$, the optimal competitive ratio of \ProblemTwo{} is $\min\left\{\frac{1}{2},\frac{\ell}{k}\right\}$.

We further introduce the more general \MatroidSetting{} over a matroid $\mathcal{M}$ on the set of $k$ stocks, wherein the stock prices at any given time are possibly correlated (but are independent across time). The algorithm is allowed to hold only a feasible subset of stocks at any time. We prove a tight bound of $\frac{1}{1+d}$ on the competitive ratio of the \MatroidSetting{}, where $d$ is the \textit{density} of the matroid (refer \Cref{def_rank_density}).

We then consider the non-i.i.d.\ random order setting over a matroid, wherein stock prices drawn independently from $n$ potentially different distributions are presented in a uniformly random order. In this setting, we achieve a competitive ratio of at least $\frac{1}{1+d} - \mathcal{O} \left(\frac{1}{n} \right)$, where $d$ is the density of the matroid, matching the hardness result for i.i.d.\ instances as $n$ approaches $\infty$.

Our analysis of the above problems is based on the following key insights. First, any algorithm can be simulated by one that, on each time step, sells \emph{all} its currently held stocks before buying a suitable subset of stocks. Second, we prove that the general problem reduces to a restriction where the expected price of every stock is zero.
Third, we reduce the problem in the random order non-i.i.d.\ setting to the i.i.d. setting by leveraging the fact that the outcome of sampling two objects without replacement from a large set is almost identically distributed as the outcome of sampling with replacement. 
\end{abstract}

\section{Introduction}
Consider a trader named Alex, who wants to invest her capital in the stock market and maximize her returns. In the stock market, prices of stocks fluctuate due to various factors including market demand, company performance, and economic indicators. Alex monitors these price shifts in real time, with the price of each stock being available as a sequence over time in an online manner. Alex would have loved to know the future behavior of stock prices, so that she could trade optimally and maximize her profit. However, the online setting, which mirrors real-world trading conditions, presents a challenge. Alex can't see the future prices, so she is not necessarily able to trade optimally. The framework of online computation and competitive analysis~\citep{SleatorT85} captures Alex's dilemma: Given the current prices and without knowing their future behavior, how should Alex decide which stocks to trade at every time step? Alex must run an \textit{online algorithm} -- one whose current output depends only on the current and the past inputs. Informally, such an algorithm is said to be \emph{$\alpha$-competitive} if it guarantees an expected payoff at least $\alpha$ times the optimal payoff.

If Alex is completely uninformed about future outcomes, it is impossible for \emph{any} algorithm to give her a non-trivial competitive guarantee. Indeed, if the input is adversarial, every stock Alex buys crashes, and every stock she doesn't buy soars. Thankfully for Alex, such adversarial behavior of prices does not arise in practice. In fact, she knows in advance the probability distributions of future stock prices. This setting is referred to in the literature as the \emph{prophet inequality} setting~\citep{KrengelS, HillK}.\footnote{In the classic single-choice prophet inequality problem~\citep{HKS07automated,L17economic,EJL+17prophet,CFJ+19recent,PT22order}, an online algorithm observes a sequence of random variables with known distributions. The value of a random variable is realized upon its arrival, and the algorithm should either accept it (in which case the process stops) or reject it (in which case the next variable is observed). The goal is to maximize the selected value while competing against an offline adversary, also referred to as the \textit{prophet}, that observes all realizations at once and chooses the maximum out of them. This problem has also been studied under feasibility constraints beyond single choice; see, for example, \citep{A14bayesian, KleinbergW19, EHK+18prophet}.}

The recent work of~\citet{CCD+23trading} considered the simplest case of trading one stock and introduced the \emph{Trading Prophet} problem. In this problem, an online algorithm is given a sequence of $n$ prices, each sampled independently from the same probability distribution that is known to the algorithm. 
The algorithm trades a single (indivisible) stock and is limited to holding at most one unit of stock at any given time. At each time step, the algorithm must decide whether to \emph{purchase} the stock at the current price (provided it does not hold the stock already) or to \emph{sell} the stock at the current price (provided it already holds the stock). Correa et al.\ proved that
the strategy of buying when the price falls below the \emph{median} and selling otherwise is $\nicefrac{1}{2}$-competitive\footnote{The median of the distribution of a random variable $X$ is a number $\mu$ such that $\Pr[X<\mu]\leq \nicefrac{1}{2}$ and $\Pr[X>\mu]\leq \nicefrac{1}{2}$.}. Moreover, they also proved that no online strategy can achieve a competitive ratio strictly better than $\nicefrac{1}{2}$. Their analysis extends to the scenario where the algorithm can trade multiple units of a single stock.

\subsection*{Our Contributions.}

While the single stock problem is instructive, it may not be representative of real-world trading scenarios. It is more natural to imagine that an investor trades more than one stock at any given time. Motivated by this, in this paper, we generalize the Trading Prophet model to accommodate the idea of holding \emph{multiple} stocks. We believe this model better reflects the real-world trading dynamics.

\paragraph{Generalizing the Trading Prophet problem.}
We introduce the \Problem{}
in which we are given $k$ different stocks and the algorithm can hold up to $\ell$ stocks (and at most one unit of any given stock; this is without loss of generality) at any time. The price of every stock at any time step is drawn independently from a known distribution. Note that the distributions of prices of the $k$ stocks can be different. For any given time step, the prices of stocks need not be independently distributed. However, we continue with the assumption of \cite{CCD+23trading} that across time steps these prices are independent.
At every time step, an algorithm must decide which subset of its currently owned stocks to sell, and which subset of stocks not currently owned to buy, subject to the constraint that the algorithm holds at most $\ell$ stocks at any given time. 
We establish tight bounds on the competitive ratio of the \Problem{}. Specifically, we prove,
\begin{theorem}\label{k-l:theorem}
For every $k$ and $\ell$, there exists an online algorithm for the \Problem{} that achieves a competitive ratio at least $\min\left\{\frac{1}{2},\frac{\ell}{k}\right\}$ when the stock prices at any given time are independent. Moreover, no algorithm can guarantee a strictly better competitive ratio.
\end{theorem}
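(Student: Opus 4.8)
The plan for the algorithmic (lower-bound) direction is to invoke the two structural reductions highlighted in the introduction and then pass to a single-step inequality. By the ``sell everything, then re-buy'' reduction we may restrict to \emph{memoryless} strategies, each specified by a set $S_t$ of at most $\ell$ stocks held over each interval $(t,t+1)$; a telescoping computation shows that the payoff of such a strategy equals $\sum_{i,t} x_{i,t}\,(p_{i,t+1}-p_{i,t})$, where $x_{i,t}=\mathbf{1}[i\in S_t]$ and $\sum_i x_{i,t}\le \ell$ for all $t$. This expression is invariant under shifting each stock's entire price sequence by a per-stock constant, which yields the second reduction: we may assume $\E[p_{i,t}]=0$ for every stock $i$ without changing the competitive ratio.

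After these reductions the problem decouples over time steps. Since $x_{i,t}$ must be measurable with respect to the prices up to time $t$, while $p_{i,t+1}$ is independent of those and mean-zero, the best online payoff from step $t$ is $\E[\mathrm{top}_\ell^+(-\mathbf p)]$ --- attained by the greedy rule ``hold the (at most) $\ell$ stocks whose current price is most negative'' --- where $\mathbf p=(p_{1,t},\dots,p_{k,t})$ and $\mathrm{top}_\ell^+(\mathbf v)$ is the sum of the (at most $\ell$) largest positive coordinates of $\mathbf v$; and, since the offline algorithm may also condition on $p_{i,t+1}$, the offline optimum from step $t$ is $\E[\mathrm{top}_\ell^+(\mathbf q-\mathbf p)]$ with $\mathbf q$ an independent copy of $\mathbf p$. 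Both families of per-step choices are realized by single strategies, so the entire lower bound follows from the single-step inequality
\[
\E[\mathrm{top}_\ell^+(-\mathbf p)] \;\ge\; \min\!\left\{\tfrac12,\tfrac{\ell}{k}\right\}\cdot \E[\mathrm{top}_\ell^+(\mathbf q-\mathbf p)]
\]
for i.i.d.\ random vectors $\mathbf p,\mathbf q\in\mathbb R^k$ with independent, mean-zero coordinates.

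I would prove this inequality using the layer-cake identity $\mathrm{top}_\ell^+(\mathbf v)=\int_0^\infty \min(\ell,\,|\{i:v_i>\theta\}|)\,d\theta$, the elementary fact that the $\ell$ largest of any $k$ nonnegative numbers sum to at least $\tfrac{\ell}{k}$ of their total, and the mean-zero identity $\E[p_i^+]=\E[(-p_i)^+]$ (which lets one compare $\E[\mathrm{top}_\ell^+(\mathbf q)]=\E[\mathrm{top}_\ell^+(\mathbf p)]$ with $\sum_i\E[(-p_i)^+]$ and hence with $\E[\mathrm{top}_\ell^+(-\mathbf p)]$). The naive combination, via $\mathrm{top}_\ell^+(\mathbf q-\mathbf p)\le \mathrm{top}_\ell^+(\mathbf q)+\mathrm{top}_\ell^+(-\mathbf p)$, only gives the constant $\tfrac{\ell}{k+\ell}$ --- which is exactly the density bound $\tfrac{1}{1+d}$ for the rank-$\ell$ uniform matroid on $k$ elements --- so improving it to the sharp $\min\{\tfrac12,\tfrac{\ell}{k}\}$ is the crux, and is where independence of the coordinates becomes essential. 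The difficulty: when $\mathrm{top}_\ell^+(\mathbf q)$ dominates $\mathrm{top}_\ell^+(-\mathbf p)$ (the regime of a few very large positive spikes) the triangle-type bound overcounts by an $\Omega(\E[\mathrm{top}_\ell^+(-\mathbf p)])$ term; one must show, via a level-by-level (Poisson--binomial) analysis of $|\{i:q_i-p_i>\theta\}|$ together with a case split on whether $\ell\ge k/2$, that in that regime $\E[\mathrm{top}_\ell^+(\mathbf q-\mathbf p)]$ is already essentially $\E[\mathrm{top}_\ell^+(\mathbf q)]$ so that the extra term is spurious. I expect this level-by-level estimate to be the main obstacle.

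For the matching upper bound I would give two families of instances, both with prices independent across stocks, whose competitive ratios approach $\tfrac12$ and $\tfrac{\ell}{k}$; their minimum is $\min\{\tfrac12,\tfrac{\ell}{k}\}$. The first embeds the single-stock hard instance of~\citet{CCD+23trading} by letting $k-1$ of the stocks have a constant price, so the extra $\ell-1$ holding slots are useless and the ratio is at most $\tfrac12$ for all $k,\ell$. The second uses $k$ i.i.d.\ stocks whose price puts a tiny mass $\varepsilon\ll 1/k$ on a large value $M$ and the rest on $-1$, with $M\varepsilon=1$ so the mean is $0$: offline can, with high probability, sell into essentially every one of the $\approx k$ spikes, while any online algorithm holds at most $\ell$ stocks and cannot foresee which will spike, so it captures only $\approx\ell$, giving ratio $\to\ell/k$ (equivalently, $\E[\mathrm{top}_\ell^+(-\mathbf p)]/\E[\mathrm{top}_\ell^+(\mathbf q-\mathbf p)]\to \ell/k$ for this $\mathbf p$). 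Here the remaining work is just to verify these two estimates.
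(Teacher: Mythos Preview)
Your reductions (sell-then-rebuy, zero-expectation, per-step decoupling) and your hardness instances are sound; in fact your $\tfrac12$ instance---embedding the single-stock hard example and padding with $k-1$ constant-price stocks---is simpler than the paper's, which uses $k$ i.i.d.\ symmetric three-point distributions to force the same bound.

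The genuine gap is exactly where you flag it: the single-step inequality $\E[\mathrm{top}_\ell^+(-\mathbf p)]\ge\min\{\tfrac12,\tfrac{\ell}{k}\}\cdot\E[\mathrm{top}_\ell^+(\mathbf q-\mathbf p)]$. Your proposed attack---a level-by-level Poisson--binomial analysis of $|\{i:q_i-p_i>\theta\}|$ with a case split on $\ell\ge k/2$---is a wish, not a plan; keeping $\mathbf q$ in the picture makes the thresholds $\Pr[q_i-p_i>\theta]$ awkward to relate to the $\Pr[-p_i>\theta]$ that govern the ALG side. The paper's route is different and cleaner. It first \emph{drops} $\mathrm{top}_\ell$ on the OPT side to the full sum $\sum_s\E[(q_s-p_s)^+]$, then uses independence, $q_s\sim p_s$, the mean-zero identity, and the quadratic bound $\sum_s a_s^2\ge\tfrac{2}{k-1}\sum_{s<s'}a_sa_{s'}$ to rewrite this upper bound purely in terms of $a_s(x):=\Pr[-p_s\ge x]$, obtaining $\int_0^\infty\bigl(2\sum_s a_s-\tfrac{2}{k-1}\sum_{s<s'}a_sa_{s'}\bigr)\,dx$. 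Since ALG equals $\int_0^\infty\E[\min(\ell,N_x)]\,dx$ with $N_x$ Poisson--binomial in the same $a_s(x)$, the target becomes, pointwise in $x$, an inequality between two functions of $(a_1,\ldots,a_k)\in[0,1]^k$. The decisive observation---absent from your plan---is that this pointwise inequality is \emph{multilinear} in the $a_s$, so it suffices to verify it at the corners $\{0,1\}^k$; there it collapses to the one-variable check $\max\{2,k/\ell\}\cdot\min\{z,\ell\}\ge 2z - z(z-1)/(k-1)$ in $z=|\{s:a_s=1\}|$, which is elementary. No case split on $\ell$ versus $k/2$ is needed.
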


\paragraph{Extension to resource augmentation setting.}
Our result in Theorem~\ref{k-l:theorem} extends to the \textit{resource augmentation} setting~\citep{Roughgarden}. 
In this setting, the online algorithm competes with the offline algorithm having fewer resources than the online algorithm. Specifically, the offline algorithm is restricted to hold up to $\ell'$ stocks at any time, where $1\leq\ell'\leq\ell$.
We call this generalization the \ProblemTwo{}. Even in this setting, we prove that the tight bound from Theorem~\ref{k-l:theorem} continues to hold, making it a special case of the following two theorems for $\ell'=\ell$.
    \begin{theorem}
    \label{hardness_result}
        For every $k$, ${\ell}$ and $\ell'$, no algorithm for \ProblemTwo{} can achieve a competitive ratio greater than $\min\left\{\frac{1}{2},\frac{\ell}{k}\right\}$ when the stock prices at any given time are independent.
    \end{theorem}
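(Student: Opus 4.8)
The plan is to prove the two upper bounds $\tfrac{1}{2}$ and $\tfrac{\ell}{k}$ by exhibiting, for every $\epsilon>0$, instances on which every (possibly randomized, adaptive) online algorithm is at best $(\tfrac12+\epsilon)$-competitive and at best $(\tfrac{\ell}{k}+\epsilon)$-competitive, respectively. Taking the harder of the two instances then shows that no algorithm can be $(\min\{\tfrac12,\tfrac{\ell}{k}\}+\epsilon)$-competitive for any $\epsilon>0$, which is exactly the claim.

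For the $\tfrac12$ bound I would reduce to the single-stock lower bound of \citet{CCD+23trading}. Fix a hard single-stock i.i.d.\ instance on which every online strategy earns at most $(\tfrac12+\epsilon)$ times the single-stock offline optimum, assign its price sequence to stock $1$, and set the price of each of stocks $2,\dots,k$ to the constant $0$ at every time step (so the stock prices are trivially independent at each time). Any trade involving stocks $2,\dots,k$ yields profit exactly $0$, so on this instance the online algorithm -- which has at least one holding slot since $\ell\ge 1$ -- and the offline optimum -- which has at least one slot since $\ell'\ge 1$ -- both effectively face the single-stock problem; hence the ratio on this instance is at most $\tfrac12+\epsilon$.

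For the $\tfrac{\ell}{k}$ bound I would use a two-step ``spike'' instance parameterized by a small $p>0$: at time $1$ every stock has price $0$, and at time $2$ each stock independently has price $v>0$ with probability $p$ and price $0$ otherwise. On the online side, the only trades that can yield positive profit are those that buy at time $1$ and sell at time $2$, so an online algorithm's profit is at most the time-$2$ revenue of the set $H$ of at most $\ell$ stocks it holds going into time $2$; since $H$ is fixed before the time-$2$ prices are revealed and is therefore independent of them, the expected profit is at most $\E[|H|]\cdot vp\le \ell v p$. On the offline side, if $X\sim\mathrm{Bin}(k,p)$ denotes the number of spikes at time $2$, the offline algorithm can buy $\min\{\ell',X\}$ spiking stocks at time $1$ for free and sell them at time $2$, earning $v\cdot\min\{\ell',X\}$, so (using $\ell'\ge 1$) the offline optimum is at least $v\,\E[\min\{\ell',X\}]\ge v\bigl(1-(1-p)^k\bigr)$. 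The ratio is thus at most $\ell p/\bigl(1-(1-p)^k\bigr)$, which tends to $\tfrac{\ell}{k}$ as $p\to 0$; choosing $p$ small enough makes it at most $\tfrac{\ell}{k}+\epsilon$. (The inequality $\E[\min\{\ell',X\}]\le\E[X]=kp$ also shows this ratio never falls below $\tfrac{\ell}{k}$, consistent with the matching algorithm of \Cref{k-l:theorem}.)

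The probabilistic estimates above and the inertness of constant-price stocks are routine. The step requiring the most care is the online upper bound for the spike instance: one must argue precisely that for an arbitrary adaptive, randomized online algorithm the held set $H$ at time $2$ is genuinely independent of the time-$2$ realizations, that buying at time $2$ can never help (it only incurs cost and is never resold within the horizon), and that the bound reaches the exact value $\tfrac{\ell}{k}$ only in the limit $p\to 0$ -- which is why the lower bound must be presented as a family of instances rather than a single one.
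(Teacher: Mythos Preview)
Your reduction for the $\tfrac12$ bound is correct and arguably cleaner than the paper's direct construction: padding a hard single-stock i.i.d.\ instance with $k-1$ constant-zero stocks preserves the i.i.d.\ property, preserves independence across stocks, and makes the extra slots on both sides useless, so the $(k,\ell,\ell')$ ratio collapses to the single-stock ratio. The paper instead builds a fresh three-point instance and analyzes it from scratch via \Cref{lem_k_l_opt,lem_k_l_alg}; your reduction buys brevity at the cost of a black-box dependence on \citet{CCD+23trading}.

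Your $\tfrac{\ell}{k}$ argument, however, has a genuine gap: your two-step spike instance is \emph{not} i.i.d.\ across time---the time-$1$ price vector is deterministically zero while the time-$2$ vector is random---so it is not an instance of the \ProblemTwo{} at all (see the definition in \Cref{sec:prelims}, where a single CDF $F$ governs every time step). The intuition is right, but the execution must respect the model. The paper's fix is to use the \emph{same} spike distribution at every step: each $X_s$ is $0$ with probability $1-\varepsilon$ and $1/\varepsilon$ with probability $\varepsilon$, independently across stocks and time. The analysis then goes through the per-time-step reduction of \Cref{lem_k_l_opt,lem_k_l_alg}: the optimal online per-step profit is $\mathbb{E}[\texttt{top}_\ell(\mu-X)]\le\ell$ (since $\mu_s=1$ and $X_s\ge0$), while the optimal offline per-step profit is at least $k(1-\varepsilon)^{2k-1}$ (conditioning on the event that exactly one coordinate of $X'$ spikes and all $2k-1$ other coordinates are zero). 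Letting $\varepsilon\to0$ gives the ratio $\ell/k$. Your spike idea survives, but the argument that ``$H$ is fixed before the time-$2$ prices are revealed'' must be replaced by the observation that the optimal online policy is the one from \Cref{lem_perday_alg}, whose per-step profit $\mathbb{E}[\texttt{top}_\ell(\mu-X)]$ you then bound directly.
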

    
\begin{theorem}
\label{theorem1}
For every $k$, $\ell$, and $\ell'$, there exists an algorithm that is $\min\left\{\frac{1}{2},\frac{\ell}{k}\right\}$-competitive algorithm for the \ProblemTwo{} when the stock prices at any given time are independent.
\end{theorem}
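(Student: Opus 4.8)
The plan is to exploit the two structural reductions announced in the introduction to collapse the $n$-step problem to a single one-shot comparison, and then to prove that comparison by a case analysis on whether $\ell\ge k/2$. First I would apply the ``sell everything, then rebuy'' normalization: every policy (online or offline) may be assumed to liquidate its whole portfolio at each step and buy a fresh feasible set $B_t$, so that its total profit telescopes to $\sum_{t=1}^{n-1}\sum_{i\in B_t}\big(p_i^{(t+1)}-p_i^{(t)}\big)$, with the only constraint $|B_t|\le\ell$ (resp.\ $|B_t|\le\ell'$) and \emph{no} coupling across $t$. Consequently the optimal online policy is myopic --- at step $t$ it maximizes $\E\big[\sum_{i\in B_t}(p_i^{(t+1)}-p_i^{(t)})\mid\mathcal F_t\big]=\sum_{i\in B_t}(\mu_i-p_i^{(t)})$, i.e.\ it buys the (at most $\ell$) stocks with the largest positive values of $\mu_i-p_i^{(t)}$ --- while the clairvoyant offline buys, each step, the (at most $\ell'$) stocks with the largest positive increments. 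Using the zero-mean reduction (the profit expression above is invariant under shifting each $D_i$ by its mean) and the monotonicity of the offline value in $\ell'$ (so it suffices to take $\ell'=\ell$), the theorem becomes the following one-shot inequality: for independent zero-mean $q_i,r_i\sim D_i$ ($1\le i\le k$), all $2k$ variables independent,
\[
\E\Big[\max_{|O|\le\ell}\textstyle\sum_{i\in O}(-q_i)\Big]\ \ge\ \min\Big\{\tfrac12,\tfrac{\ell}{k}\Big\}\cdot\E\Big[\max_{|O|\le\ell}\textstyle\sum_{i\in O}(r_i-q_i)\Big].
\]
Write $\mathrm{ON}$ and $\mathrm{OFF}$ for the two sides' expectations; equivalently $\mathrm{ON}=\E[\sigma_\ell(q_i^-)]$ and $\mathrm{OFF}=\E[\sigma_\ell((r_i-q_i)^+)]$, where $x^\pm=\max(0,\pm x)$ and $\sigma_\ell$ denotes the sum of the $\ell$ largest coordinates.

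Next I would recast both sides using the layer-cake identity $\sigma_\ell(y_i)=\int_0^\infty\min\big(\ell,\ |\{i:y_i>t\}|\big)\,dt$ for nonnegative $y_i$, turning $\mathrm{ON}$ and $\mathrm{OFF}$ into integrals of \emph{truncated counts}, and I would note that the target constant is exactly $\tfrac1{1+d}$ for $d=\max\{1,(k-\ell)/\ell\}$, the density of the uniform matroid $U_{k,\ell}$ --- so the argument should be written to foreshadow the general matroid case. The extreme case $\ell=k$ is immediate: there $\sigma_\ell$ is the full sum, so $\mathrm{OFF}=\sum_i\E[(r_i-q_i)^+]=\tfrac12\sum_i\E|r_i-q_i|\le\sum_i\E|q_i|=2\sum_i\E[q_i^-]=2\,\mathrm{ON}$, using only the triangle bound $\E|r_i-q_i|\le 2\E|q_i|$ and $\E[q_i^+]=\E[q_i^-]$. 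For general $\ell$ I would try to run, in parallel, two lines: (a) a ``$\tfrac12$'' bound using the symmetry of each increment $r_i-q_i$ about $0$ --- the set carrying the top-$\ell$ positive increments and the set carrying the top-$\ell$ negative ones are disjoint, so $2\,\mathrm{OFF}\le\E[\sigma_{2\ell}(|r_i-q_i|)]$, which together with $\E|r_i-q_i|\le 2\E|q_i|$ relates $\mathrm{OFF}$ to $\sum_i\E[q_i^-]$; and (b) a ``$\tfrac{\ell}{k}$'' bound using $\sigma_\ell(x_i)\ge\tfrac\ell k\sum_i x_i$ to lower-bound $\mathrm{ON}$ --- and combine them via $\min$, since at each layer one is forced to concede the weaker of the two mechanisms.

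The step I expect to be the main obstacle is the interaction of heavy-tailed price distributions with the $\min(\ell,\cdot)$ truncation, which makes the naive ``bound $\mathrm{OFF}$ above, bound $\mathrm{ON}$ below, then divide'' strategy lose an extra factor (it yields only $\approx\tfrac14$ in the regime $k/2\le\ell<k$, and $\approx\tfrac\ell{2k}$ when $\ell<k/2$). The reason is concrete: for distributions like ``$M$ with probability $1/M$, $\approx-1$ otherwise'' the optimal offline set $O^{*}$ is typically much smaller than $\ell$, so $\sum_{i\in O^{*}}(-q_i)$ sits far below $\mathrm{ON}$ and the crude bound $\mathrm{OFF}\le\E[\sigma_\ell(r_i^+)]+\E[\sigma_\ell(q_i^-)]$ (or $\mathrm{OFF}\le\sum_i\E|q_i|$) overshoots the true ratio $\tfrac12$ (resp.\ $\tfrac\ell k$). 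The fix I would pursue is to charge $O^{*}$ against the online solution while tracking $|O^{*}|$: split on whether $|O^{*}|$ is comparable to $\ell$ or much smaller, handle the ``few coordinates, huge values'' case directly at the level of the layer-cake counts (matching the offline's layer $\{i:r_i-q_i>t\}$ to an online layer at a rescaled threshold), and only invoke the triangle and rank bounds where they are tight. Making this bookkeeping lose nothing beyond $\min\{1/2,\ell/k\}$ --- equivalently, nothing beyond $\tfrac1{1+d}$ --- is the technical heart of the argument.
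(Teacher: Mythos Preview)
Your reductions are right and match the paper: the sell-then-rebuy normalization, the zero-mean shift, and the collapse to the one-shot comparison $\mathrm{ON}=\E[\texttt{top}_\ell(-X)]$ versus $\mathrm{OFF}=\E[\texttt{top}_{\ell'}(X'-X)]$ are exactly how the paper sets things up (and indeed the paper immediately drops $\ell'$ by bounding $\texttt{top}_{\ell'}(X'-X)\le\sum_s(X'_s-X_s)^+$, so your monotonicity remark is fine). One terminological slip: the density of the $\ell$-uniform matroid on $k$ elements is $k/\ell$, not $\max\{1,(k-\ell)/\ell\}$; your formula happens to produce $\min\{1/2,\ell/k\}$ after the $1/(1+d)$ transform, but that number is \emph{not} the matroid bound $\ell/(k+\ell)$, so the ``foreshadow the matroid case'' framing is misleading --- the independent-coordinate case genuinely beats the general-matroid bound and needs its own argument.

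The real gap is the one you yourself flag: your symmetry-plus-triangle route loses a factor of $2$, and the proposed repair (``track $|O^*|$, split on whether it is near $\ell$'') is not a proof --- it is a hope, and there is no indication that any adaptive charging of $O^*$ against the online set can be made tight. The paper avoids this entire difficulty by a device you do not mention. It writes both $\mathrm{ON}$ and the upper bound on $\mathrm{OFF}$ as integrals over a threshold $x\ge 0$ of expressions that depend on the vector $(a_s)_{s\in[k]}$ with $a_s=\Pr[-X_s\ge x]$, and then proves the inequality \emph{pointwise in $x$}: for every $(a_1,\ldots,a_k)\in[0,1]^k$,
\[
\max\Bigl\{2,\tfrac{k}{\ell}\Bigr\}\cdot\sum_{S\subseteq[k]}\min\{|S|,\ell\}\prod_{s\in S}a_s\prod_{s'\notin S}(1-a_{s'})\ \ge\ 2\sum_s a_s-\tfrac{2}{k-1}\sum_{s<s'}a_s a_{s'}.
\]
Because both sides are \emph{multilinear} in the $a_s$'s, it suffices to check the $2^k$ corners $a\in\{0,1\}^k$, where the inequality collapses to a one-variable check in $z=|\{s:a_s=1\}|$. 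The $-\frac{2}{k-1}\sum a_s a_{s'}$ correction on the right (coming from $\sum_s a_s^2\ge\frac{2}{k-1}\sum_{s<s'}a_s a_{s'}$ applied inside the $\mathrm{OFF}$ bound) is precisely what makes the corner check go through without the extra factor of $2$. This multilinearity-and-corners trick is the missing idea; your layer-cake counts are the right objects, but you need to compare the integrands at each level rather than bound $\mathrm{ON}$ and $\mathrm{OFF}$ globally and divide.
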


One might intuitively expect that the competitive ratio will improve if the adversary's power is reduced by lowering $\ell'$. Somewhat surprisingly, the optimal competitive ratio established by Theorems \ref{hardness_result} and \ref{theorem1} \emph{does not} depend on $\ell'$. Our instance to force the hardness result is designed so that, with high probability, at most one stock is profitable at any given time. The offline algorithm can easily spot such a stock and buy it (even when $\ell'=1$), while the online algorithm is unlikely to guess it correctly, even though the possibility of holding up to $\ell$ stocks essentially gives the algorithm $\ell$ opportunities to guess the right stock.
Another interesting takeaway is that for any fixed $k$, as long as the online algorithm can hold at least 50\% of all stocks (i.e., $\ell \geq k/2$), increasing the number of stocks held \emph{does not} affect the competitive ratio. We believe that these insights are not obvious a priori.

\paragraph{Generalization to correlated prices.}
For the \Problem{}, we assumed that the prices are independent across time and independent across stocks in Theorem~\ref{k-l:theorem}. We generalize this model to deal with \emph{correlated} stock prices to better relate with real-world trading scenarios. We solve for the setting where the stock prices are possibly correlated across stocks for any given time step, but independent across time. For this correlated setting of the \Problem{}, we prove the tight bound of $\frac{\ell}{\ell+k}$ (Corollary~\ref{coro_hardness} and Corollary~\ref{coro_alg}) as an easy consequence of Theorems \ref{hardness_res_matroid} and \ref{thm:matroid:ratio} stated below.

\paragraph{Generalization to matroid settings.}
The constraint of holding up to $\ell$ stocks at any given time is not sufficient to capture traders' requirement to keep their portfolios sufficiently diversified. 
For instance, suppose the stocks are categorized into different types. To maintain diversity, a trader wants to hold at most $\ell_i$ stocks of the $i$'th type in addition to holding at most $\ell$ stocks, at all times.
Such requirements are readily captured by combinatorial structures called \emph{matroids} (refer Definition~\ref{def_matroids}). Motivated by this, we consider the substantial generalization of the trading prophet problem to matroid constraints.
We introduce the \MatroidSetting{}, over the matroid $\mathcal{M}$ on the set of stocks. Here, the trader is allowed to hold a subset of stocks provided that the subset is a \emph{feasible\footnote{The matroid literature uses the term `independent' instead of `feasible'. We prefer to use the latter to avoid potential ambiguity resulting from the usage of `independent' in the context of matroids as well as random variables.} set} of $\mathcal{M}$.
We establish that the competitive ratio of the trading prophet problem on a matroid is determined by a property of the matroid called its \emph{density} (refer \Cref{def_rank_density}). Specifically, we prove,

\begin{theorem}
\label{hardness_res_matroid}
    Let $\mathcal{M}$ be an arbitrary matroid and $d$ be the density of $\mathcal{M}$. Then there does not exist an algorithm for \MatroidSetting{} whose competitive ratio is greater than $\frac{1}{1+d}$.
\end{theorem}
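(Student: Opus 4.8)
The plan is to exhibit, for every $\delta>0$, a single time-independent, mean-zero price distribution on which the best online value is at most $\left(\tfrac{1}{1+d}+\delta\right)$ times the offline optimum; letting $\delta\to0$ then gives the theorem. The first move is to pass to a densest subset: let $S$ attain the maximum density, with $|S|=a$, $r(S)=b$, $d=a/b$, and give every stock outside $S$ the constant price $0$. Since such a stock is worthless yet still consumes a slot of any feasible set, it is without loss of generality to work entirely within the restriction $\mathcal{M}|_S$, whose density is again $d$, now witnessed by its whole ground set $S$.

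I would next use the reductions stated in the introduction---any algorithm may be assumed to liquidate its holding and then rebuy an arbitrary feasible set at every step, and (since our instance will be mean-zero) prices have mean zero---so that an online algorithm's state is simply a feasible set $I_t$ held after step $t$, freely chosen from the past. Writing $v_t(I)=\sum_{e\in I}v_t(e)$, the expected profit $\sum_t\E[v_t(I_{t-1})-v_t(I_t)]$ telescopes, and $\E[v_t(I_{t-1})]=0$ by independence across time together with mean zero; hence it equals $\sum_t\E[(-v_t)(I_t)]\le\sum_t\E[\max_{I\text{ feasible}}(-v_t)(I)]$, so over $T$ steps the best online value is at most $T\cdot\E[\max_I(-v)(I)]$. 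The offline optimum I would lower bound directly with an explicit strategy (by the same summation by parts it in fact equals $\sum_t\E[\max_I(v_{t+1}-v_t)(I)]$ up to boundary effects). So it suffices to find a mean-zero $v$ on $S$ with $\E[\max_I(-v)(I)]$ small relative to the per-step offline gain.

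Here is the instance. At each of $T$ steps, independently: all prices are $0$ with probability $1-\beta-\gamma$ (``normal''); every stock of $S$ has price $-\epsilon$ with probability $\gamma$ (``dip''); and the stocks of a random basis $L$ of $\mathcal{M}|_S$ have price $M$ while the rest are $0$, with probability $\beta$ (``spike''). The basis $L$ is sampled so that each $e\in S$ lies in it with probability exactly $b/a$; such a distribution over bases exists precisely because $\tfrac{1}{d}\mathbf{1}_S$ belongs to the base polytope of $\mathcal{M}|_S$, whose constraints $|A|\le d\cdot r(A)$ for $A\subseteq S$ are exactly the assertion that $S$ is densest. Mean zero then forces $\beta bM=\gamma a\epsilon$. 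A one-line case check gives $\E[\max_I(-v)(I)]=\gamma b\epsilon$ (a dip is the only mode in which any stock is cheap, and the algorithm pockets a full basis of the dipped stocks), so the best online value is at most $T\gamma b\epsilon$. For the offline optimum, the strategy that over every transition \emph{into} a spike holds that spike's basis (yielding $bM$ whenever the preceding step is normal) and over every dip-then-normal transition holds an arbitrary basis of $S$ (yielding $b\epsilon$)---two disjoint families of transitions---shows it is at least $(1-\beta-\gamma)\,T\,(\beta bM+\gamma b\epsilon)=(1-\beta-\gamma)\,T\,\gamma\epsilon(a+b)$ up to lower-order terms. Letting $T\to\infty$ and then $\beta,\gamma\to0$ with $\beta bM=\gamma a\epsilon$ maintained, the competitive ratio on these instances tends to $\tfrac{b}{a+b}=\tfrac{1}{1+d}$.

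I expect the only genuinely delicate step---and the only place the density hypothesis is used---to be the claim that $\tfrac{1}{d}\mathbf{1}_S$ is a convex combination of bases of $\mathcal{M}|_S$, which is what produces the spike distribution with uniform inclusion probabilities; here one also assumes $\mathcal{M}$ has no loops, harmlessly, since a loop can never be held. The conceptual point worth making explicit is the source of the ``$+1$'' in $1+d$: the spike must be \emph{rare but large} (so $\beta bM=\gamma a\epsilon$ with $\beta\to0$), so that offline's per-step profit from spikes, namely $\gamma a\epsilon$, is $a/b=d$ times its per-step profit $\gamma b\epsilon$ from dips, whereas the online algorithm---whose only lever is buying cheap---is confined to the dips, giving online/offline $\to b/(a+b)$. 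A moderate rise spread over all of $S$ instead of a rare concentrated spike would yield only the weaker ratio $\tfrac{1}{2}$, with both sides symmetrically capped at $b$ stocks per transition.
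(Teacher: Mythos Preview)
Your proof is correct and reaches the same bound as the paper, but via a more elaborate spike construction. The paper also restricts to a densest subset $E'$ and uses a three-mode mean-zero distribution (all-zero, uniform dip, spike), but its spike places a large value $1/\varepsilon^2$ on a \emph{single} uniformly random coordinate of $E'$, with the remaining coordinates at $0$; looplessness alone then lets the offline capture the full spike by holding that one stock. Your spike instead places value $M$ on every element of a random \emph{basis} $L$ of $\mathcal{M}|_S$, and you need the basis distribution to have uniform marginals $b/a$ to keep every coordinate mean-zero---this is why you invoke the matroid-polytope fact that $\tfrac{1}{d}\mathbf{1}_S$ is a convex combination of bases of $\mathcal{M}|_S$. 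Both routes produce the same arithmetic: an expected per-step offline spike gain proportional to $|S|$ and a dip gain proportional to $r(S)$, against an online gain proportional only to $r(S)$, yielding $r(S)/(|S|+r(S))=1/(1+d)$. The paper's construction is strictly more elementary (no polytope machinery, only looplessness), while yours makes the ``offline packs a full basis of value'' picture literal. Your identification of the base-polytope step as the delicate one is accurate, but it is delicate only because you chose the basis-spike route; a single-element spike with value scaled by $a$ would achieve the same mean-zero balance and offline gain without that argument.
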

\begin{theorem}
\label{thm:matroid:ratio}
    Let $\mathcal{M}$ be an arbitrary matroid and $d$ be the density of $\mathcal{M}$. Then there exists an algorithm for the \MatroidSetting{} whose competitive ratio is at least $\frac{1}{1+d}$.
\end{theorem}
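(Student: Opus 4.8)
The plan is to run a \emph{myopic} algorithm and charge its loss against the optimum through a single polyhedral inequality, which is exactly where the density $d$ enters. Throughout I use the two structural reductions highlighted in the introduction. By the first (``sell everything, then rebuy''), both the online algorithm and the offline optimum may be assumed to decide, at each time $t$, a feasible set $S_t$ to hold between steps $t$ and $t+1$; the net profit then telescopes to $\sum_{t=1}^{n-1}(p_{t+1}-p_t)(S_t)$, where $p_t$ is the price vector at time $t$ and each $S_t$ may be an arbitrary independent set of $\mathcal M$. Since this quantity depends on prices only through consecutive differences, the second reduction further lets me assume $\E[p_t(i)]=0$ for every stock $i$ (prices are i.i.d.\ across time, so the mean is independent of $t$), and the resulting algorithm is transported back to the original instance by the same shift, which does not change profits. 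It is convenient to write $f(y):=\max_{S\in\mathcal I(\mathcal M)}\sum_{i\in S}y_i^+=\max_{x\in P(\mathcal M)}\langle x,y\rangle$, where $P(\mathcal M)$ is the matroid's independence polytope; $f$ is nonnegative, monotone and subadditive and satisfies $f(y)=f(y^+)$.

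For the algorithm I take the greedy/myopic rule: at each step $t$, hold a maximum-weight independent set for the weights $-p_t(i)$ — i.e., the cheapest feasible portfolio, which automatically excludes every stock of nonnegative current price. This is a legitimate online strategy: it looks only at the current price and stays feasible. Conditioned on $p_t$, the expected step-$t$ gain is $\max_{S\in\mathcal I}(-p_t)(S)=f(-p_t)$, so the algorithm's expected profit equals $(n-1)\,\E[f(-p)]$. The offline optimum, in contrast, may pick each $S_t$ knowing $p_{t+1}$ as well (choosing each $S_t$ independently, since in the normalized form a set held at time $t$ influences only the $t$-th term), hence earns $(n-1)\,\E[f(p_{t+1}-p_t)]$. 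Using $(a-b)^+\le a^++b^-$ coordinatewise together with monotonicity and subadditivity of $f$ gives $\E[f(p_{t+1}-p_t)]\le \E[f(p_{t+1})]+\E[f(-p_t)]=\E[f(p)]+\E[f(-p)]$.

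The crux is then the inequality $\E[f(p)]\le d\,\E[f(-p)]$. Fix a realization and let $x_p\in P(\mathcal M)$ attain $f(p)=\langle x_p,p\rangle$. Because $P(\mathcal M)\subseteq[0,1]^E$ we have $\mathbf 1-x_p\ge 0$, and because $|S|\le d\,r(S)$ for every $S$ (the definition of density; equivalently $\tfrac1d\mathbf 1\in P(\mathcal M)$), the point $\tfrac1d(\mathbf 1-x_p)$ satisfies $\tfrac1d(|S|-x_p(S))\le\tfrac1d|S|\le r(S)$ for all $S$ and so lies in $P(\mathcal M)$. Evaluating $f$ at $-p$ against this feasible point yields $f(-p)\ge\big\langle\tfrac1d(\mathbf 1-x_p),\,-p\big\rangle=\tfrac1d\big(\langle x_p,p\rangle-\textstyle\sum_i p_i\big)=\tfrac1d\big(f(p)-\sum_i p_i\big)$; taking expectations and using $\sum_i\E[p_i]=0$ gives the claim. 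Combining the three inequalities above, the optimum is at most $(n-1)(d+1)\E[f(-p)]=(1+d)$ times the algorithm's profit, i.e., the myopic algorithm is $\tfrac1{1+d}$-competitive.

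I expect the genuinely non-obvious step to be the polyhedral witness in the last paragraph: recognizing that $\tfrac1d(\mathbf 1-x_p)$ is feasible for $P(\mathcal M)$ and that pairing it with $-p$ turns a statement about the positive part of $p$ into one about its negative part, losing exactly the factor $d$. Everything else — the two reductions, the myopic analysis, and the subadditivity bound on the optimum — is routine once the right normal form is in place. One boundary case to dispose of separately is a matroid with loops, where $d=\infty$ and the bound $\tfrac1{1+d}$ is vacuous (and indeed a loop can never profitably be held).
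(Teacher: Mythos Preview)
Your proof is correct and follows the same overall architecture as the paper's: the same two reductions (sell-all-then-rebuy and zero expectation), the same myopic algorithm, and the same target inequality $\mathbb{E}[\texttt{top}_{\mathcal{M}}(X'-X)]\leq(1+d)\,\mathbb{E}[\texttt{top}_{\mathcal{M}}(-X)]$. The one genuine difference is how the density enters. The paper proves combinatorially, via Abel summation and Kruskal's algorithm (its Lemma~\ref{lemma:matroid2}), that $\sum_{e}w(e)^{+}\leq d\cdot\texttt{top}_{\mathcal{M}}(w)$, and then uses this inside the decomposition $\texttt{top}_{\mathcal{M}}(x'-x)\leq\texttt{top}_{\mathcal{M}}(-x)+\sum_{i}x'_{i}+d\cdot\texttt{top}_{\mathcal{M}}(-x')$. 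You instead argue polyhedrally: writing $f(y)=\max_{x\in P(\mathcal{M})}\langle x,y\rangle$ and exhibiting the feasible witness $\tfrac{1}{d}(\mathbf{1}-x_{p})\in P(\mathcal{M})$ yields $f(p)\leq\sum_{i}p_{i}+d\,f(-p)$ directly. After the substitution $w=-p$ your inequality reads $f(-w)+\sum_{i}w_{i}\leq d\,f(w)$, which is pointwise slightly weaker than the paper's $\sum_{i}w_{i}^{+}\leq d\,f(w)$ (the gap is $\sum_{i}(-w_{i})^{+}-f(-w)\geq 0$), but both suffice once expectations are taken and the zero-mean assumption kills the linear term. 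Your polyhedral route makes the role of density, namely that $\tfrac{1}{d}\mathbf{1}\in P(\mathcal{M})$, very transparent and avoids the Kruskal bookkeeping; the paper's combinatorial route, in turn, avoids appealing to LP integrality of the matroid polytope. Incidentally, the simpler witness $\tfrac{1}{d}\mathbf{1}$ (rather than $\tfrac{1}{d}(\mathbf{1}-x_{p})$) evaluated against $w^{+}$ already gives the paper's stronger pointwise form in one line.
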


\paragraph{Generalization to non-i.i.d.\ Random order setting.}
In all the aforementioned settings, we consider the assumption that stock prices are independently and identically distributed (i.i.d.) over time. However, we also consider the non-i.i.d.\ setting, where prices are drawn from $n$ potentially different distributions and presented in a random order. In this setting, we prove, 

\begin{theorem}
\label{thm:random_order}
    For the non-i.i.d.\ random-order variant of the trading prophet problem over a matroid, there exists an algorithm whose competitive ratio is at least $\frac{1}{1+d} - \frac{2}{n}$,
    where $d$ denotes the density of the matroid.
\end{theorem}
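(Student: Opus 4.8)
The plan is to run, on the random‑order instance, the $\tfrac{1}{1+d}$‑competitive i.i.d.\ algorithm $\cA$ of \Cref{thm:matroid:ratio} tuned to the \emph{average distribution} $\bar D := \frac1n\sum_{i=1}^n D_i$ (which $\cA$ can compute from the input multiset of distributions), and to show that its expected reward on the random‑order instance is essentially its expected reward on a genuinely i.i.d.\ instance with per‑step distribution $\bar D$. First I would record a reformulation of both benchmark and algorithm via the paper's ``sell everything before buying'' reduction: on any realized price sequence $p_1,\dots,p_n\in\mathbb{R}^k$ the offline optimum is
\[
\opt \;=\; \sum_{t=1}^{n-1} w(p_t,p_{t+1}),\qquad w(x,y)\;:=\;\max\bigl\{\langle\mathbf{1}_S,\,y-x\rangle : S\text{ feasible in }\mathcal{M}\bigr\},
\]
because a strategy that at step $t$ sells its whole holding and rebuys a feasible set $H_t$ earns $\sum_t\langle\mathbf{1}_{H_t},p_{t+1}-p_t\rangle$ after telescoping, and the $H_t$'s may be optimized independently; and, taking $\cA$ to be memoryless (sell everything, then buy a feasible set $f(p)$ of the current price $p$), its reward on $p_1,\dots,p_n$ equals $\sum_{t=1}^{n-1} g(p_t,p_{t+1})$ with $g(x,y):=\langle\mathbf{1}_{f(x)},y-x\rangle$. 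The only properties I would use are: $w\ge0$; $g\le w$ pointwise (since $f(x)$ is one of the feasible sets in the $\max$); and $w$ is convex in each argument and satisfies $w(x,z)\le w(x,y)+w(y,z)$.

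Next I would exploit that, by exchangeability of the random permutation, the consecutive pair $(p_t,p_{t+1})$ in the random‑order instance is distributed as $(X_i,X_j)$ for a uniformly random \emph{ordered pair of distinct} indices $(i,j)$, with $X_i\sim D_i$ independently, whereas in the i.i.d.‑$\bar D$ instance it is distributed the same way but over \emph{all} ordered pairs, two independent copies being drawn when $i=j$. Writing $a_{ij}=\E[g(X_i,X_j)]$ and $b_{ij}=\E[w(X_i,X_j)]$ and summing over ordered distinct pairs versus all ordered pairs gives the exact identities
\[
\E_R[\cA] = \tfrac{n}{n-1}\E_I[\cA] - \tfrac{1}{n}\sum_i a_{ii},\qquad \E_R[\opt] = \tfrac{n}{n-1}\E_I[\opt] - \tfrac{1}{n}\sum_i b_{ii},
\]
where subscripts $R$ and $I$ denote the random‑order and i.i.d.‑$\bar D$ instances. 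Since $w\ge0$ makes the second correction nonnegative, this already yields $\E_R[\opt]\le\tfrac{n}{n-1}\E_I[\opt]$, which is the direction I need for the benchmark.

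The heart of the argument is controlling the ``diagonal'' correction $\tfrac1n\sum_i a_{ii}$ for the algorithm — which does not vanish, because two i.i.d.\ draws from a single $D_i$ are distinct realizations. From $g\le w$ I get $\sum_i a_{ii}\le\sum_i b_{ii}$, and then I would bound $\sum_i b_{ii}$ against $\E_R[\opt]$ with a $1/n$ factor to spare: the triangle inequality centered at $\bar\mu:=\E_{\bar D}[X]$ gives $b_{ii}=\E[w(X_i,X_i')]\le\E[w(X_i,\bar\mu)]+\E[w(\bar\mu,X_i)]$, while Jensen applied coordinatewise gives $\E[w(X_i,\bar\mu)]\le\tfrac1n\sum_j b_{ij}$ and $\E[w(\bar\mu,X_i)]\le\tfrac1n\sum_j b_{ji}$; summing over $i$ yields $\sum_i b_{ii}\le\tfrac2n\sum_{i,j}b_{ij}$, hence $\sum_i b_{ii}\le\tfrac{2}{n-2}\sum_{i\ne j}b_{ij}=\tfrac{2n}{n-2}\E_R[\opt]$. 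Substituting this, the identities above, and $\E_I[\cA]\ge\tfrac{1}{1+d}\E_I[\opt]$ into one another (the $\tfrac{n}{n-1}$ and $\tfrac{n-1}{n}$ factors cancel), I expect to obtain
\[
\E_R[\cA] \;\ge\; \frac{1}{1+d}\E_R[\opt] - \frac{d}{1+d}\cdot\frac1n\sum_i b_{ii} \;\ge\; \Bigl(\frac{1}{1+d}-\frac{2d}{(1+d)(n-2)}\Bigr)\E_R[\opt],
\]
and $\tfrac{2d}{(1+d)(n-2)}\le\tfrac2n$ exactly when $n\ge2(1+d)$, which is precisely the regime in which the claimed bound $\tfrac{1}{1+d}-\tfrac2n$ is positive (it is vacuous otherwise).

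The main obstacle I anticipate is this last estimate. A crude appeal to the $O(1/n)$ total‑variation distance between sampling two distributions with and without replacement bounds the diagonal correction only by an \emph{absolute} constant, which is worthless once $\E[\opt]$ is small; making the error scale with $\E[\opt]$ itself seems to genuinely require the interplay of $g\le w$ with the convexity and triangle inequality of $w$, so getting that estimate right — and tracking constants carefully enough to land on $\tfrac2n$ rather than merely $O(1/n)$ — is where the real work lies.
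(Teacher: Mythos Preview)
Your plan is correct and yields the stated bound, but the route differs from the paper's in how the ``random-order versus i.i.d.'' discrepancy for the algorithm is handled. The paper also runs \Cref{alg:Online} on the mixture $\bar D$, and your Lemma~5.1-style bound $\E_R[\opt]\le\tfrac{n}{n-1}\E_I[\opt]$ coincides with the paper's. For the algorithm, however, the paper does \emph{not} pass through the diagonal correction $\tfrac1n\sum_i a_{ii}$; instead it writes the per-step profit as $\E\bigl[\sum_{j\in H^1}(\overline\mu^{\sigma(1)}_j-X^1_j)\bigr]$ and controls the gap between this and $\E[\texttt{top}_{\mathcal M}(\mu-X^1)]$ by bounding $\E_{i}\bigl[\|\mu-\overline\mu^{i}\|_1\bigr]$. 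The key tool there is the density inequality (\Cref{lemma:matroid2}), which converts $\sum_j(\mu_j-\mu^i_j)^+$ into $d\cdot\texttt{top}_{\mathcal M}(\mu-\mu^i)$ and then, via Jensen, into $d\cdot\E[\texttt{top}_{\mathcal M}(\mu-X^1)]$; this yields the multiplicative loss factor $1-\tfrac{2d}{n-1}$ on the i.i.d.\ \emph{algorithm}'s profit. Your argument is more black-box: you only invoke the $\tfrac{1}{1+d}$ ratio once, and the matroid structure enters nowhere else --- the diagonal bound $\sum_i b_{ii}\le\tfrac{2}{n-2}\sum_{i\ne j}b_{ij}$ uses only convexity and the triangle inequality of $w$, so it would transfer verbatim to any feasibility system for which the i.i.d.\ problem admits a memoryless $\alpha$-competitive algorithm with $g\le w$. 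The price you pay is that your loss term $\tfrac{2d}{(1+d)(n-2)}$ requires $n\ge 2(1+d)$ to be at most $\tfrac{2}{n}$, whereas the paper's $\tfrac{2d+1}{n(1+d)}\le\tfrac2n$ holds for all $n$; since the claimed bound is nonpositive when $n<2(1+d)$ this is harmless for the theorem as stated.
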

This result implies that the competitive ratio in the non-i.i.d.\ random-order setting approaches the competitive ratio of the i.i.d.\ case as the time horizon $n$ grows unbounded.

\paragraph{Insights and techniques.}
We make the following crucial observations which help us get a much simpler analysis of \citet{CCD+23trading}'s result and provide a convenient way to analyze all our generalizations. 
\begin{itemize}
    \item The decision of opting to not sell a previously held stock at a given time step equates to selling the stock and repurchasing it at the same price and at the same time step, leaving the net profit unchanged. This simple yet crucial observation enables us to express the net profit of the algorithm over $n$ time steps as the sum of $(n-1)$ i.i.d.\ random variables. This reduces our analysis to relating the expected \emph{per-time-step} profit of the online and offline algorithms.
    \item In order to establish competitive guarantees, it suffices to consider instances where the expected value of each stock's price is zero, further simplifying our analysis.
    \item Consider two random experiments conducted on a set of size $n$. In the first experiment, two objects are sampled uniformly at random without replacement, while in the second, two objects are sampled uniformly at random with replacement. As $n$ approaches $\infty$, the distributions of the outcomes in both experiments become increasingly similar. 
    This key idea helps us to achieve a reduction from the random order non-i.i.d.\ setting to the i.i.d.\ setting by pretending as if the prices are drawn independently from the mixture distribution of the given distributions.
\end{itemize}

\paragraph{Organization of the paper}
In~\cref{sec:prelims}, we introduce the necessary preliminaries, notation, and relevant details for the problem and its analysis. \Cref{sec:hardness_result,sec:algorithmic} focus on the setting where stock prices are independently and identically distributed (i.i.d.) over time. Specifically, \cref{sec:hardness_result} establishes hardness guarantees, while \cref{sec:algorithmic} provides algorithmic guarantees by deriving the competitive ratio.

In~\cref{sec:random_order}, we discuss the setting where the prices are drawn independently but not necessarily from identical distributions and are presented in a uniformly random order. Here, we show a reduction from the non-i.i.d.\ random order setting to the i.i.d.\ setting.

\section{Preliminaries}
\label{sec:prelims}
For any positive integer $n$, let $[n]$ denote the set $\{1,2,\dots,n\}$.

\subsection{Matroids}

\begin{definition}
\label{def_matroids}
    A \textbf{matroid}\footnote{For a comprehensive discussion on matroids, we refer the reader to~\citep{oxley}.} \( \mathcal{M} \) is a pair \( (E, \mathcal{I}) \), where \( E \) is a finite set, called the \textbf{ground set}, and \( \mathcal{I} \) is a family of subsets of \( E \), called the set of \textbf{feasible sets}, with the following properties:
\begin{itemize}
    \item The empty set is feasible, i.e., \( \emptyset \in \mathcal{I} \).
    \item Every subset of a feasible set is feasible, i.e., $\mathcal{I}$ is downward-closed.
    \item If \( I \) and \( J \) are feasible sets and \( |I| > |J| \), there exists an element \( e \in I \setminus J \) such that \( J \cup \{e\} \) is feasible. 
\end{itemize}
We say a matroid $\mathcal{M}$ is
\begin{itemize}
    \item \textbf{loopless} if every singleton subset of $E$ is feasible, i.e., for every \( e \in E \), \( \{e\} \in \mathcal{I} \) (an element $e$ is called a \textbf{loop} if $\{e\} \notin \mathcal{I}$).
    \item the $r$-\textbf{uniform matroid} over $E$ if for every set $A \subseteq E$, $A$ is feasible if and only if $|A| \leq r$. 
\end{itemize}
\end{definition}

\begin{definition}
\label{def_rank_density}
    The \textbf{rank} of a matroid is the size of the largest feasible set in the matroid.
The \textbf{rank function} of a matroid \( \mathcal{M} \) is a function \( \texttt{rk} : 2^E \to \mathbb{N} \) that maps each subset \( S \subseteq E \) to the maximum size of a feasible subset of \( S \). 
An element $e$ is said to be \textbf{spanned} by a set \( S \subseteq E \) in a matroid \( \mathcal{M} \) if the rank of \( S \) is equal to the rank of \( S \cup \{e\} \). 

The \textbf{density} \( d \) of a loopless matroid \( \mathcal{M} \) is defined as the maximum value of the ratio \( \nicefrac{|X|}{\texttt{rk}(X)} \) over all non-empty subsets \( X \) of the ground set \( E \) \citep{soto2013matroid}. Formally,
\[d = \max_{\emptyset \neq X \subseteq E} \frac{|X|}{\texttt{rk}(X)}\text{.}\]

\end{definition}

\begin{Observation}
\label{observation1}
    The density of the $r$-uniform matroid on a set of $n$ elements (where $r\leq n$) is $\nicefrac{n}{r}$.
\end{Observation}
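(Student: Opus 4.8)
The plan is to directly compute the rank function of the $r$-uniform matroid and substitute it into the definition of density. First I would observe that in the $r$-uniform matroid on a ground set $E$ with $|E| = n$, every subset $X \subseteq E$ satisfies $\texttt{rk}(X) = \min\{|X|, r\}$: any subset of $X$ of size at most $\min\{|X|, r\}$ is feasible by the defining property of the $r$-uniform matroid, while no feasible subset of $X$ can be larger, since feasible sets have size at most $r$ and subsets of $X$ have size at most $|X|$. In particular, with $r \ge 1$ every singleton is feasible, so the matroid is loopless and its density is well defined.

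Next I would split the maximization $d = \max_{\emptyset \neq X \subseteq E} |X| / \texttt{rk}(X)$ into two cases according to the size of $X$. If $1 \le |X| \le r$, then $\texttt{rk}(X) = |X|$, so the ratio equals $1$. If $|X| > r$, then $\texttt{rk}(X) = r$, so the ratio equals $|X|/r$, a quantity strictly increasing in $|X|$ and therefore maximized by the choice $X = E$, giving the value $n/r$. Since $1 \le r \le n$, we have $n/r \ge 1$, so the larger of the two case-values is $n/r$, attained at $X = E$. Here $E$ is a legitimate competitor in the maximum because $n \ge r \ge 1$ forces $E \neq \emptyset$.

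There is essentially no real obstacle in this argument; the only points requiring a moment's care are verifying that $\texttt{rk}(X) = \min\{|X|, r\}$ from the definition, that $E$ is a nonempty competitor in the maximum, and that the value $n/r$ dominates the competing value $1$ — all immediate from $1 \le r \le n$.
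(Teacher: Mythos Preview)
Your proposal is correct. The paper states this observation without proof, treating it as immediate from the definitions; your argument via the explicit rank function $\texttt{rk}(X)=\min\{|X|,r\}$ and the case split on $|X|$ is exactly the routine verification one would supply.
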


\paragraph{}  
The \textbf{maximum weight feasible set problem} over a matroid \( \mathcal{M} = (E, \mathcal{I}) \) is defined as follows: Given a weight function \( w : E \to \mathbb{R}_+ \), find a feasible set \( I \in \mathcal{I} \) that maximizes the total weight. 
A simple greedy idea to solve the maximum weight feasible set problem, known as Kruskal's algorithm, is given by \Cref{alg:Kruskal}. The following is a folklore result.

\begin{fact}
    The \emph{Kruskal's algorithm} stated in \Cref{alg:Kruskal} returns the maximum weight feasible set over a matroid constraint. 
\end{fact}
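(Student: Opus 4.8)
The plan is to run the classical exchange argument, comparing the greedy output element-by-element against an arbitrary feasible set sorted by weight. Write $G = \{g_1, \dots, g_r\}$ for the set returned by \Cref{alg:Kruskal}, indexed in the order in which the algorithm selects them (which, after fixing an arbitrary total order on $E$ that refines the order by nonincreasing weight, coincides with the algorithm's processing order restricted to $G$), and let $O = \{o_1, \dots, o_s\}$ be any feasible set with $w(o_1) \geq w(o_2) \geq \dots \geq w(o_s)$. It suffices to show $w(G) \geq w(O)$, since $O$ is arbitrary.

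First I would record that $G$ is a basis of $\mathcal{M}$, i.e., a maximal feasible set: if some $e \notin G$ had $G \cup \{e\} \in \mathcal{I}$, then by downward-closure $G' \cup \{e\} \in \mathcal{I}$ for the prefix $G' \subseteq G$ present when Kruskal examined $e$, so the algorithm would have added $e$ — a contradiction. Hence $r = \texttt{rk}(E) \geq |O| = s$, so the indices $o_1, \dots, o_s$ are matched by $g_1, \dots, g_s$.

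The heart of the argument is the claim that $w(g_j) \geq w(o_j)$ for every $j \in [s]$. Suppose not, and let $j$ be the least violating index, so $w(o_j) > w(g_j)$ while $w(g_i) \geq w(o_i)$ for $i < j$. Apply the matroid exchange axiom to the feasible sets $\{o_1, \dots, o_j\}$ and $\{g_1, \dots, g_{j-1}\}$ (the former is larger): there is an index $t \leq j$ with $o_t \notin \{g_1, \dots, g_{j-1}\}$ and $\{g_1, \dots, g_{j-1}, o_t\} \in \mathcal{I}$. Since $w(o_t) \geq w(o_j) > w(g_j)$ and Kruskal processes elements in the chosen nonincreasing order, $o_t$ is examined strictly before $g_j$; at that moment every already-selected element has weight exceeding $w(g_j)$, hence lies in $\{g_1, \dots, g_{j-1}\}$, so the current greedy set is a subset of $\{g_1, \dots, g_{j-1}\}$. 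By downward-closure, adding $o_t$ keeps the set feasible, so Kruskal would take $o_t$, forcing $o_t \in \{g_1, \dots, g_{j-1}\}$ (as $w(o_t) > w(g_j)$ rules out $o_t$ being $g_j$ or later) — contradicting $o_t \notin \{g_1, \dots, g_{j-1}\}$.

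Granting the claim, nonnegativity of $w$ gives $w(G) = \sum_{j=1}^{r} w(g_j) \geq \sum_{j=1}^{s} w(g_j) \geq \sum_{j=1}^{s} w(o_j) = w(O)$, as required. The only genuinely delicate point is the bookkeeping around ties in the weights: one must fix a single total order refining the weight order and use it consistently for the algorithm's processing and for the indexing of $G$, so that "examined strictly before" is unambiguous; everything else is a direct application of downward-closure and the exchange axiom.
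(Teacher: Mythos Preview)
The paper does not prove this fact at all; it simply declares it folklore. Your exchange argument is the standard textbook proof and is correct for nonnegative weight functions, which is how the paper's accompanying paragraph formally states the maximum-weight-feasible-set problem ($w:E\to\mathbb{R}_+$).

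There is one small mismatch worth closing. \Cref{alg:Kruskal} as written accepts $w:E\to\mathbb{R}$ and breaks on the first negative weight, and the paper later invokes it with weight functions such as $\mu-X^t$ that can be negative. Your argument that $G$ is a basis relies on every element being examined, which fails once the break can fire; likewise your final inequality $\sum_{j\le r}w(g_j)\ge\sum_{j\le s}w(g_j)$ uses nonnegativity. The fix is one line: observe that any maximum-weight feasible set contains no negative-weight element (dropping it preserves feasibility by downward closure and strictly increases weight), so both the optimum and the algorithm's output live in the restriction of $\mathcal{M}$ to $\{e:w(e)\ge 0\}$, where your nonnegative-weight proof applies verbatim. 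With that remark your argument covers the algorithm exactly as stated and used.
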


\begin{algorithm}[t]
\DontPrintSemicolon
    \KwIn{A matroid $\mathcal{M} = (E, \mathcal{I})$, and a weight function $w : E \to \mathbb{R}$.}
    \KwOut{A set $I \in \mathcal{I}$ that maximizes $w(I) = \sum_{e \in I} w(e)$.}
    Sort the elements of $E$ in non-increasing order of weights and call them $e_1, \ldots, e_n$.\;
    \( H \gets \emptyset \).\;
    \For{$i = 1$ \KwTo $n$}{
        \If{$w(e_i) < 0$} {
            \Comment{This implies $w(e_{i+1}), \ldots, w(e_{n})$ are all negative.}
            Break\;
        }
        \If{$H \cup \{e_i\} \in \mathcal{I}$}{
            \Comment{Equivalently, if $e_i$ is not spanned by $\{e_1, \ldots, e_{i-1}\}$.}
            $H \gets H \cup \{e_i\}$\;
        }
    }
    \textbf{Return} $H$\;
    \caption{Kruskal's Algorithm for Maximum Weight Feasible Set}
    \label{alg:Kruskal}
\end{algorithm}

\subsection{Trading Prophets}
\paragraph{\MatroidSetting{}.} 
An instance of the \MatroidSetting{} is specified by a matroid $\mathcal{M}$ on the ground set $\{1, \ldots, k\}$ for some $k \in \mathbb{N}$, the joint distribution of $k$ real-valued random variables specified by its cumulative distribution function\footnote{The \textit{cumulative distribution function} of a vector $(X_1,\ldots,X_k)$ of random variables is the function $F$ given by $F(x_1, \ldots, x_k) = \Pr[X_1\leq x_1\wedge \cdots \wedge X_k\leq x_k]$.} (CDF) $F:\mathbb{R}^k\longrightarrow[0,1]$, and a positive integer $n$, the length of the time horizon. The random variable $X^t$, taking values from $\mathbb{R}^k$, denotes the vector of stock prices at time $t\in[n]$. Stock prices for different timestamps are independent, but the components $\{X^t_s\mid s\in[k]\}$ of each $X^t$ are not necessarily independent. If the components of each $X^t$ are also independent, we call such an instance \emph{independently distributed}. Formally,
\begin{definition}
    \label{independently_distributed}
    An instance of \MatroidSetting{} is called an \textbf{independently distributed} instance if there exist CDFs $F_1, \ldots, F_k : \mathbb{R}\longrightarrow[0,1]$ such that the CDF $F$ of the instance satisfies $F(x_1, \ldots, x_k) = \prod_{s=1}^k F_s (x_s)$. 
\end{definition}
At each step $t \in [n]$, the vector $X^t$ of prices of all stocks is announced. The random variable, $X^t_s$, denotes the stock price $s$ at time step $t$. 
In response, the algorithm should decide the following at each step:
\begin{itemize}
    \item which subset of previously held stocks should be \emph{sold}, and 
    \item which subset of stocks not currently held should be \emph{bought},
\end{itemize}
subject to the constraint that the set of stocks held by the algorithm is a feasible set of $\mathcal{M}$ at all times. The algorithm starts with an empty set of stocks and at the end of the time horizon (i.e., after $n$ time steps), the algorithm sells all the held stocks. Thus, the \emph{profit} earned by the algorithm at any time $t$ equals the amount received due to the sale of stocks minus the amount paid to buy stocks, and the \emph{overall profit} is the sum of profits across all time steps $t \in [n]$. Our goal is to maximize the expected overall profit. Since an algorithm can never buy a stock which is a loop of the matroid, we assume that $\mathcal{M}$ is a loopless matroid.

To simplify our presentation, we work with an equivalent setup where, in every time step, the algorithm sells all currently held stocks and then buys a subset of stocks $S$ such that $S$ is a feasible set in $\mathcal{M}$ -- not selling a stock at time $t$ generates the same profit as selling the stock at time $t$ and then buying the same stock at the same price at time $t$. 

\paragraph{\Problem{}.} This is the restriction of \MatroidSetting{} where $\mathcal{M}$ is the $\ell$-uniform matroid over a ground set of size $k$.

\paragraph{\ProblemTwo{}.} This is the \Problem{} extended to the resource augmentation setting. In this problem, just as in the \Problem{}, we are given the CDF of a joint distribution of $k$ stocks, integers $\ell, \ell' \in \mathbb{N}$, and a positive integer $n$. An online algorithm for the \ProblemTwo{} is allowed to hold at most $\ell$ stocks and its goal, as before, is to maximize its profit across $n$ time steps. The online algorithm competes against an offline algorithm that is allowed to hold at most $\ell' \leq \ell$ stocks. Thus, when $\ell=\ell'$, the \ProblemTwo{} becomes the \Problem{}.

\paragraph{}
An \emph{online algorithm} must decide which subset of stocks to buy and sell at each time step based only on the history and the current prices of stocks. On the other hand, an \emph{offline algorithm} has access to all the realizations of stock prices. In general, an online algorithm will fail to obtain as much profit as the offline optimal algorithm due to its inability to see the future trend of stock prices.
We use the framework of competitive analysis~\citep{SleatorT85} to quantify the performance of an online algorithm relative to that of the optimal offline algorithm. The \textit{competitive ratio} of an online algorithm is defined as follows.

\begin{definition}
The \textbf{competitive ratio} of an online (maximization) algorithm is the infimum over all instances of the ratio of the expected profit of the online algorithm to that of an optimal offline solution.
An online algorithm is said to be $\alpha$-competitive if its competitive ratio is at least $\alpha$.
\end{definition}

\paragraph{Notation.}
We frequently refer to the following notation in our analysis:
\begin{itemize}
\item $\texttt{top}_\mathcal{M}(w)$ denotes the weight of the maximum weight feasible set of the matroid $\mathcal{M}=(E,\mathcal{I})$ with respect to the weight function $w:E\longrightarrow\mathbb{R}_+$, and $\texttt{top}_\ell(w)$ denotes $\texttt{top}_\mathcal{M}(w)$ where $\mathcal{M}$ is the $\ell$-uniform matroid. Therefore, $\texttt{top}_\ell(w)$ is the sum of the $\ell$ largest numbers out of $w_1,\ldots,w_k$ if at least $\ell$ of them are positive, and the sum of all positive numbers in $w_1,\ldots,w_k$ if fewer than $\ell$ of them are positive.
\item We assume that the probability distribution with CDF $F$ over $\mathbb{R}^k$ has a well-defined expectation.
$\mu = (\mu_1, \ldots, \mu_k)$ denotes this expectation.
\item The ordered pair $(X', X)$ denotes two samples drawn independently from the distribution having CDF $F$. $X'$ and $X$ are only used in the analysis.
\item The expression $x^+$ denotes $\max(0,x)$.
\end{itemize}

\begin{lemma}[Optimal offline profit]\label{lem_perday_opt}
    For every instance of the \MatroidSetting{}, the expected overall profit of the optimal offline algorithm is $$(n-1)\mathbb{E}\left[\texttt{\textup{top}}_{\mathcal{M}} \left(X' - X \right)\right].$$
\end{lemma}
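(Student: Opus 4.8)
The plan is to characterize the behavior of the optimal offline algorithm explicitly and then compute its expected profit directly. First I would use the simplifying reformulation already established in the preliminaries: at every time step the algorithm sells everything it holds and then buys a feasible set. So the offline algorithm's strategy is specified by a choice of feasible set $S^t \in \mathcal{I}$ to hold during the interval between the purchase at time $t$ and the sale at time $t+1$, for $t = 1, \ldots, n-1$ (holding anything at time $n$ is pointless since it is sold immediately, and buying at time $n$ to sell at time $n$ is a wash; likewise the "buy at time $1$" is really a choice made knowing $X^1$). The profit contributed by holding $S^t$ over the step from $t$ to $t+1$ is $\sum_{s \in S^t}(X^{t+1}_s - X^t_s) = w^t(S^t)$ where $w^t := X^{t+1} - X^t$. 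Crucially, because the offline algorithm sees all realizations and the choices $S^1, \ldots, S^{n-1}$ are completely unconstrained relative to one another (the "sell all, rebuy" reformulation removes any coupling between consecutive steps), the offline optimum decomposes: it picks each $S^t$ independently to maximize $w^t(S^t)$ over $S^t \in \mathcal{I}$. By definition of $\texttt{top}_\mathcal{M}$ (and the fact that Kruskal's algorithm finds the max-weight feasible set, restricting to positive-weight elements), the best achievable is $\texttt{top}_\mathcal{M}(w^t) = \texttt{top}_\mathcal{M}(X^{t+1} - X^t)$.

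Next I would take expectations and sum: the expected overall offline profit is $\sum_{t=1}^{n-1} \mathbb{E}\left[\texttt{top}_\mathcal{M}(X^{t+1} - X^t)\right]$. Since the price vectors across time steps are i.i.d.\ (each distributed according to $F$), each term equals $\mathbb{E}\left[\texttt{top}_\mathcal{M}(X' - X)\right]$ with $(X', X)$ two independent samples from $F$, using the notation fixed in the preliminaries. This gives exactly $(n-1)\,\mathbb{E}\left[\texttt{top}_\mathcal{M}(X' - X)\right]$.

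The main obstacle — and the step that needs care rather than calculation — is justifying the decomposition rigorously: I must argue that the optimal offline algorithm loses nothing by being "memoryless" in the reformulated model, i.e.\ that there is genuinely no constraint linking the feasible set held in one interval to the feasible set held in the next once we allow free selling and rebuying. One has to be careful that the reformulation "not selling $=$ selling and rebuying at the same price" is profit-preserving (already noted in the preliminaries) and that it indeed makes every time step's holding choice independent. I would also handle the boundary conventions explicitly: the algorithm starts empty, so there is nothing to sell at time $1$; and it sells everything at time $n$, so no purchase at time $n$ can ever be profitable, which is why the sum runs over the $n-1$ intervals rather than $n$ steps. A minor additional point: $\texttt{top}_\mathcal{M}$ is defined for nonnegative weight functions, but $w^t$ can have negative entries — this is resolved exactly as in Kruskal's algorithm by noting the optimal feasible set never includes a negative-weight element (the matroid is loopless and $\mathcal{I}$ is downward-closed, so dropping a negative element stays feasible and weakly improves the weight), so $\texttt{top}_\mathcal{M}(w^t)$ is well-defined as the max over feasible subsets of the positive-weight elements.
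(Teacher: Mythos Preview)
Your proposal is correct and follows essentially the same approach as the paper: use the sell-all-then-rebuy reformulation to decouple consecutive time steps, observe that the offline optimum at each step is $\texttt{top}_{\mathcal{M}}(X^{t+1}-X^t)$, and use the i.i.d.\ assumption to identify each term with $\mathbb{E}[\texttt{top}_{\mathcal{M}}(X'-X)]$ before summing over the $n-1$ intervals. Your extra attention to boundary conventions and to the well-definedness of $\texttt{top}_{\mathcal{M}}$ on signed weights is more explicit than the paper's proof but does not change the argument.
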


\begin{proof}
 Recall the equivalent setup where at time step $t$, the algorithm sells all currently held stocks and then buys a set of stocks. 
 Suppose we buy a subset $S$ of stocks at time $t$ and sell it at time $t+1$, then our net profit will be $\sum_{s \in S} (X^{t+1}_s-X^t_s) $. Therefore, at every time $t$, the optimal algorithm buys the maximum weight feasible subset of stocks with respect to the weight function $X^{t+1}-X^t$. 
 Thus, the resulting profit of the algorithm at time $t$ is $\texttt{top}_{\mathcal{M}} \left( X^{t+1}-X^t\right)$. 
Since the random variables $X^{t+1}$ and $X^t$ are two independent samples drawn from the distribution having CDF $F$, the distribution of $(X^{t+1},X^t)$ is same as $(X',X)$. Therefore, the expected profit is $\mathbb{E}\left[\texttt{top}_{\mathcal{M}} \left( X'-X \right)\right]$. Summing over time steps $t\in[n-1]$, we get the result.
\end{proof}

\begin{lemma}[Optimal online profit]\label{lem_perday_alg}
    For every instance of the \MatroidSetting{}, the expected overall profit of the optimal online algorithm is $$(n-1) \, \mathbb{E}\left[\texttt{\textup{top}}_{\mathcal{M}} \left( \mu - X \right) \right].$$
    Furthermore, this profit is attained by \Cref{alg:Online}.
\end{lemma}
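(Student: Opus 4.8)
The plan is to reduce the claim about the \emph{overall} profit of an online algorithm to a claim about its expected \emph{per-step} profit, mirroring the proof of \Cref{lem_perday_opt}, and then to exploit the independence of prices across time to replace the unseen future price $X^{t+1}$ by its mean $\mu$.

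First I would fix an arbitrary online algorithm and work in the equivalent setup where, at each step $t$, it sells everything it currently holds and then buys a feasible set $S^t\in\mathcal{I}$, with the conventions $S^0=\emptyset$ (it starts empty) and $S^n=\emptyset$ (whatever it buys at time $n$ is immediately liquidated at the same price $X^n$, so it contributes nothing). The same telescoping computation as in \Cref{lem_perday_opt} then shows that the overall profit equals $\sum_{t=1}^{n-1}\sum_{s\in S^t}\left(X^{t+1}_s-X^t_s\right)$, the boundary terms vanishing because $S^0$ and $S^n$ are empty. The only difference from the offline setting is that $S^t$ must be a (possibly randomized) function of $X^1,\dots,X^t$ alone.

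Next I would take expectations, use linearity, and condition on the history up to time $t$ (which, together with the algorithm's internal coins, determines $S^t$). Since $X^{t+1}$ is independent of that history and $\mathbb{E}[X^{t+1}]=\mu$, the expected contribution of step $t$ equals $\mathbb{E}\left[\sum_{s\in S^t}\left(\mu_s-X^t_s\right)\right]$. As $S^t\in\mathcal{I}$ always, we have $\sum_{s\in S^t}(\mu_s-X^t_s)\le\texttt{top}_{\mathcal{M}}(\mu-X^t)$ for every realization, and $X^t$ is distributed as $X$; hence every step contributes at most $\mathbb{E}[\texttt{top}_{\mathcal{M}}(\mu-X)]$, and summing over $t\in[n-1]$ gives that the expected overall profit of \emph{every} online algorithm is at most $(n-1)\,\mathbb{E}[\texttt{top}_{\mathcal{M}}(\mu-X)]$.

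Finally I would show tightness by exhibiting the myopic algorithm that, at every step $t$, after selling all held stocks, buys the maximum-weight feasible set with respect to the weight function $\mu-X^t$ (computed by Kruskal's algorithm, \Cref{alg:Kruskal}, using only the current prices, hence a legitimate online algorithm); this is \Cref{alg:Online}. For it, every inequality above holds with equality, so its expected overall profit is exactly $(n-1)\,\mathbb{E}[\texttt{top}_{\mathcal{M}}(\mu-X)]$, which therefore is also the optimal online profit. No step is genuinely hard; the one place to be careful is the bookkeeping of the telescoping sum together with the boundary conditions, and the (routine but essential) point that, because both the constraint $S^t\in\mathcal{I}$ and the payoff $\sum_{s\in S^t}(X^{t+1}_s-X^t_s)$ decouple across time, an online algorithm gains nothing by coordinating its choices across steps, so optimizing each step myopically is optimal.
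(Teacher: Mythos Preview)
Your proposal is correct and follows essentially the same approach as the paper: both work in the sell-everything-then-buy equivalent setup, use independence of $X^{t+1}$ from the history to replace it by its mean $\mu$, observe that the optimal per-step choice is the maximum-weight feasible set with respect to $\mu-X^t$, and then sum over $t\in[n-1]$. Your presentation is slightly more explicit in separating the upper bound (for an arbitrary online algorithm) from the tightness argument (for \Cref{alg:Online}), whereas the paper collapses these into a single optimality-of-the-myopic-choice argument, but this is a stylistic rather than substantive difference.
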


\begin{proof}
Consider the execution of an arbitrary online algorithm at timestamp $t$, after selling all the previously held stocks. If the algorithm buys stock $s$ (for the price $X^t_s$), the expected price at which it will get sold at time $t+1$ is $\mathbb{E}[X^{t+1}_s]=\mu_s$, because $X^{t}_s$ and $X^{t+1}_s$ are independent. Thus, conditional on the current price $X^t_s$, the expected profit gained from stock $s$ is $\mu_s-X^t_s$ if stock $s$ is bought (and zero otherwise). Thus, the best strategy for the online algorithm is to buy a maximum weight feasible subset of stocks with respect to the weight function $(\mu - X^t)$. This strategy is stated formally as \Cref{alg:Online}. 
Therefore, conditioned on the current prices of the stocks $X^t$, the expected profit of the algorithm resulting from buying stocks in the current time step and selling them in the next step is given by $\texttt{top}_{\mathcal{M}} \left(\mu-X^t\right)$. Since $X^t$ is identically distributed as $X$, the unconditional expected profit of the algorithm resulting from buying stocks in any time step and selling them in the next step is given by $\mathbb{E}[\texttt{top}_\mathcal{M} \left(\mu-X\right)]$. Summing over time steps $t\in[n-1]$, we get the result.
\end{proof}

\begin{algorithm}[t]
\DontPrintSemicolon
  \KwIn{Matroid $\mathcal{M} = ([k], \mathcal{I})$, joint CDF $F:\mathbb{R}^k\longrightarrow[0,1]$, and $n$.}
  Compute $\mu$, the expectation of the distribution on $\mathbb{R}^k$ whose joint CDF is $F$.\;
  \For{$t = 1$ \KwTo $n$}{
        Read the stock prices $X^t=(X^t_1,X^t_2,\dots,X^t_k)$ (drawn from the joint distribution whose CDF is $F$).\;
        Sell all currently held stocks at the prices $X^t$.\;
        $H \gets$ maximum weight feasible set of $\mathcal{M}$ with respect to the weight function $\mu - X^t$ (compute $H$ using \Cref{alg:Kruskal}).\;
        Buy all stocks in $H$.\;
  }
   \caption{Optimal online algorithm for \MatroidSetting{}.}
   \label{alg:Online}
\end{algorithm}

As a consequence of Lemma~\ref{lem_perday_opt} and Lemma~\ref{lem_perday_alg}, the task of proving bounds on competitive ratio reduces to proving bounds on the ratio of $\mathbb{E}\left[\texttt{top}_\mathcal{M} \left(\mu-X \right)\right]$ to $\mathbb{E}\left[\texttt{top}_{\mathcal{M}} \left(X' - X \right)\right] $. Here, the quantities $\mathbb{E}\left[\texttt{top}_\mathcal{M} \left(\mu-X \right)\right]$ and $\mathbb{E}\left[\texttt{top}_{\mathcal{M}} \left(X' - X \right)\right] $ are the per-time-step expected profit of the optimal online algorithm and the optimal offline algorithm, respectively. 
Similarly, for the \ProblemTwo{}, we have,

\begin{corollary}[to Lemma~\ref{lem_perday_opt}]
\label{lem_k_l_opt}
    For every instance of \ProblemTwo{}, the expected overall profit of the optimal offline algorithm is $$(n-1)\mathbb{E}\left[\texttt{\textup{top}}_{\ell'} \left(X' - X \right)\right].$$
\end{corollary}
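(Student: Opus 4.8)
The plan is to derive this as an immediate specialization of Lemma~\ref{lem_perday_opt}. The key observation is that an offline algorithm for the \ProblemTwo{} is, by definition, nothing other than an offline algorithm for an instance of the \MatroidSetting{} in which the matroid $\mathcal{M}$ is taken to be the $\ell'$-uniform matroid on the ground set $[k]$: in both models the input is the joint CDF $F$ of the $k$ stock prices together with the horizon $n$, in both models the algorithm may buy and sell at each step, and in both models the profit is the total amount collected from sales minus the total amount spent on purchases. The only constraint on the \ProblemTwo{} offline algorithm is that it holds at most $\ell'$ stocks at every time step, and the feasible sets of the $\ell'$-uniform matroid on $[k]$ are exactly the subsets of $[k]$ of size at most $\ell'$. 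Hence the two notions of ``optimal offline profit'' coincide.

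Given this correspondence, I would apply Lemma~\ref{lem_perday_opt} with $\mathcal{M}$ equal to the $\ell'$-uniform matroid. That lemma gives the optimal offline profit as $(n-1)\,\mathbb{E}\bigl[\texttt{top}_{\mathcal{M}}(X'-X)\bigr]$. By the notation fixed in \Cref{sec:prelims}, $\texttt{top}_{\mathcal{M}}$ for the $\ell'$-uniform matroid is precisely $\texttt{top}_{\ell'}$, i.e., the sum of the $\ell'$ largest coordinates of the weight vector (or the sum of all positive coordinates, if fewer than $\ell'$ of them are positive), applied here to the weight vector $X'-X$. Substituting, the optimal offline profit is $(n-1)\,\mathbb{E}\bigl[\texttt{top}_{\ell'}(X'-X)\bigr]$, as claimed.

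There is essentially no technical obstacle here; the entire content is the identification of the \ProblemTwo{} offline model with the $\ell'$-uniform-matroid instance of the \MatroidSetting{}, which is a matter of unwinding the definitions. The one point worth stating explicitly in the write-up is that this identification is legitimate even though the \emph{online} algorithm in the \ProblemTwo{} is allowed the larger bound $\ell$ — the online side plays no role in Lemma~\ref{lem_perday_opt}, whose statement and proof concern only the offline optimum, so the value of $\ell$ is irrelevant to the present corollary.
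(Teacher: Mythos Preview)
Your proposal is correct and follows exactly the approach implicit in the paper: the corollary is stated without proof as a direct specialization of Lemma~\ref{lem_perday_opt} to the $\ell'$-uniform matroid, using the notational convention that $\texttt{top}_{\ell'}=\texttt{top}_{\mathcal{M}}$ for that matroid. Your additional remark that the online parameter $\ell$ is irrelevant to the offline optimum is accurate and worth including.
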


\begin{corollary}[to Lemma~\ref{lem_perday_alg}]
\label{lem_k_l_alg}
    For every instance of \ProblemTwo{}, the expected overall profit of the optimal online algorithm is $$(n-1) \, \mathbb{E}\left[\texttt{\textup{top}}_{\ell} \left( \mu - X \right) \right].$$
\end{corollary}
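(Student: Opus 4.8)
The plan is to obtain this as an immediate specialization of \Cref{lem_perday_alg}. The only thing to observe is that the \ProblemTwo{} constrains the \emph{online} algorithm to hold at most $\ell$ stocks at any time, and the family of subsets of $[k]$ of size at most $\ell$ is precisely the family of feasible sets of the $\ell$-uniform matroid on the ground set $[k]$ (Definition~\ref{def_matroids}). The parameter $\ell'$ restricts only the offline benchmark, so it does not enter the computation of the optimal online profit at all.

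Concretely, I would first note that an online algorithm for a \ProblemTwo{} instance with joint CDF $F$, parameters $\ell,\ell'$, and horizon $n$ is exactly an online algorithm for the \MatroidSetting{} instance with the same $F$ and $n$ in which $\mathcal{M}$ is the $\ell$-uniform matroid over $[k]$. Applying \Cref{lem_perday_alg} to this matroid instance yields that the optimal online profit equals $(n-1)\,\mathbb{E}\!\left[\texttt{top}_{\mathcal{M}}(\mu - X)\right]$, attained by \Cref{alg:Online}. By the notational convention that $\texttt{top}_\ell(w) := \texttt{top}_{\mathcal{M}}(w)$ when $\mathcal{M}$ is the $\ell$-uniform matroid, this is exactly $(n-1)\,\mathbb{E}\!\left[\texttt{top}_\ell(\mu - X)\right]$, as claimed. (Equivalently, at each step the optimal online strategy sells everything and rebuys the $\ell$ stocks with the largest positive values of $\mu_s - X^t_s$, or all stocks with positive such value if fewer than $\ell$ are positive.)

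Since this is a direct corollary, there is no genuine obstacle; the only points to verify carefully are that the two descriptions of the allowed portfolios -- ``at most $\ell$ held stocks'' and ``feasible in the $\ell$-uniform matroid'' -- coincide, which is immediate, and that the proof of \Cref{lem_perday_alg} nowhere invokes the offline parameter, which is clear since it reasons only about the online algorithm's one-step expected gain $\mu_s - X^t_s$ per bought stock. One might additionally remark that this corollary, combined with Corollary~\ref{lem_k_l_opt}, reduces the competitive-ratio analysis of the \ProblemTwo{} to bounding the ratio $\mathbb{E}\!\left[\texttt{top}_\ell(\mu - X)\right] / \mathbb{E}\!\left[\texttt{top}_{\ell'}(X' - X)\right]$.
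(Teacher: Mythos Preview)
Your proposal is correct and matches the paper's approach exactly: the corollary is stated without a separate proof because it is the immediate specialization of \Cref{lem_perday_alg} to the $\ell$-uniform matroid, together with the notational identification $\texttt{top}_\ell=\texttt{top}_{\mathcal{M}}$. Your observation that $\ell'$ constrains only the offline benchmark and hence is irrelevant here is precisely the point.
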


\section{Hardness Results}
\label{sec:hardness_result}
\subsection{Arbitrary matroids, correlated distributions.}
In this section, we prove an upper bound on the competitive ratio of the \MatroidSetting{}.
\begin{theorem}[Restated Theorem~\ref{hardness_res_matroid}]
\label{hardness_proof}
    Let $\mathcal{M}$ be an arbitrary matroid and $d$ be the density of $\mathcal{M}$. Then there does not exist an algorithm for \MatroidSetting{} whose competitive ratio is greater than $\frac{1}{1+d}$.
\end{theorem}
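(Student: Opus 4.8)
By Lemma~\ref{lem_perday_opt} and Lemma~\ref{lem_perday_alg}, it suffices to exhibit, for every $\varepsilon>0$, an instance (a joint distribution with CDF $F$) for which
\[
\mathbb{E}\left[\texttt{top}_{\mathcal{M}}(\mu - X)\right] \;\le\; \left(\tfrac{1}{1+d} + \varepsilon\right)\,\mathbb{E}\left[\texttt{top}_{\mathcal{M}}(X' - X)\right].
\]
The plan is to choose $X$ built on the subset $X^\star \subseteq E$ achieving the density, i.e.\ with $|X^\star| = d\cdot\texttt{rk}(X^\star)$ (after scaling; more precisely $|X^\star|/\texttt{rk}(X^\star)=d$), and to place all the randomness on the coordinates in $X^\star$, setting the prices of stocks outside $X^\star$ to a constant (say $0$) so they never contribute. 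On $X^\star$ we want a distribution where, with overwhelming probability, exactly one stock is ``cheap'' (has a large positive value of $X'-X$) and the rest are not — this is the same design idea the authors flag in the introduction (``with high probability, at most one stock is profitable at any given time''). Concretely, I would let each stock $s\in X^\star$ independently take value $0$ with probability $1-p$ and value $-M$ (a large negative price) with a tiny probability, or some variant, tuned so that: (i) the offline per-step profit $\mathbb{E}[\texttt{top}_{\mathcal{M}}(X'-X)]$ is dominated by the event that a single coordinate is very negative in $X$ and $\approx 0$ in $X'$, giving a rank-$1$ gain of roughly $M$ times the probability that \emph{some} coordinate is deeply negative, i.e.\ $\approx |X^\star|\cdot p\cdot M$; while (ii) the online algorithm, which must commit to a fixed feasible set $H$ before seeing $X$ clearly (it sees $X$ but its profit is measured against the \emph{mean} $\mu$, not the future), can only capture the negative-price windfall on the coordinates it happens to hold, and since $H$ is feasible its size in $X^\star$ is at most $\texttt{rk}(X^\star)$, yielding online per-step profit $\approx \texttt{rk}(X^\star)\cdot p\cdot M$.

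The ratio of these two quantities is then $\texttt{rk}(X^\star)/|X^\star| = 1/d$ to leading order, but that is the wrong bound — I need $1/(1+d)$, not $1/d$. The fix is that the offline algorithm does \emph{not} get to ``reset'' to $\mu$: its profit per step is $\texttt{top}_{\mathcal{M}}(X'-X)$, and the online one is $\texttt{top}_{\mathcal{M}}(\mu - X)$; the extra ``$+1$'' in the denominator comes from the online algorithm \emph{also} being able to exploit the current realization $X$ (buy the cheap stock now) — so I must instead make the hard instance symmetric in a way that the online per-step value is $\mathbb{E}[\texttt{top}_{\mathcal{M}}(\mu-X)]$ with $\mu$ close to the \emph{typical} value, so the online gain is only from the rank-$1$ deep-discount event, $\approx \texttt{rk}(X^\star)\, p M$, whereas the offline gain combines a deep discount in $X$ \emph{and independently} a deep discount avoided (value near the maximum) in $X'$: the offline can pick up to $\texttt{rk}(X^\star)$ coordinates that are deeply negative in $X$, and separately the baseline, for a total of roughly $(\texttt{rk}(X^\star) + |X^\star|)\,pM$ — wait, that is the reciprocal direction. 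Let me restate: I will design the instance so that offline per-step $\approx (|X^\star|+\texttt{rk}(X^\star))\cdot c$ and online per-step $\approx \texttt{rk}(X^\star)\cdot c$ for a common scale $c$; then the ratio is $\texttt{rk}(X^\star)/(|X^\star|+\texttt{rk}(X^\star)) = 1/(1+d)$. The mechanism: take each $s\in X^\star$ to be a $\pm$-valued (or $\{0, -M\}$-valued) random variable with $p$ small; offline, using both $X'$ and $X$, captures on the order of $|X^\star|$ units of ``a low price in $X$'' value \emph{plus} $\texttt{rk}(X^\star)$ units of ``a high price in $X'$'' value across a feasible set, while online captures only the $\texttt{rk}(X^\star)$ units corresponding to holding a feasible set at low-in-$X$ prices and selling at the mean.

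The main obstacle — and the step I expect to require the most care — is constructing the distribution on $X^\star$ so that the asymptotics of $\mathbb{E}[\texttt{top}_{\mathcal{M}}(X'-X)]$ and $\mathbb{E}[\texttt{top}_{\mathcal{M}}(\mu-X)]$ are \emph{exactly} in the ratio $1/(1+d)$ as $\varepsilon\to 0$, accounting for the matroid structure of $\texttt{top}_{\mathcal{M}}$ (not just a uniform matroid): I must verify that restricting attention to $X^\star$ is without loss, that $\texttt{top}_{\mathcal{M}}$ evaluated on these sparse $\pm$ vectors is governed by $\texttt{rk}(X^\star)$ (via Kruskal's greedy picking the positive coordinates, capped by the rank of their support inside $X^\star$), and that the lower-order error terms from events where two or more coordinates are simultaneously ``active'' are $o(1)$ relative to the main term — which forces $p = p(\varepsilon)\to 0$ and $M=M(\varepsilon)\to\infty$ at compatible rates. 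I would also double-check the boundary behavior: the $\min\{1/2,\cdot\}$-type saturation analogous to the uniform case should fall out automatically because when $d\le 1$ the matroid is free ($d=1$ iff every element is a coloop), and for $d$ near $1$ the bound $1/(1+d)$ approaches $1/2$, consistent with the single-stock result.
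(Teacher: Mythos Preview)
Your proposal has a genuine gap: the construction you sketch---each coordinate in $X^\star$ independently taking a value in $\{0,-M\}$ or $\{-M,0,M\}$ with small activation probability $p$---cannot force the ratio down to $\tfrac{1}{1+d}$. The reason is precisely the regime you commit to (``$p$ small'', ``events where two or more coordinates are simultaneously active are $o(1)$''): in that regime the matroid rank constraint is never binding, because with high probability at most one coordinate of $\mu-X$ (or of $X'-X$) is positive, and any single element of a loopless matroid is feasible. A short calculation confirms this. For the one-sided $\{0,-M\}$ version, $\mathbb{E}[\texttt{top}_{\mathcal{M}}(\mu-X)]\approx kpM$ and $\mathbb{E}[\texttt{top}_{\mathcal{M}}(X'-X)]\approx kpM$ as well, giving ratio $\approx 1$; for the symmetric $\pm M$ version the ratio is $\approx \tfrac12$. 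In neither case does $r=\texttt{rk}(X^\star)$ appear, so you cannot get the $r/(k+r)=1/(1+d)$ separation you are aiming for. Your sentence ``offline captures on the order of $|X^\star|$ units \ldots\ plus $\texttt{rk}(X^\star)$ units'' is where the argument breaks: offline selects \emph{one} feasible set per step, so it cannot simultaneously harvest $k$ items on one criterion and $r$ items on another.

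What the paper does instead is use a \emph{correlated} distribution on $X^\star$ with two carefully tuned scales. With probability roughly $\varepsilon$, \emph{all} coordinates drop to $-1/\varepsilon$; this is the only event on which $\mu-X$ has positive coordinates, and here the rank constraint bites, so the online per-step profit is $r\cdot(1/\varepsilon)\cdot\varepsilon\approx r$. Separately, with probability roughly $\varepsilon^2$ for each of the $k$ coordinates, that single coordinate spikes to $+1/\varepsilon^2$; the offline algorithm, seeing $X'$, picks the spiking coordinate and gains $\approx 1/\varepsilon^2$ per such event, contributing $k\cdot\varepsilon^2\cdot(1/\varepsilon^2)=k$ in total, while the online algorithm gets nothing from these events (since $\mu-X\le 0$ there). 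Combined with the $\approx r$ that offline also collects from the all-drop event, offline $\approx k+r$, online $\approx r$, and the ratio is $r/(k+r)=1/(1+d)$. The two essential ingredients you are missing are (i) the perfectly correlated ``all-drop'' event that makes the rank the online bottleneck, and (ii) a second, higher-magnitude ``single-spike'' event that only offline can exploit.
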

\begin{proof}
    Using \Cref{lem_perday_opt}, \Cref{lem_perday_alg}, it is sufficient to construct an instance which forces  
\[\mathbb{E}\left[\texttt{top}_{\mathcal{M}} \left(\mu- X \right)\right] \leq \frac{1}{1+d} \cdot \mathbb{E}\left[\texttt{top}_{\mathcal{M}} \left(X' - X\right)\right]\text{.}\]

Since the density of $\mathcal{M}$ is $d$, there exists a subset $E'$ of the ground set $E$ of $\mathcal{M}$ such that $\frac{|E'|}{\texttt{rk}(E')} = d$. In our instance, the weights of all elements in the set $E \setminus E'$ are zero with probability one. Consequently, our analysis can be focused on the restriction of $\mathcal{M}$ to the set $E'$. Let $k = |E'|$ and $r = \texttt{rk}(E')$, which implies $d = \nicefrac{k}{r}$.

Let $\varepsilon$ be an arbitrarily small positive constant. Consider the instance in which $X$ is distributed as follows,
\[   
    X = (X_1, X_2, \ldots, X_k) =
     \begin{cases}
       (0,0,\ldots, 0) &\quad\text{w.p. } 1 - \varepsilon\text{,} \\
       (-\frac{1}{\varepsilon}, -\frac{1}{\varepsilon}, \ldots, -\frac{1}{\varepsilon}) &\quad\text{w.p. } \varepsilon - \frac{k \, \varepsilon^2 }{1 + k \varepsilon}\text{,}\\
       (\frac{1}{\varepsilon^2}, 0, \ldots, 0) &\quad\text{w.p. } \frac{ \varepsilon^2 }{1 + k \varepsilon}\text{,}\\
       \vdots &\quad \vdots\\
       (0, \ldots, 0, \frac{1}{\varepsilon^2}) &\quad\text{w.p. }  \frac{ \varepsilon^2 }{1 + k \varepsilon}\text{.}
     \end{cases}
\]
Note that $\mu=\mathbb{E}[X] = (0, \ldots, 0)$. Recall that $X'$ and $X$ are independent and identically distributed. Let $\mathcal{E'}$ be the event that $X'$ takes the value $(0, \ldots, 0)$ and $\mathcal{E}$ be the event that $X$ takes the value $(0, \ldots, 0)$. Notice that the events $\mathcal{E}$ and $\mathcal{E'}$ are independent.
The per-time-step expected profit of the optimal offline algorithm is 
 \begin{align*}
    \mathbb{E}\left[\texttt{top}_{\mathcal{M}} \left( X' - X \right)\right] &\geq \mathbb{E}\left[\texttt{top}_{\mathcal{M}} \left(X' - X \right) | \mathcal{E}\right] \cdot \Pr[ \mathcal{E}] + \mathbb{E}\left[\texttt{top}_{\mathcal{M}} \left(X' - X \right) | \mathcal{E'}\right] \cdot \Pr[ \mathcal{E'}]\\
    &= \mathbb{E}\left[\texttt{top}_{\mathcal{M}} \left(X'\right)\right] \cdot (1-\varepsilon) + \mathbb{E}\left[\texttt{top}_{\mathcal{M}} \left(- X \right)\right] \cdot (1-\varepsilon)\\
    &= \frac{1}{\varepsilon^2} \cdot \frac{k \varepsilon^2}{1+k \varepsilon} \cdot (1-\varepsilon) + \frac{r}{\varepsilon} \cdot \left(\varepsilon - \frac{k \varepsilon^2}{1 + k \varepsilon} \right) \cdot (1-\varepsilon) \\
    &= (1-\varepsilon)\frac{k}{1+k \varepsilon} + r (1-\varepsilon) \left(1 - \frac{k \varepsilon}{1 + k \varepsilon} \right)\text{.}
\end{align*}
And therefore, $ \lim_{\varepsilon \to 0} \mathbb{E}\left[\texttt{top}_{\mathcal{M}} \left( X' - X \right)\right] \geq k + r$.

Let $\mathcal{E}_1$ be the event that $X$ takes the value $(-\frac{1}{\varepsilon}, \ldots, -\frac{1}{\varepsilon})$. Except for the event $\mathcal{E}_1$, the coordinates of $X$ are all non-negative, and therefore, the weight of the maximum weight feasible set with respect to the weight function $\mu - X = -X$ is $0$.
Thus, the per-time-step expected profit of the optimal offline algorithm is
\[\mathbb{E}\left[\texttt{top}_{\mathcal{M}} \left(\mu-X\right)\right]=\mathbb{E}\left[\texttt{top}_{\mathcal{M}} \left(-X \right) | \mathcal{E}_1\right] \cdot \Pr[ \mathcal{E}_1]=\frac{r}{\varepsilon} \cdot \left(\varepsilon - \frac{k \varepsilon^2}{1 + k \varepsilon} \right)=r \left(1 - \frac{k \varepsilon}{1 + k \varepsilon} \right)\text{,}\]
and therefore, $\lim_{\varepsilon \to 0} \mathbb{E}\left[\texttt{top}_{\mathcal{M}} \left(\mu-X\right)\right] = r$.

It follows that
\[\lim_{\varepsilon \to 0} \frac{\mathbb{E}\left[\texttt{top}_{\mathcal{M}} \left(\mu-X\right)\right]}{\mathbb{E}\left[\texttt{top}_{\mathcal{M}} \left( X' - X \right)\right]} \leq \frac{r}{k+r} = \frac{1}{d+1} \text{,}\]
as required.
\end{proof}

\begin{corollary}[to Theorem~\ref{hardness_proof}]
\label{coro_hardness}
    There does not exist an algorithm for \Problem{} whose competitive ratio is greater than $\frac{\ell}{k + \ell}$.
\end{corollary}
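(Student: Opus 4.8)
The plan is to obtain this immediately from Theorem~\ref{hardness_proof} (the restatement of Theorem~\ref{hardness_res_matroid}) by instantiating it on the right matroid. By definition, the \Problem{} is precisely the \MatroidSetting{} in which $\mathcal{M}$ is the $\ell$-uniform matroid on a ground set of size $k$, so every upper bound on the competitive ratio proved for arbitrary matroids applies verbatim to this special case.

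The only ingredient I need is the density of this matroid. I would invoke Observation~\ref{observation1}, which states that the $\ell$-uniform matroid on $k$ elements has density $d = \nicefrac{k}{\ell}$; if a self-contained argument is preferred, note that its rank function is $\texttt{rk}(X) = \min(|X|, \ell)$, so the ratio $\nicefrac{|X|}{\texttt{rk}(X)}$ appearing in Definition~\ref{def_rank_density} is maximized by taking $X$ to be the entire ground set, giving $\nicefrac{k}{\ell}$. Moreover, since $\ell \ge 1$ the $\ell$-uniform matroid is loopless, so Theorem~\ref{hardness_proof} is indeed applicable to it.

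Applying Theorem~\ref{hardness_proof} with $d = \nicefrac{k}{\ell}$ then shows that no algorithm for the \Problem{} can achieve a competitive ratio exceeding
\[
\frac{1}{1+d} \;=\; \frac{1}{1 + \nicefrac{k}{\ell}} \;=\; \frac{\ell}{\ell + k},
\]
which is exactly the claimed bound.

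There is no genuine obstacle here: the substantive work has already been carried out in the proof of Theorem~\ref{hardness_proof} (the construction of the hard correlated instance) and in Observation~\ref{observation1}, and all that remains is the one-line algebraic simplification above. The only point worth stating explicitly is that the instance produced by Theorem~\ref{hardness_proof} respects the ``hold at most $\ell$ stocks'' constraint of the \Problem{}, which it does, since feasibility in the $\ell$-uniform matroid is exactly the cardinality constraint $|S| \le \ell$; in particular the bound holds already in the correlated-prices regime of the \Problem{}.
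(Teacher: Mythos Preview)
Your proposal is correct and follows exactly the paper's own proof, which simply applies Theorem~\ref{hardness_proof} to the $\ell$-uniform matroid on $k$ elements and invokes Observation~\ref{observation1} for the density. The extra remarks you add (looplessness, the explicit simplification of $\frac{1}{1+d}$) are fine elaborations but not additional ideas.
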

\begin{proof}
    Follows from applying Theorem~\ref{hardness_proof} to the $\ell$-uniform matroid over the set of $k$ stocks and using Observation~\ref{observation1}.
\end{proof}

\subsection{Uniform matroids, independent distributions.}

In this section, we prove an upper bound on the competitive ratio of the \ProblemTwo{}.
\begin{theorem}[Restated Theorem~\ref{hardness_result}]
        For every $k$, ${\ell}$ and $\ell'$, there exists an independently distributed instance (refer \Cref{independently_distributed}) of the \ProblemTwo{} for which no algorithm can achieve a competitive ratio greater than $\min\left\{\frac{1}{2},\frac{\ell}{k}\right\}$.
\end{theorem}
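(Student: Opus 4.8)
The plan is to invoke the reduction provided by \Cref{lem_k_l_opt} and \Cref{lem_k_l_alg}: the optimal offline and online per-step profits equal $\mathbb{E}\left[\texttt{top}_{\ell'}(X'-X)\right]$ and $\mathbb{E}\left[\texttt{top}_{\ell}(\mu-X)\right]$ respectively, so it suffices to exhibit, for each $k,\ell,\ell'$, a family of \emph{independently distributed} instances (refer \Cref{independently_distributed}) along which the ratio $\mathbb{E}\left[\texttt{top}_{\ell}(\mu-X)\right]/\mathbb{E}\left[\texttt{top}_{\ell'}(X'-X)\right]$ tends to $\min\left\{\frac12,\frac\ell k\right\}$. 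Since we cannot use the correlated construction of \Cref{hardness_proof}, we split according to which term attains the minimum, i.e., according to whether $\ell\geq k/2$ or $\ell<k/2$.

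Case $\ell\geq k/2$, where $\min\left\{\frac12,\frac\ell k\right\}=\frac12$. Here I would take a single ``active'' stock distributed exactly as the single-stock hard instance (concretely, stock $1$ equal to $0$ w.p.\ $1-\varepsilon$, to $-\frac1\varepsilon$ w.p.\ $\varepsilon-\frac{\varepsilon^2}{1+\varepsilon}$, and to $\frac1{\varepsilon^2}$ w.p.\ $\frac{\varepsilon^2}{1+\varepsilon}$, which has mean $0$), and set all other $k-1$ stocks identically equal to $0$; this is trivially independently distributed. Because appending a coordinate that is always $0$ never changes the value of any $\texttt{top}$, both expectations collapse to their single-stock counterparts, and the computation in the proof of \Cref{hardness_proof} with $k=r=1$ gives $\mathbb{E}\left[\texttt{top}_{\ell}(\mu-X)\right]\to 1$ and $\mathbb{E}\left[\texttt{top}_{\ell'}(X'-X)\right]\to 2$ as $\varepsilon\to 0$, hence ratio $\to\frac12$, as needed (and this holds for every $\ell'\geq 1$).

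Case $\ell<k/2$, where $\min\left\{\frac12,\frac\ell k\right\}=\frac\ell k$. Here I would make all $k$ stocks i.i.d., each taking value $-1$ w.p.\ $p$, value $0$ w.p.\ $1-p-q$, and a large value $H$ w.p.\ $q$, with $q$ and $H$ chosen so that $p=Hq$ (so every coordinate has mean $0$); then drive $p\to 1$ with $q\to 0$ much faster, e.g.\ $q=(1-p)^2$ and $H=p/q$. The intuition is exactly the one highlighted in the introduction: at a typical step essentially all $k$ prices equal $-1$, so the online algorithm, capped at $\ell$ stocks, can only cash in $\ell$ of these unit opportunities, giving $\mathbb{E}\left[\texttt{top}_{\ell}(\mu-X)\right]=\mathbb{E}\left[\min(\mathrm{Bin}(k,p),\ell)\right]\to\ell$; meanwhile the offline profit is dominated by the rare event that some single stock jumps to $H$ at the next step — probability $\approx kq$, value $\approx H$, contributing $\approx kqH=kp\to k$ even with $\ell'=1$ — while the $\Theta(1)$-valued opportunities contribute a vanishing amount since their expected count is $\approx k(1-p)\to 0$; so $\mathbb{E}\left[\texttt{top}_{\ell'}(X'-X)\right]\to k$, and the ratio $\to\frac\ell k$.

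The routine parts are the limiting estimates of the $\texttt{top}$-expectations, which reduce to controlling binomial tails — e.g.\ the probability that two or more stocks display the same rare large spike is of lower order and is discarded, and $\mathbb{E}\left[\min(\mathrm{Bin}(k,p),\ell)\right]$ is sandwiched between $\ell\Pr[\mathrm{Bin}(k,p)\geq\ell]$ and $\ell$. The main obstacle is the offline side in the second case: $\texttt{top}_{\ell'}(X'-X)$ simultaneously sees ``huge'' values ($\approx H$) and ``medium'' values ($\approx 1$), and one must argue that after the offline greedily fills a slot with the (at most one, w.h.p.) huge value, the medium values it can still collect contribute only $o(1)$, so that the limit is \emph{exactly} $k$ and the ratio is $\frac\ell k$ and not something smaller — this tight cancellation is what makes the hardness match the algorithmic guarantee of \Cref{theorem1} rather than exceeding it. A secondary subtlety, already present in \Cref{hardness_proof}, is arranging each coordinate to have mean exactly zero while keeping the online per-step value saturated at $\ell$; the large-but-rare positive atom is precisely what reconciles these two requirements.
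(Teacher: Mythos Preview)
Your proposal is correct but takes a different route from the paper in both regimes.

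For the $\tfrac12$ bound, you activate a single stock and pad with $k-1$ deterministic zeros, reducing to the one-stock hard instance; the paper instead keeps all $k$ stocks active with the symmetric three-point distribution $X_s\in\{-\tfrac1\varepsilon,0,\tfrac1\varepsilon\}$ and shows online $\leq k+o(1)$, offline $\geq 2k-o(1)$. Your reduction is more elementary and makes the role of $\ell,\ell'$ transparently irrelevant. For the $\tfrac{\ell}{k}$ bound, the paper's construction is simpler than yours: it takes the two-point distribution $X_s\in\{0,\tfrac1\varepsilon\}$ with $\Pr[X_s=\tfrac1\varepsilon]=\varepsilon$, so $\mu_s=1$ and the trivial bound $\texttt{top}_\ell(\mathbbm{1}-X)\leq\texttt{top}_\ell(\mathbbm{1})=\ell$ suffices for the online side, while a single rare spike in $X'$ gives offline $\geq k(1-\varepsilon)^{2k-1}$. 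Your three-point zero-mean construction works too, but the zero-mean requirement is self-imposed---it plays no role in the hardness direction---and it forces you to add the large atom $H$ and take a double limit $p\to1$, $q\to0$.

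One conceptual slip worth flagging: what you call ``the main obstacle''---upper-bounding the offline profit so the limit is \emph{exactly} $k$---is not needed at all. Hardness only requires a \emph{lower} bound on $\mathbb{E}[\texttt{top}_{\ell'}(X'-X)]$ and an \emph{upper} bound on $\mathbb{E}[\texttt{top}_\ell(\mu-X)]$; if offline came out larger than $k$ the hardness would only be stronger (and the matching algorithm is what prevents that, not your analysis). So the ``tight cancellation'' you worry about is a sanity check, not a proof obligation, and the argument simplifies once you drop it.
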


\begin{proof}
Using \Cref{lem_k_l_opt}, \Cref{lem_k_l_alg}, it is sufficient to construct an instance which forces   
\[\mathbb{E}\left[\texttt{top}_\ell \left(\mu - X\right)\right] \leq \min\left\{\frac{1}{2},\frac{\ell}{k}\right\} \cdot \mathbb{E}\left[\texttt{top}_{\ell'} \left(X' - X \right)\right]\text{.}\]

To force the upper bound of $\nicefrac{\ell}{k}$ on the competitive ratio, consider the instance in which each $X_s$ is distributed as,
    \[   
    X_s = 
     \begin{cases}
       0 &\quad\text{w.p. } 1 - \varepsilon\text{,} \\
       \frac{1}{\varepsilon} &\quad\text{w.p. } \varepsilon\text{,}\\
     \end{cases}
\]
where $\varepsilon$ is an arbitrarily small positive constant. Recall that each $X'_s$ is identically distributed as $X_s$. Let $\mathcal{E}$ be the event that there exists a unique $s^*\in[k]$ such that $X'_{s^*}$ takes the value $\frac{1}{\varepsilon}$, and the other $2k-1$ random variables $X_s$ and $X'_s$ take value $0$, which implies $\Pr[\mathcal{E}]=k \varepsilon (1-\varepsilon)^{2k-1}$. Thus, the per-time-step expected profit of the optimal offline algorithm is bounded as follows.

\[\mathbb{E}\left[\texttt{top}_{\ell'} \left(X' - X \right)\right] \geq \mathbb{E}\left[\texttt{top}_{\ell'} \left(X' - X \right) | \mathcal{E}\right] \cdot \Pr[ \mathcal{E}]= \frac{1}{\varepsilon } \cdot k \varepsilon (1-\varepsilon)^{2k-1} = k (1-\varepsilon)^{2k-1}\text{.}\]

In this instance, for every $s \in [k]$, $\mu_s = \mathbb{E}[X_s] = 1$. Therefore, the per-time-step expected profit of the optimal online algorithm is bounded as follows.
    \[\mathbb{E}\left[\texttt{top}_\ell \left(\mu - X\right)\right] = \mathbb{E}\left[\texttt{top}_\ell \left(\mathbbm{1} - X\right)\right] \leq \mathbb{E}\left[\texttt{top}_\ell \left(\mathbbm{1}  \right)\right] = \ell\text{,}\]
    where $\mathbbm{1}$ is the all ones vector. The above inequality holds because each $X_s$ is non-negative with probability one. It follows that
    \[\frac{\mathbb{E}\left[\texttt{top}_\ell \left(\mu - X\right)\right]}{\mathbb{E}\left[\texttt{top}_{\ell'} \left(X' - X \right)\right]} \leq  \frac{\ell}{k (1-\varepsilon)^{2k-1}}\text{,}\]
    which approaches $\nicefrac{\ell}{k}$ as $\varepsilon$ approaches $0$.

To force the upper bound of $\nicefrac{1}{2}$ on the competitive ratio, consider the instance in which each $X_s$ is distributed as,
\[   
    X_s = 
     \begin{cases}
       -\frac{1}{\varepsilon} &\quad \text{w.p. } \varepsilon \text{,}\\
       0 &\quad\text{w.p. } 1 - 2\varepsilon \text{,}\\
       \frac{1}{\varepsilon} &\quad\text{w.p. } \varepsilon \text{,}\\
     \end{cases}
\]
where $\varepsilon$ is an arbitrarily small positive constant. Each $X'_s$ is distributed identically as $X_s$.
Let $\mathcal{E}$ be the event that exactly one $X_s$ takes the value $-\frac{1}{\varepsilon}$ and all other $2k-1$ random variables are $0$. Let $\mathcal{E'}$ be the event that exactly one $X_s'$ takes the value $\frac{1}{\varepsilon}$ and all other $2k-1$ random variables are $0$. Notice that the events $\mathcal{E}$ and $\mathcal{E'}$ are disjoint, and therefore, $\Pr [\mathcal{E} \cup \mathcal{E'}] = \Pr [\mathcal{E}] + \Pr [\mathcal{E}'] = k \varepsilon (1-2 \varepsilon)^{2k-1} + k \varepsilon (1-2 \varepsilon)^{2k-1} = 2 k \varepsilon (1-2 \varepsilon)^{2k-1} $. The per-time-step expected profit of the optimal offline algorithm is bounded as follows.

\[\mathbb{E}\left[\texttt{top}_{\ell'} \left(X' - X \right)\right] \geq \mathbb{E}\left[\texttt{top}_{\ell'} \left(X' - X \right) | \mathcal{E} \cup \mathcal{E'} \right] \cdot \Pr[\mathcal{E} \cup \mathcal{E'} ]= \frac{1}{\varepsilon } \cdot 2 k \varepsilon (1-2 \varepsilon)^{2k-1} = 2 k (1-2 \varepsilon)^{2k-1}\text{.}\]

Let $\mathcal{E}_1$ be the event that exactly one $X_s$ takes the value $-\frac{1}{\varepsilon}$ and all other $k-1$ coordinates of $X$ are $0$. Let $\mathcal{E}_0$ be the event that all $X_s'$ are $0$. Notice that the events $\mathcal{E}_1$ and $\mathcal{E}_0$ are disjoint. Here $\Pr [\mathcal{E}_1] = k \varepsilon (1-2 \varepsilon)^{2k-1} \leq k \varepsilon$ and $\Pr [\lnot(\mathcal{E}_0 \cup \mathcal{E}_1)] \leq \binom{k}{2} \cdot \varepsilon^2$ by union bound.
In this instance, for every $s \in [k]$, $\mu_s = \mathbb{E}[X_s] = 0$. Therefore, the per-time-step expected profit of the optimal online algorithm is bounded as follows.
\begin{align*}
\mathbb{E}\left[\texttt{top}_\ell \left(\mu - X\right)\right] &=\mathbb{E}\left[\texttt{top}_\ell \left( - X \right) | \mathcal{E}_0\right] \cdot \Pr[ \mathcal{E}_0]
+\mathbb{E}\left[\texttt{top}_\ell \left( - X \right) | \mathcal{E}_1\right] \cdot \Pr[\mathcal{E}_1]  \\
&+ \mathbb{E}\left[\texttt{top}_\ell \left( - X \right) | \lnot(\mathcal{E}_0 \cup \mathcal{E}_1)\right] \cdot \Pr[\lnot(\mathcal{E}_0 \cup \mathcal{E}_1)] \\
&\leq 0 \cdot \Pr[ \mathcal{E}_0] + \frac{1}{\varepsilon} \cdot k \varepsilon + \mathbb{E}\left[\texttt{top}_\ell \left( \frac{1}{\varepsilon}, \ldots,  \frac{1}{\varepsilon} \right) \right] \cdot \binom{k}{2} \cdot \varepsilon^2\\
&= k + \frac{l}{\varepsilon} \cdot \binom{k}{2} \cdot \varepsilon^2 =  k + l \varepsilon \cdot \binom{k}{2}\text{,}
\end{align*}
where the inequality holds because each $X_s \geq -\frac{1}{\varepsilon}$ with probability one. It follows that
\[\frac{\mathbb{E}\left[\texttt{top}_\ell \left(\mu - X\right)\right]}{\mathbb{E}\left[\texttt{top}_{\ell'} \left(X' - X \right)\right]} \leq  \frac{k + l \varepsilon \cdot \binom{k}{2}}{2 k (1-2 \varepsilon)^{2k-1}}\text{,}\]
which approaches $\nicefrac{1}{2}$ as $\varepsilon$ approaches $0$.
\end{proof}

\section{Algorithmic Results}
\label{sec:algorithmic}

For analysis, we consider a subclass of instances called \textit{zero-expectation instances} and argue that it is sufficient to prove a competitiveness guarantee of \Cref{alg:Online} on such zero-expectation instances.

\begin{definition}[zero-expectation instance]
    An instance of the \MatroidSetting{} with joint distribution having CDF $F$ is called a \textbf{zero-expectation instance} if $\mathbb{E}_{X \sim F}[X] = \bar{0}$ (the zero vector in $\mathbb{R}^k$).
\end{definition}
 
\begin{proposition}
\label{zeroExp}
    If \Cref{alg:Online} is $\alpha$-competitive on zero-expectation instances, then it is $\alpha$-competitive on all instances.
\end{proposition}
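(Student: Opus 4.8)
The plan is to reduce an arbitrary instance to a zero-expectation instance by a mean-shift, and to observe that this shift changes neither the optimal offline profit nor the profit of \Cref{alg:Online}. Fix an arbitrary instance $\mathcal{I}$ of the \MatroidSetting{} with joint CDF $F$, mean $\mu = (\mu_1,\ldots,\mu_k)$, and horizon $n$. I would define the shifted instance $\mathcal{I}'$ on the same matroid $\mathcal{M}$ and the same $n$, whose price vector is $\tilde{X} = X - \mu$; equivalently its CDF is $\tilde{F}(x_1,\ldots,x_k) = F(x_1+\mu_1,\ldots,x_k+\mu_k)$. By construction $\mathbb{E}[\tilde{X}] = \bar{0}$, so $\mathcal{I}'$ is a zero-expectation instance.

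Next I would check that the optimal offline profit is the same on $\mathcal{I}$ and $\mathcal{I}'$. Letting $X', X$ (resp.\ $\tilde{X}', \tilde{X}$) be two independent draws from $F$ (resp.\ $\tilde{F}$), we have $\tilde{X}' - \tilde{X} = (X'-\mu) - (X-\mu) = X' - X$ with the same joint law, so $\texttt{top}_{\mathcal{M}}(\tilde{X}' - \tilde{X})$ and $\texttt{top}_{\mathcal{M}}(X' - X)$ are identically distributed and have equal expectation. By \Cref{lem_perday_opt}, the optimal offline profit on $\mathcal{I}'$ is $(n-1)\mathbb{E}[\texttt{top}_{\mathcal{M}}(\tilde{X}' - \tilde{X})] = (n-1)\mathbb{E}[\texttt{top}_{\mathcal{M}}(X'-X)]$, matching the optimal offline profit on $\mathcal{I}$. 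I would then do the analogous computation for the online side: on $\mathcal{I}'$ the mean is $\bar{0}$, so \Cref{lem_perday_alg} gives that the profit of \Cref{alg:Online} on $\mathcal{I}'$ is $(n-1)\mathbb{E}[\texttt{top}_{\mathcal{M}}(\bar{0} - \tilde{X})] = (n-1)\mathbb{E}[\texttt{top}_{\mathcal{M}}(\mu - X)]$, which by \Cref{lem_perday_alg} applied to $\mathcal{I}$ is exactly the profit of \Cref{alg:Online} on $\mathcal{I}$. (One can also see this at the level of executions: under the coupling $\tilde{X}^t = X^t - \mu$, the weight function $\bar{0} - \tilde{X}^t$ used by the algorithm on $\mathcal{I}'$ equals the weight function $\mu - X^t$ used on $\mathcal{I}$, so at every step it buys the same set and collects the same per-step profit.)

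Combining the two equalities, \Cref{alg:Online} extracts exactly the same fraction of the offline optimum on $\mathcal{I}$ as on $\mathcal{I}'$; since $\mathcal{I}'$ is a zero-expectation instance, that fraction is at least $\alpha$ by hypothesis, and taking the infimum over all instances finishes the proof. I do not expect a genuine obstacle here — the argument is essentially bookkeeping — and the one point to get exactly right is that the mean-shift leaves \emph{both} $\mathbb{E}[\texttt{top}_{\mathcal{M}}(X'-X)]$ and $\mathbb{E}[\texttt{top}_{\mathcal{M}}(\mu-X)]$ invariant, which is what makes the reduction lossless rather than one-directional. A minor edge case worth a sentence is when the offline optimum is $0$: since $X'-X$ is a difference of i.i.d.\ vectors, $\texttt{top}_{\mathcal{M}}(X'-X)=0$ almost surely forces $X$ to be almost surely constant, in which case the online profit is also $0$ and the competitive ratio is $1$ by convention, so this case causes no trouble.
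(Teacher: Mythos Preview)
Your proposal is correct and follows essentially the same approach as the paper: define the mean-shifted instance with CDF $\tilde{F}(x)=F(x+\mu)$, observe that $\tilde{X}'-\tilde{X}=X'-X$ so the offline per-step profit is unchanged, and that $\bar{0}-\tilde{X}=\mu-X$ so the online per-step profit of \Cref{alg:Online} is unchanged, then conclude via \Cref{lem_perday_opt} and \Cref{lem_perday_alg}. The execution-level coupling remark and the degenerate-case discussion you add are harmless extras not present in the paper's proof.
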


\begin{proof}
Given an arbitrary instance with joint CDF $F:\mathbb{R}^k \longrightarrow [0, 1]$,
let $\mu=(\mu_1,\ldots,\mu_k)\in\mathbb{R}^k$ be the vector of expectations of random variables whose joint CDF is $F$.
Construct an instance with joint CDF $G$ given by $G(x)=F(x+\mu)$ for all $x \in \mathbb{R}^k$. Recall that the joint CDFs of $X, X'$ are both $F$. Define $Y=X-\mu$ and $Y'=X'-\mu$, so that $X'-X=Y'-Y$. Observe that the joint CDF of $Y$ and $Y'$ is $G$, and their expectation is $\bar{0}$. Thus, the constructed instance is a zero-expectation instance.

Since $\texttt{top}_{\mathcal{M}} \left(X' - X \right)=\texttt{top}_{\mathcal{M}} \left(Y' - Y \right)$, the per-time-step profits of the offline optimal algorithms for the given and the constructed instance are the same with probability one. Since $\texttt{top}_\mathcal{M} \left(\mu- X \right)=\texttt{top}_\mathcal{M} \left(- Y \right)=\texttt{top}_\mathcal{M} \left(\mathbb{E}[Y]- Y \right)$, the per-time-step profits of \Cref{alg:Online} on the given and the constructed instance are the same with probability one. Thus, if \Cref{alg:Online} is $\alpha$-competitive on the constructed zero-expectation instance, it is $\alpha$-competitive on the given instance too.
\end{proof}

As a consequence of the above proposition, the task of getting a competitive guarantee on arbitrary instances reduces to getting a competitive guarantee on zero-expectation instances. Therefore, we assume henceforward that the instance given to \Cref{alg:Online} is a \emph{zero expectation instance}.

\subsection{Arbitrary matroids, correlated distributions.}

In this section, we prove the competitive bound for the \MatroidSetting{}. We first prove a result that quantifies and generalizes the following intuitive argument: if the matroid is not too dense then most of its elements are present in a maximum weight feasible set with respect to a non-negative weight function, and therefore, the weight of a maximum weight feasible set is close to the total weight of all the elements of the matroid.

\begin{lemma}
\label{lemma:matroid2}
Let $M$ be a loopless matroid of density $d$ on the ground set $E$, and let $w: E \longrightarrow \mathbb{R}$ be a weight function. Then the following inequality holds,
   \[\sum_{e\in E} w(e)^+ \leq d \cdot \texttt{\textup{top}}_{\mathcal{M}}(w) \text{.}\]
\end{lemma}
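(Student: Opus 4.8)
The plan is to relate both sides of the inequality to the \emph{upper level sets} of the weight function and to exploit the fact that a single run of Kruskal's algorithm (\Cref{alg:Kruskal}) simultaneously produces a maximum-cardinality feasible subset of every such level set. If $w(e) \leq 0$ for all $e$ then the left-hand side is $0$ and there is nothing to prove, so assume $\{e : w(e) > 0\} \neq \emptyset$; for $t \geq 0$ write $P_t = \{e \in E : w(e) > t\}$. The first step is the layer-cake identity $x^+ = \int_0^\infty \mathbbm{1}[x > t]\,dt$, valid for every $x \in \mathbb{R}$. Summing it over $e \in E$ and exchanging the finite sum with the integral gives $\sum_{e \in E} w(e)^+ = \int_0^\infty |P_t|\,dt$. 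Letting $H$ be the set returned by \Cref{alg:Kruskal} on $(\mathcal{M}, w)$, every element of $H$ has non-negative weight (the algorithm breaks before reaching negative weights), so the same identity applied to $\sum_{e \in H} w(e)$ yields $\texttt{top}_{\mathcal{M}}(w) = \int_0^\infty |H \cap P_t|\,dt$.

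The second step, which is the only nontrivial one, is the claim that for every $t \geq 0$ the set $H \cap P_t$ is a maximal feasible subset of $P_t$, and hence $|H \cap P_t| = \texttt{rk}(P_t)$. This holds because the algorithm scans the elements in non-increasing order of weight and never deletes a chosen element: all elements of $P_t$ have strictly positive weight and are therefore examined (before the loop terminates) strictly before any element of $E \setminus P_t$, so the partial solution at the moment the last element of $P_t$ has been processed is exactly $H \cap P_t$; by the greedy acceptance rule no further element of $P_t$ can be added to it, i.e.\ it is a maximal feasible subset of $P_t$. All maximal feasible subsets of $P_t$ have the same cardinality by the exchange axiom, and that cardinality is $\texttt{rk}(P_t)$ by definition.

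Finally, the definition of density gives $|X| \leq d \cdot \texttt{rk}(X)$ for every nonempty $X \subseteq E$; applied with $X = P_t$ (the inequality being trivial when $P_t = \emptyset$) together with the previous step, this gives $|P_t| \leq d \cdot |H \cap P_t|$ for all $t \geq 0$. Integrating over $t$ and using the two identities of the first step,
\[
\sum_{e \in E} w(e)^+ = \int_0^\infty |P_t|\,dt \;\leq\; d \int_0^\infty |H \cap P_t|\,dt \;=\; d \cdot \texttt{top}_{\mathcal{M}}(w),
\]
which is the claimed inequality. I expect the main obstacle to be the second step: one must argue carefully that the greedy pass is optimal not only for the weighted objective (the cited folklore fact) but also, level by level, for the unweighted cardinality objective on each threshold set $P_t$. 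The layer-cake bookkeeping could equally be phrased discretely via Abel summation on the sorted positive weights, applying the density bound at each distinct weight value; the integral formulation is merely a tidier packaging.
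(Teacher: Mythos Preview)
Your proof is correct and is essentially the continuous (layer-cake) repackaging of the paper's discrete Abel-summation argument: both sort by weight, bound each prefix/level set $P_t$ via the density inequality $|P_t|\le d\cdot\texttt{rk}(P_t)$, and identify $\texttt{rk}(P_t)$ with $|H\cap P_t|$ using that Kruskal's partial output on a weight-prefix is a basis of that prefix. You even note the equivalence to Abel summation yourself, and the only step you flag as delicate---that $H\cap P_t$ is a maximal feasible subset of $P_t$---is exactly the paper's observation that $\texttt{rk}(\{e_1,\dots,e_i\})-\texttt{rk}(\{e_1,\dots,e_{i-1}\})$ indicates whether Kruskal picks $e_i$.
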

\begin{proof}
Let $e_1, \ldots, e_n$ be the elements of $E$ in non-increasing order of weights, that is, $w(e_1)\geq\cdots\geq w(e_n)$. Let $w(e_{n+1})=0$. Since $w(e_1)^+\geq\cdots\geq w(e_n)^+\geq w(e_{n+1})^+=0$,
we have,
\begin{equation}\label{eqn_matroid1}
\sum_{e\in E} w(e)^+ = \sum_{i=1}^n i \cdot (w(e_i)^+ - w(e_{i+1})^+) \text{.}
\end{equation}
By applying the definition of density to the set $\{e_1, \ldots, e_i\}$, we get $i \leq d \cdot \texttt{rk}(\{e_1, \ldots, e_i\})$. Thus, 
    \begin{align}
\sum_{i=1}^n i \cdot (w(e_i)^+ - w(e_{i+1})^+) &\leq d\cdot\sum_{i=1}^n \texttt{rk}(\{e_1, \ldots, e_i\}) \cdot (w(e_i)^+ - w(e_{i+1})^+)\nonumber\\
&= d\cdot\sum_{i=1}^n \left(\texttt{rk}(\{e_1, \ldots, e_i\}) - \texttt{rk}(\{e_1, \ldots, e_{i-1}\})\right) \cdot w(e_i)^+\text{,}\label{eqn_matroid2}
   \end{align}
    where the notation $\{e_1,...,e_0\}$ denotes the empty set, whose rank is zero. The ranks of the sets $\{e_1, \ldots, e_i\}$ and $\{e_1, \ldots, e_{i-1}\}$ differ by at most one, because the sets differ by just one element, namely $e_i$. If Kruskal's algorithm picks $e_i$, then $w(e_i)$ is non-negative and $e_i$ is not spanned by $\{e_1, \ldots, e_{i-1}\}$, resulting in $\texttt{rk}(\{e_1, \ldots, e_i\}) - \texttt{rk}(\{e_1, \ldots, e_{i-1}\}) = 1$. Thus, the contribution of $e_i$ to the weight of the output of Kruskal's algorithm is $w(e_i) = \left(\texttt{rk}(\{e_1, \ldots, e_i\}) - \texttt{rk}(\{e_1, \ldots, e_{i-1}\})\right) \cdot w(e_i)^+$. On the other hand, if Kruskal's algorithm does not pick $e_i$, then it means that $w(e_i)$ is non-positive or $e_i$ is spanned by $\{e_1, \ldots, e_{i-1}\}$, resulting in $\texttt{rk}(\{e_1, \ldots, e_i\}) - \texttt{rk}(\{e_1, \ldots, e_{i-1}\}) = 0$. Thus the contribution of $e_i$ to the weight of the output of Kruskal's algorithm is $0 = \left(\texttt{rk}(\{e_1, \ldots, e_i\}) - \texttt{rk}(\{e_1, \ldots, e_{i-1}\})\right) \cdot w(e_i)^+$, nevertheless. Since Kruskal's algorithm is guaranteed to find a maximum weight feasible set, we have,
    \begin{equation}\label{eq_matroid3}
        \sum_{i=1}^n \left(\texttt{rk}(\{e_1, \ldots, e_i\}) - \texttt{rk}(\{e_1, \ldots, e_{i-1}\})\right) \cdot w(e_i)^+ = \texttt{top}_{\mathcal{M}}(w^+) \text{,}
    \end{equation}
    where $w^+$ is same as $w$ except that $w^+$ assigns weight $0$ to an element if $w$ assigns a negative weight to it. Since elements with non-positive weight are ignored by Kruskal's algorithm, we have,
    \begin{equation}\label{eq_matroid4}
        \texttt{top}_{\mathcal{M}}(w^+) = \texttt{top}_{\mathcal{M}}(w) \text{.}
    \end{equation}
    Inequality (\ref{eqn_matroid2}) and \Cref{eqn_matroid1,eq_matroid3,eq_matroid4} together imply,
    \[\sum_{i=1}^n w(e_i)^+ \leq d \cdot \texttt{top}_{\mathcal{M}}(w) \text{,}\]
    as required.    
\end{proof}

Consider two arbitrary weight assignments, $x$ and $x'$, to the elements of a matroid $\mathcal{M}$. We next prove a helpful upper bound on $\texttt{top}_{\mathcal{M}}(x' - x)$.
\begin{lemma}
\label{lemma:matroid1}
    Let $\mathcal{M}$ be an arbitrary matroid over $[k]$ with density $d$. Then for every $x, x' \in \mathbb{R}^k$, the following inequality holds
    \[\texttt{\textup{top}}_{\mathcal{M}}(x' - x) \leq \texttt{\textup{top}}_{\mathcal{M}}(-x) + \sum_{i=1}^{k} x'_i + d \cdot \texttt{\textup{top}}_{\mathcal{M}}(-x') \text{.}\]
\end{lemma}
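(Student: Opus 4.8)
The plan is to fix a witnessing feasible set and estimate its weight coordinate-wise. Let $S\in\mathcal{I}$ be a maximum weight feasible set of $\mathcal{M}$ with respect to the weight function $x'-x$, so that $\texttt{top}_{\mathcal{M}}(x'-x)=\sum_{i\in S}(x'_i-x_i)=\sum_{i\in S}x'_i-\sum_{i\in S}x_i$. I will bound the two sums on the right separately.

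The $x$-part is immediate: since $S$ is feasible, it is one of the candidate sets competing in the definition of $\texttt{top}_{\mathcal{M}}(-x)$, and therefore $-\sum_{i\in S}x_i=\sum_{i\in S}(-x_i)\le\texttt{top}_{\mathcal{M}}(-x)$. For the $x'$-part, the idea is \emph{not} to bound $\sum_{i\in S}x'_i$ by $\sum_{i\in S}(x'_i)^+$ (that would leave the wrong quantity), but to complete the sum over $S$ to the full ground set and absorb the correction term: $\sum_{i\in S}x'_i=\sum_{i=1}^k x'_i-\sum_{i\notin S}x'_i\le\sum_{i=1}^k x'_i+\sum_{i\notin S}(-x'_i)^+\le\sum_{i=1}^k x'_i+\sum_{i=1}^k(-x'_i)^+$. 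Now I invoke \Cref{lemma:matroid2} on the loopless matroid $\mathcal{M}$ (whose density is $d$) with the weight function $w=-x'$; its positive parts $(-x'_i)^+$ are exactly the quantities appearing above, so $\sum_{i=1}^k(-x'_i)^+\le d\cdot\texttt{top}_{\mathcal{M}}(-x')$. Combining the two estimates gives $\texttt{top}_{\mathcal{M}}(x'-x)\le\texttt{top}_{\mathcal{M}}(-x)+\sum_{i=1}^k x'_i+d\cdot\texttt{top}_{\mathcal{M}}(-x')$, as desired.

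There is no real obstacle here; the only point requiring care is the bookkeeping in the middle step — recognizing that one should write $\sum_{i\in S}x'_i$ as the full sum minus its complement (rather than dropping negative coordinates), and that the resulting over-count $\sum_{i\notin S}(-x'_i)^+$ is precisely what \Cref{lemma:matroid2} controls when applied to $-x'$. I would also note in passing that the lemma applies because $\mathcal{M}$ is assumed loopless throughout (which is exactly the hypothesis \Cref{lemma:matroid2} needs), and that $\texttt{top}_{\mathcal{M}}$ of any weight function is the maximum of $\sum_{i\in T}w_i$ over feasible $T$, which is nonnegative since $\emptyset$ is feasible — these facts justify the inequality $\sum_{i\in S}(-x_i)\le\texttt{top}_{\mathcal{M}}(-x)$ used above.
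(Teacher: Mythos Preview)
Your proof is correct and follows essentially the same approach as the paper: pick a witnessing feasible set, bound the $-x$ part by $\texttt{top}_{\mathcal{M}}(-x)$, reduce the $x'$ part to $\sum_i x'_i+\sum_i(-x'_i)^+$, and invoke \Cref{lemma:matroid2}. The only cosmetic difference is in how you reach $\sum_i x'_i+\sum_i(-x'_i)^+$: the paper does bound $\sum_{i\in S}x'_i\le\sum_i(x'_i)^+$ and then uses the identity $(x'_i)^+=x'_i+(-x'_i)^+$ (so your aside that this route ``would leave the wrong quantity'' is not quite accurate), whereas you complete the sum to $[k]$ and control the complement---but both arrive at the identical intermediate inequality before applying \Cref{lemma:matroid2}.
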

\begin{proof}
    
    Let $Z$ be a maximum weight feasible set with respect to the weight function $x' - x$. Then
\[\texttt{top}_{\mathcal{M}}(x' - x) = \sum_{i \in Z} \left(x'_i - x_i\right) = \sum_{i \in Z} x'_i + \sum_{i \in Z} \left(- x_i\right) \leq \sum_{i \in Z} x'_i + \texttt{top}_{\mathcal{M}}(-x) \text{.}\]
    Here, $\sum_{i \in Z} x'_i \leq \sum_{i \in Z} (x'_i)^+ \leq \sum_{i=1}^{k} (x'_i)^+$. Observe that every real number $y$ can be written as $y = y^+ - (-y)^+$, and therefore, $\sum_{i=1}^{k} (x'_i)^+$ can be written as $\sum_{i=1}^{k} x'_i + \sum_{i=1}^{k} (-x'_i)^+$. Thus, we get,
\[\texttt{top}_{\mathcal{M}}(x' - x) \leq \texttt{top}_{\mathcal{M}}(-x) + \sum_{i=1}^{k} x'_i + \sum_{i=1}^{k} (-x'_i)^+ \leq \texttt{top}_{\mathcal{M}}(-x) + \sum_{i=1}^{k} x'_i + d \cdot \texttt{top}_{\mathcal{M}}(-x')\text{,}
\]
    where the last inequality follows from \Cref{lemma:matroid2}.
\end{proof}

Having gathered all the necessary prerequisites, we are now prepared to present Theorem~\ref{thm:matroid:ratio}, which ensures the desired competitive ratio.

\begin{theorem}[Restated Theorem~\ref{thm:matroid:ratio}]
\label{matroid_theorem_proof}
    For every matroid $\mathcal{M}$, \Cref{alg:Online} is $\frac{1}{1+d}$-competitive, where $d$ is the density of  $\mathcal{M}$.
\end{theorem}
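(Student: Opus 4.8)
The plan is to reduce to zero-expectation instances via \Cref{zeroExp}, and then combine the two per-time-step expressions from \Cref{lem_perday_opt,lem_perday_alg} using the structural bound of \Cref{lemma:matroid1}. Concretely, since \Cref{zeroExp} lets us assume $\mu = \bar{0}$, the per-time-step online profit is $\mathbb{E}[\texttt{top}_{\mathcal{M}}(\mu - X)] = \mathbb{E}[\texttt{top}_{\mathcal{M}}(-X)]$ and the per-time-step offline profit is $\mathbb{E}[\texttt{top}_{\mathcal{M}}(X' - X)]$, so by \Cref{lem_perday_opt,lem_perday_alg} it suffices to show
\[
\mathbb{E}\left[\texttt{top}_{\mathcal{M}}(X' - X)\right] \leq (1 + d)\cdot \mathbb{E}\left[\texttt{top}_{\mathcal{M}}(-X)\right].
\]

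The key step is to apply \Cref{lemma:matroid1} pointwise with $x = X$ and $x' = X'$, giving
\[
\texttt{top}_{\mathcal{M}}(X' - X) \leq \texttt{top}_{\mathcal{M}}(-X) + \sum_{i=1}^{k} X'_i + d \cdot \texttt{top}_{\mathcal{M}}(-X'),
\]
and then take expectations of both sides. Two simplifications then finish the argument: first, $\mathbb{E}\left[\sum_{i=1}^{k} X'_i\right] = \sum_{i=1}^k \mu_i = 0$ because we are on a zero-expectation instance, so the linear middle term vanishes; second, since $X'$ and $X$ are identically distributed, $\mathbb{E}[\texttt{top}_{\mathcal{M}}(-X')] = \mathbb{E}[\texttt{top}_{\mathcal{M}}(-X)]$, which lets us merge the first and third terms into $(1+d)\,\mathbb{E}[\texttt{top}_{\mathcal{M}}(-X)]$. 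Rearranging yields $\mathbb{E}[\texttt{top}_{\mathcal{M}}(-X)] \geq \tfrac{1}{1+d}\,\mathbb{E}[\texttt{top}_{\mathcal{M}}(X'-X)]$, and multiplying through by $n-1$ and invoking \Cref{lem_perday_opt,lem_perday_alg} gives the claimed competitive ratio of \Cref{alg:Online}.

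There is essentially no hard obstacle remaining at this level, since the real work has been front-loaded into \Cref{lemma:matroid2} (the density inequality $\sum_e w(e)^+ \leq d\cdot \texttt{top}_{\mathcal{M}}(w)$) and \Cref{lemma:matroid1} (splitting $\texttt{top}_{\mathcal{M}}(x'-x)$ and applying that density inequality to the $-x'$ part), together with the reduction in \Cref{zeroExp}. The only point worth stating carefully is why the linear term is controlled: it is exactly here that the zero-expectation reduction is used, and it explains why we could not have proved the bound directly without first passing to \Cref{zeroExp}. It may also be worth a one-line remark that all expectations are finite by the standing assumption that the distribution with CDF $F$ has a well-defined (finite) expectation, so the linearity and term-by-term manipulations above are justified.
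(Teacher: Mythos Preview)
Your proposal is correct and follows essentially the same route as the paper: reduce to zero-expectation instances via \Cref{zeroExp}, apply \Cref{lemma:matroid1} pointwise to $(X,X')$, take expectations, and use $\mathbb{E}[X'_i]=0$ together with the fact that $X$ and $X'$ are identically distributed to collapse the right-hand side to $(1+d)\,\mathbb{E}[\texttt{top}_{\mathcal{M}}(-X)]$. The only addition in your write-up is the remark about finiteness of expectations, which is a reasonable clarification but not something the paper spells out.
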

\begin{proof}
Recall from \Cref{lem_perday_opt} and \Cref{lem_perday_alg} that it suffices to prove
    \[\mathbb{E}\left[\texttt{top}_{\mathcal{M}} \left( X' - X \right)\right] \leq \left(d + 1\right) \mathbb{E}\left[\texttt{top}_{\mathcal{M}} \left(-X\right)\right]\text{.}\]
    By \Cref{lemma:matroid1}, the following holds with probability one:
    \[\texttt{top}_{\mathcal{M}}(X' - X) \leq \texttt{top}_{\mathcal{M}}(-X) + \sum_{i=1}^{k} X'_i + d \cdot \texttt{top}_{\mathcal{M}}(-X') \text{.}\]
    Taking expectations on both sides, we get, 
    \[ \mathbb{E} \left[\texttt{top}_{\mathcal{M}}(X' - X) \right] \leq \mathbb{E}\left[\texttt{top}_{\mathcal{M}}(-X)\right]  + \sum_i \mathbb{E}\left[ X_i' \right] + d \cdot \mathbb{E}\left[\texttt{top}_{\mathcal{M}}(-X')\right] \text{.}\]
    Observe that $\mathbb{E}[\texttt{top}_{\mathcal{M}}\left(-X\right)] = \mathbb{E}[\texttt{top}_{\mathcal{M}}\left(-X'\right)]$ because $X$ and $X'$ are identically distributed. Since we are working with zero-expectation instances, $\mathbb{E}[X_i'] = 0$ for every $i$. Therefore, we get,
    
    \[\mathbb{E}\left[\texttt{top}_{\mathcal{M}} \left( X' - X \right)\right] \leq \left(d + 1\right) \mathbb{E}\left[\texttt{top}_{\mathcal{M}} \left(-X\right)\right]\text{,}\]
as required.
\end{proof}

\begin{corollary}[to Theorem~\ref{matroid_theorem_proof}]
\label{coro_alg}
    For every $k$ and $l$, \Cref{alg:Online} is $\frac{\ell}{k + \ell}$-competitive for the \Problem{}.
\end{corollary}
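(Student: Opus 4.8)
The plan is to derive this as an immediate specialization of Theorem~\ref{matroid_theorem_proof}. Recall that the \Problem{} is, by definition, precisely the \MatroidSetting{} in which $\mathcal{M}$ is taken to be the $\ell$-uniform matroid on the ground set $[k]$. For $\ell \geq 1$ this matroid is loopless, since every singleton is feasible, so the hypothesis of Theorem~\ref{matroid_theorem_proof} is satisfied and \Cref{alg:Online} is $\frac{1}{1+d}$-competitive on such instances, where $d$ denotes the density of this matroid. The reductions underlying Theorem~\ref{matroid_theorem_proof} --- namely Proposition~\ref{zeroExp} together with \Cref{lem_perday_opt} and \Cref{lem_perday_alg} --- impose no conditions beyond looplessness, so nothing further needs to be checked when instantiating it here.

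It then remains only to evaluate the density. By Observation~\ref{observation1}, the density of the $\ell$-uniform matroid on $k$ elements equals $\nicefrac{k}{\ell}$. Substituting $d = \nicefrac{k}{\ell}$ into $\frac{1}{1+d}$ yields $\frac{1}{1 + k/\ell} = \frac{\ell}{k+\ell}$, which is exactly the claimed competitive ratio. I do not anticipate any genuine obstacle: the entire content is already contained in Theorem~\ref{matroid_theorem_proof} and Observation~\ref{observation1}, and the only remaining step is the one-line arithmetic simplification above; the statement is recorded as a corollary precisely because the uniform-matroid case is the motivating special case of the general matroid result.
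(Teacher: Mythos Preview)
Your proposal is correct and follows exactly the same approach as the paper: apply Theorem~\ref{matroid_theorem_proof} to the $\ell$-uniform matroid on $[k]$ and invoke Observation~\ref{observation1} to compute its density as $k/\ell$, then simplify $\frac{1}{1+k/\ell}=\frac{\ell}{k+\ell}$. The paper's own proof is the one-line version of precisely this argument.
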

\begin{proof}
    Follows from applying Theorem~\ref{matroid_theorem_proof} to the $\ell$-uniform matroid over the set of $k$ stocks and using Observation~\ref{observation1}.
\end{proof}

\subsection{Uniform matroids, independent distributions.}

In this section, we will prove that our algorithm for the \ProblemTwo{} has competitive ratio at least $\min\left\{\frac{1}{2},\frac{\ell}{k}\right\}$. We start by bounding $\mathbb{E}\left[\texttt{top}_{\ell'} \left( X' - X \right)\right]$, the per-time-step expected profit of the offline optimal algorithm in \Cref{lem:OPT} given below.

\begin{lemma}\label{lem:OPT}
For every zero-expectation instance,
\[\mathbb{E}\left[\texttt{\textup{top}}_{\ell'} \left( X' - X \right)\right] \leq \int_{0}^{\infty} \left( 2 \sum_s \Pr[-X_s \geq x] - \frac{2}{k-1} \sum_{s<s'} \Pr[-X_s \geq x] \Pr[-X_{s'} \geq x] \right) \, d x\text{.}\]
\end{lemma}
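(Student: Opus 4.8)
The plan is to bound the offline per-time-step profit coordinate-by-coordinate after discarding the cardinality constraint entirely. Since $\texttt{top}_{\ell'}(w)$ is non-decreasing in $\ell'$ and $\ell'\le k$, we have $\texttt{top}_{\ell'}(X'-X)\le \texttt{top}_{k}(X'-X)=\sum_{s}(X'_s-X_s)^+$ with probability one, so by linearity of expectation it suffices to upper bound $\sum_{s}\mathbb{E}[(X'_s-X_s)^+]$. For each stock $s$ I would use the elementary pointwise inequality
\[
(X'_s-X_s)^+\ \le\ (X'_s)^+ + (-X_s)^+ - \min\!\big((-X'_s)^+,\,(-X_s)^+\big),
\]
which follows from a short case analysis on the signs of $X'_s$ and $X_s$ (it is an equality unless both $X'_s>0$ and $X_s>0$; the naive triangle-type bound $(X'_s-X_s)^+\le (X'_s)^+ + (-X_s)^+$ additionally loses $\min((X'_s)^+,(X_s)^+)$ in that case, but this can be discarded).

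Taking expectations term by term and writing $p_s(x):=\Pr[-X_s\ge x]$: since $X_s$ and $X'_s$ are identically distributed and the instance has zero expectation, $\mathbb{E}[(X'_s)^+]=\mathbb{E}[(X_s)^+]=\mathbb{E}[(-X_s)^+]=\int_0^\infty p_s(x)\,dx$, so the first two terms together contribute $2\int_0^\infty p_s(x)\,dx$. For the last term, $X'_s$ and $X_s$ are independent, so $\Pr[\min((-X'_s)^+,(-X_s)^+)\ge x]=p_s(x)^2$ for every $x>0$, hence $\mathbb{E}[\min((-X'_s)^+,(-X_s)^+)]=\int_0^\infty p_s(x)^2\,dx$ by the layer-cake formula. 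Summing the resulting inequality $\mathbb{E}[(X'_s-X_s)^+]\le\int_0^\infty\big(2p_s(x)-p_s(x)^2\big)\,dx$ over all $s$ yields
\[
\mathbb{E}\left[\texttt{top}_{\ell'}(X'-X)\right]\ \le\ \int_0^\infty\Big(2\sum_s p_s(x)-\sum_s p_s(x)^2\Big)\,dx .
\]

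It remains to weaken $\sum_s p_s(x)^2$ down to $\frac{2}{k-1}\sum_{s<s'}p_s(x)p_{s'}(x)$, which is exactly where Cauchy--Schwarz enters: for each fixed $x$, $k\sum_s p_s(x)^2\ge\big(\sum_s p_s(x)\big)^2=\sum_s p_s(x)^2+2\sum_{s<s'}p_s(x)p_{s'}(x)$, i.e.\ $\sum_s p_s(x)^2\ge\frac{2}{k-1}\sum_{s<s'}p_s(x)p_{s'}(x)$. Substituting this into the integrand and integrating gives precisely the claimed bound.

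The routine parts are the case analysis behind the pointwise inequality and the layer-cake/Fubini manipulations. The two ideas that do the real work are: (i) retaining the slack $\min((-X'_s)^+,(-X_s)^+)$ in the triangle-type bound for $(X'_s-X_s)^+$, since this is what produces the negative quadratic term; and (ii) recognizing that this per-coordinate quadratic can be traded, via Cauchy--Schwarz applied separately at each threshold $x$, for the cross term $\sum_{s<s'}p_sp_{s'}$ in the statement. I expect step (i) — pinning down the right correction term so that the subsequent Cauchy--Schwarz lands exactly on $\frac{2}{k-1}\sum_{s<s'}p_sp_{s'}$ — to be the main obstacle.
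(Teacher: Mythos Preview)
Your proof is correct and reaches the same intermediate bound as the paper, namely $\sum_s\mathbb{E}[(X'_s-X_s)^+]\le\int_0^\infty\big(2\sum_sp_s(x)-\sum_sp_s(x)^2\big)\,dx$, after which both arguments apply the identical sum-of-squares inequality $(k-1)\sum_sp_s^2\ge 2\sum_{s<s'}p_sp_{s'}$. The only difference is the route to this intermediate bound. The paper writes $(X'_s-X_s)^+=\int_{-\infty}^\infty\mathbbm{1}[X'_s\ge x\ge X_s]\,dx$, uses independence and identical distribution to obtain $\int_{-\infty}^\infty\Pr[X_s\ge x]\Pr[X_s\le x]\,dx$, splits the integral at $0$, bounds the two halves separately, and invokes the zero-expectation identity $\int_0^\infty\Pr[X_s\ge x]\,dx=\int_{-\infty}^0\Pr[X_s\le x]\,dx$ to merge them into the stated form. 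You replace that split-and-recombine maneuver by the single pointwise inequality $(a-b)^+\le a^++(-b)^+-\min((-a)^+,(-b)^+)$ and apply the layer-cake formula to each of the three nonnegative terms; the quadratic correction then appears directly as $\mathbb{E}[\min((-X'_s)^+,(-X_s)^+)]=\int_0^\infty p_s(x)^2\,dx$. Your derivation is a little more transparent about the origin of the $p_s^2$ term, while the paper's is more mechanical; substantively they are the same argument with different bookkeeping, and both use only the independence of $X_s$ from $X'_s$, not the independence of the coordinates of $X$.
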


\begin{proof}
From the definition of the $\texttt{top}_{\ell'}$ function, we have,
\[\mathbb{E}\left[\texttt{top}_{\ell'} \left( X' - X \right)\right] \leq\mathbb{E}\left[\sum_{s=1}^k (X_s' - X_s)^+\right]=\mathbb{E} \left[\sum_{s=1}^k \int_{-\infty}^{\infty} \mathbbm {1}[X_s' \geq x \geq X_s]\, d x \right]\text{,}\]
where $\mathbbm{1}[\cdot]$ denotes the indicator function. By linearity of expectation and the fact that the expectation of the indicator of an event is the probability of the event, we have,
\[\mathbb{E}\left[\texttt{top}_{\ell'} \left( X' - X \right)\right]\leq\sum_{s=1}^k \left( \int_{-\infty}^{\infty} \Pr[X_s' \geq x \geq X_s] \, d x \right)=\sum_{s=1}^k \left( \int_{-\infty}^{\infty} \Pr[X_s' \geq x] \, \Pr[X_s \leq x] \, d x \right)\text{,}\]
where the equality holds because $X_s$ and $X'_s$ are independent. Since $X_s$ and $X'_s$ are identically distributed, we have,
\begin{align}
\mathbb{E}\left[\texttt{top}_{\ell'} \left( X' - X \right)\right] &\leq \sum_{s=1}^k \left( \int_{-\infty}^{\infty} \Pr[X_s \geq x] \, \Pr[X_s \leq x] \, d x \right)\nonumber\\
&= \sum_{s=1}^k \left( \int_{-\infty}^{0} \Pr[X_s \geq x] \, \Pr[X_s \leq x] \, d x + \int_{0}^{\infty} \Pr[X_s \geq x] \, \Pr[X_s \leq x] \, d x \right)\nonumber\\
&\leq \sum_{s=1}^k \left( \int_{-\infty}^{0} \left(1 - \Pr[X_s \leq x]\right) \Pr[X_s \leq x] \, d x + \int_{0}^{\infty} \Pr[X_s \geq x] \, d x \right)\text{.}\label{eqn_opt1}
\end{align}
Since we are given a \emph{zero-expectation instance}, we have for all $s\in[k]$, 
\[0=\mu_s=\int_0^{\infty}\Pr[X_s \geq x]\, d x-\int_{-\infty}^0\Pr[X_s \leq x]\, d x\text{,}\]
and thus,
\[\int_{0}^{\infty} \Pr[X_s \geq x] \, d x = \int_{-\infty}^{0}\Pr[X_s \leq x]\, d x\text{.}\]
Substituting in \Cref{eqn_opt1}, we have,
\begin{align}
\mathbb{E}\left[\texttt{top}_{\ell'} \left( X' - X \right)\right] &\leq\sum_{s=1}^k \left( \int_{-\infty}^{0} \left(1 - \Pr[X_s \leq x]\right) \Pr[X_s \leq x] \, d x + \int_{-\infty}^{0}\Pr[X_s \leq x]\, d x \right)\nonumber\\
&= \int_{-\infty}^{0} \sum_s \left( 2 \, \Pr[X_s \leq x] - \Pr[X_s \leq x]^2\right) \, d x\text{.}\label{eqn_opt2}
\end{align}
Now, we have
\[0\leq\sum_{1\leq s<s'\leq k}(\Pr[X_s \leq x]-\Pr[X_{s'} \leq x])^2=(k-1)\sum_{s\in[k]}\Pr[X_s \leq x]^2-2\sum_{1\leq s<s'\leq k}\Pr[X_s \leq x]\cdot\Pr[X_{s'} \leq x]\text{,}\]
and thus,
\[\sum_{s\in[k]}\Pr[X_s \leq x]^2\geq\frac{2}{k-1}\sum_{1\leq s<s'\leq k}\Pr[X_s \leq x]\cdot\Pr[X_{s'} \leq x]\text{.}\]
Substituting in \Cref{eqn_opt2}, we get,
\begin{align*}
    \mathbb{E}\left[\texttt{top}_{\ell'} \left( X' - X \right)\right] &\leq  \int_{-\infty}^{0} \left( 2 \sum_s \Pr[X_s \leq x] - \frac{2}{k-1} \sum_{s<s'} \Pr[X_s \leq x] \Pr[X_{s'} \leq x] \right)\,d x\\
    &= \int_{0}^{\infty} \left( 2 \sum_s \Pr[X_s \leq -x] - \frac{2}{k-1} \sum_{s<s'} \Pr[X_s \leq -x] \Pr[X_{s'} \leq -x] \right)\,d x\\
    &= \int_{0}^{\infty} \left( 2 \sum_s \Pr[-X_s \geq x] - \frac{2}{k-1} \sum_{s<s'} \Pr[-X_s \geq x] \Pr[-X_{s'} \geq x] \right) \, d x\text{,}
\end{align*}
as required.
\end{proof}

In \Cref{lem:ALG}, we derive an expression for $\mathbb{E}\left[\texttt{top}_\ell \left(\mu-X \right)\right]$, the pre-time-step expected profit of the online \Cref{alg:Online} when the stock prices at any given time are independent.

\begin{lemma}
\label{lem:ALG}
For every independently distributed zero-expectation instance, the following equality holds:
\[ \mathbb{E}\left[\texttt{\textup{top}}_\ell \left(\mu-X \right)\right] = \int_{0}^{\infty} \sum_{S \subseteq[k]} \min \{|S|, \ell\} \, \left(\prod_{s \in S} \Pr[-X_s \geq x] \right) \cdot \left(\prod_{s' \notin S} \left(1 - \Pr[-X_{s'} \geq x]\right) \right) \, d x\text{.}\]
\end{lemma}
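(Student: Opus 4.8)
The plan is to mirror the integral-geometry argument used for the offline bound in \Cref{lem:OPT}, but now exactly rather than with inequalities, since \Cref{alg:Online} is optimal. First I would write $\mu-X=-X$ (we are on a zero-expectation instance) and use the layer-cake identity: for any weight vector $w$ with nonnegative part, $\texttt{top}_\ell(w)$ is the sum of the $\ell$ largest of $w_1^+,\ldots,w_k^+$, which can be expressed as $\int_0^\infty \min\{|S_x|,\ell\}\,dx$ where $S_x=\{s : w_s \geq x\}$ is the (random) set of coordinates exceeding the threshold $x$. Concretely, $\texttt{top}_\ell(-X)=\int_0^\infty \min\{\,|\{s: -X_s \geq x\}|,\ \ell\,\}\,dx$ with probability one; the point is that as $x$ ranges over $[0,\infty)$, each of the top $\ell$ positive coordinates contributes its full height, and the "$\min$ with $\ell$" caps the count exactly as $\texttt{top}_\ell$ caps the cardinality of the chosen set.

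Next I would take expectations and push the expectation inside the integral (Tonelli, since the integrand is nonnegative), giving $\mathbb{E}[\texttt{top}_\ell(-X)] = \int_0^\infty \mathbb{E}\big[\min\{|S_x|,\ell\}\big]\,dx$. Then I would expand $\mathbb{E}[\min\{|S_x|,\ell\}]$ by conditioning on exactly which coordinates exceed the threshold: for a fixed $x$, $|S_x|=|S|$ precisely on the event $\bigcap_{s\in S}\{-X_s\geq x\}\cap\bigcap_{s'\notin S}\{-X_{s'}<x\}$, and because the instance is \emph{independently distributed} this event has probability $\big(\prod_{s\in S}\Pr[-X_s\geq x]\big)\big(\prod_{s'\notin S}(1-\Pr[-X_{s'}\geq x])\big)$. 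Summing $\min\{|S|,\ell\}$ weighted by these probabilities over all $S\subseteq[k]$ gives exactly $\mathbb{E}[\min\{|S_x|,\ell\}]$, and substituting back under the integral yields the claimed formula.

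The only genuinely delicate point is justifying the layer-cake identity $\texttt{top}_\ell(w)=\int_0^\infty\min\{|\{s:w_s\geq x\}|,\ell\}\,dx$ carefully; everything else is routine bookkeeping. I would verify it by sorting $w_1^+\geq w_2^+\geq\cdots\geq w_k^+$, observing that $\min\{|\{s:w_s\geq x\}|,\ell\}$ as a function of $x$ is a nonincreasing step function equal to $\min\{j,\ell\}$ on the interval $[w_{j+1}^+,w_j^+)$, and integrating to get $\sum_{j=1}^{\min\{\text{(number of positive }w_s),\,\ell\}} w_j^+$, which is precisely $\texttt{top}_\ell(w)$ (sum of the $\ell$ largest positive coordinates, or of all positive coordinates if fewer than $\ell$ are positive, matching the definition in the preliminaries). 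A minor caveat: this needs the threshold event to have well-defined probability and the integral to be finite, which follows from the assumption that the distribution has a well-defined expectation. With the identity in hand the proof is a two-line computation, so I do not anticipate any real obstacle beyond stating the identity cleanly.
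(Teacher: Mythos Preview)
Your proposal is correct and essentially identical to the paper's proof: the paper introduces the order statistics $M_i$ (the $i$th largest of the $(-X_s)^+$), writes $\texttt{top}_\ell(-X)=\sum_{i\le\ell}M_i=\int_0^\infty\sum_{i\le\ell}\mathbbm{1}[M_i\ge x]\,dx$, and then observes $\sum_{i\le\ell}\mathbbm{1}[M_i\ge x]=\min\{|\mathcal{S}(x)|,\ell\}$ with $\mathcal{S}(x)=\{s:-X_s\ge x\}$, which is exactly your layer-cake identity. One small wording slip: you wrote ``$|S_x|=|S|$ precisely on the event\ldots'' when you mean ``$S_x=S$ precisely on the event\ldots''; the computation you then perform (summing over all $S\subseteq[k]$) is the correct one.
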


\begin{proof}
In a \emph{zero-expectation instance}, we have $\mu_s=0$ for all $s\in[k]$. Therefore, we have,
\[\mathbb{E}\left[\texttt{top}_\ell \left(\mu-X \right)\right]=\mathbb{E}\left[\texttt{top}_\ell \left( -X \right)\right]\text{.}\]
Define random variables $M_1,\ldots,M_{\ell}$ as follows: $M_i$ is the $i$'th largest value in the multi-set $\{-X_s\mid s\in[k]\}$ if this value is positive, and $0$ otherwise. Observe that $\texttt{top}_\ell \left( -X \right)=\sum_{i=1}^{\ell}M_i$. Therefore,
\[\mathbb{E}\left[\texttt{top}_\ell \left(-X \right)\right]=\mathbb{E}\left[\sum_{i=1}^{\ell}M_i\right]=\sum_{i=1}^\ell \mathbb{E} \left[ M_i\right]= \sum_{i=1}^\ell \int_{0}^{\infty} \Pr \left[ M_i \geq x\right] \, d x=\int_{0}^{\infty} \sum_{i=1}^\ell \Pr \left[ M_i \geq x\right] \, d x\text{,}\]
where the second equality holds by linearity of expectation, and the third one holds because $M_i$ is non-negative with probability one. Using the fact that the probability of an event is the expectation of its indicator, we get,
\begin{equation}\label{eqn_alg}
\mathbb{E}\left[\texttt{top}_\ell \left(-X \right)\right]=\int_{0}^{\infty} \sum_{i=1}^\ell \mathbb{E} \left[ \mathbbm{1} \left[ M_i \geq x\right] \right]\, d x= \int_{0}^{\infty} \mathbb{E} \left[ \sum_{i=1}^\ell \mathbbm{1} \left[ M_i \geq x\right] \right]\, d x\text{,}
\end{equation}
where the second equality holds by linearity of expectation. For $x\in\mathbb{R}_+$, let the set-valued random variable $\mathcal{S}(x)$ be defined as $\mathcal{S}(x) = \{s\in[k]\mid -X_s\geq x\}$. Since $X_s$'s are independent, we have that for every $S\subseteq[k]$,
\[\Pr[\mathcal{S}(x)=S]=\left(\prod_{s \in S} \Pr[-X_s \geq x]\right)\cdot\left(\prod_{s' \notin S} \left(1 - \Pr[-X_{s'} \geq x]\right)\right)\text{.}\]
Conditioned on the event $\mathcal{S}(x)=S$, the event $M_i\geq x$ happens if and only if $|S|\geq i$, and hence, $\sum_{i=1}^\ell \mathbbm{1} \left[ M_i \geq x\right]=\min\{|S|,\ell\}$ with probability one. Using this fact and the expression for $\Pr[\mathcal{S}(x)=S]$ in \Cref{eqn_alg}, we get,
\begin{align*}
\mathbb{E}\left[\texttt{top}_\ell \left(-X \right)\right] &= \int_{0}^{\infty} \sum_{S\subseteq[k]}\mathbb{E} \left[ \sum_{i=1}^\ell \mathbbm{1} \left[ M_i \geq x\right] \bigg| \mathcal{S}(x)=S\right]\cdot\Pr[\mathcal{S}(x)=S]\, d x\\
&= \int_{0}^{\infty} \sum_{S \subseteq[k]} \min \{|S|, \ell\} \, \left(\prod_{s \in S} \Pr[-X_s \geq x] \right) \cdot \left(\prod_{s' \notin S} \left(1 - \Pr[-X_{s'} \geq x]\right) \right) \, d x\text{,}
\end{align*}
as required.
\end{proof}

Next, we derive an inequality that will be used to relate $\mathbb{E}\left[\texttt{top}_{\ell'} \left( X' - X \right)\right]$ and $\mathbb{E}\left[\texttt{top}_\ell \left(\mu - X \right)\right]$ in Theorem~\ref{theorem1}.

\begin{lemma}
\label{inequality1}
    For every $k$, $\ell$ and $\{a_i\}_{i \in [k]} \in [0,1]^k$, the following inequality holds:
    \[ \max \left\{2, \frac{k}{\ell} \right\} \cdot \left(\sum_{S \subseteq [k]} \min \{|S|, \ell\} \, \prod_{s \in S} a_s \prod_{s' \notin S} \left(1 - a_{s'}\right) \right) \geq \left(2 \, \sum_{s\in[k]} a_s - \frac{2}{k-1}\sum_{s,s'\in[k]\text{, }s<s'} a_s a_{s'} \right)\text{.} \]
\end{lemma}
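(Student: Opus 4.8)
The plan is to interpret both sides probabilistically: think of $a_1,\ldots,a_k$ as success probabilities of independent Bernoulli trials, let $T=\sum_s B_s$ be the number of successes, where $B_s \sim \mathrm{Bernoulli}(a_s)$ independently. Then the left-hand sum is exactly $\mathbb{E}[\min\{T,\ell\}]$, so the claim becomes
\[
\max\left\{2,\tfrac{k}{\ell}\right\}\cdot\mathbb{E}[\min\{T,\ell\}] \;\geq\; 2\sum_s a_s - \frac{2}{k-1}\sum_{s<s'} a_s a_{s'}.
\]
Write $A=\sum_s a_s = \mathbb{E}[T]$. I would first bound the right-hand side from above by something depending only on $A$. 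Since $\sum_{s<s'} a_s a_{s'} = \tfrac12\bigl(A^2 - \sum_s a_s^2\bigr)$ and $\sum_s a_s^2 \geq A^2/k$ (Cauchy--Schwarz), we get $\sum_{s<s'} a_s a_{s'} \leq \tfrac12 A^2(1-\tfrac1k) = \tfrac{A^2(k-1)}{2k}$, hence the RHS is at most $2A - \tfrac{A^2}{k}= A\bigl(2 - \tfrac{A}{k}\bigr)$. So it suffices to prove
\[
\max\left\{2,\tfrac{k}{\ell}\right\}\cdot\mathbb{E}[\min\{T,\ell\}] \;\geq\; A\Bigl(2-\tfrac{A}{k}\Bigr),
\qquad\text{where } A=\mathbb{E}[T]\in[0,k].
\]

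For the left side I need a lower bound on $\mathbb{E}[\min\{T,\ell\}]$ in terms of $A$. The function $t\mapsto\min\{t,\ell\}$ is concave, so I cannot directly use Jensen in the helpful direction; instead I would use the standard fact that for a sum of independent $[0,1]$-valued random variables, the distribution of $T$ is ``more spread out'' than a two-point distribution only in limited ways. The clean route: $\min\{t,\ell\} \geq t - \tfrac{t^2}{4\ell}$ for all $t\geq 0$ (since $\tfrac{t^2}{4\ell} - t + \min\{t,\ell\}\geq 0$: for $t\le 2\ell$ it is $\tfrac{(t-2\ell)^2}{4\ell}\cdot\tfrac{?}{}$... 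I'd verify the exact constant, but an inequality of the form $\min\{t,\ell\}\ge t - ct^2/\ell$ holds for a suitable constant $c$). Taking expectations, $\mathbb{E}[\min\{T,\ell\}] \geq A - \tfrac{\mathbb{E}[T^2]}{4\ell}$, and $\mathbb{E}[T^2] = \mathrm{Var}(T) + A^2 \leq A + A^2$ since $\mathrm{Var}(T) = \sum_s a_s(1-a_s)\leq A$. So $\mathbb{E}[\min\{T,\ell\}] \geq A - \tfrac{A+A^2}{4\ell}$.

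Then I would split into two regimes according to which term achieves the maximum. When $\ell \geq k/2$, i.e. $\max = 2$: I need $2\bigl(A - \tfrac{A+A^2}{4\ell}\bigr) \geq A(2-\tfrac K)$, equivalently $\tfrac{A^2}{k} \geq \tfrac{A+A^2}{2\ell}$, which (after dividing by $A$, handling $A=0$ separately) reduces to checking it for $\ell = k/2$ and all valid $A$ — a one-variable polynomial inequality. When $\ell < k/2$, i.e. $\max = k/\ell$: here I expect the crude quadratic bound on $\min\{T,\ell\}$ to be too lossy when $A$ is large, so I would instead also use the trivial bound $\mathbb{E}[\min\{T,\ell\}]\geq \ell\cdot\Pr[T\geq\ell]$ together with a lower-tail estimate, or more simply note $\mathbb{E}[\min\{T,\ell\}]\geq \tfrac{\ell}{k}\mathbb{E}[\min\{T,k\}]\cdot(\text{something})$ — actually the cleanest is: $\min\{T,\ell\} \geq \tfrac{\ell}{k} T$ whenever... no, that fails. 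The honest fallback: for $\ell<k/2$ prove $\tfrac{k}{\ell}\mathbb{E}[\min\{T,\ell\}] \geq \mathbb{E}[T] \geq A \geq A(2-\tfrac Ak)$ when $A\geq k$ (impossible) — so instead use $\tfrac{k}{\ell}\mathbb{E}[\min\{T,\ell\}]\geq \tfrac{k}{\ell}\cdot\tfrac{\ell}{k}\cdot(\text{direct combinatorial identity})$. Concretely, I claim $\mathbb{E}[\min\{T,\ell\}] \geq \tfrac{\ell}{k}A$ always (this follows because conditioning on $T$, $\min\{T,\ell\}\geq \tfrac\ell k T$ is false, but $\mathbb{E}[\min\{T,\ell\}]\ge\tfrac\ell k\mathbb E[T]$ can still hold — I would verify via the explicit formula $\mathbb{E}[\min\{T,\ell\}]=\sum_{j=0}^{\ell-1}\Pr[T>j]$ and a convexity/majorization argument). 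Granting this, $\tfrac{k}{\ell}\mathbb{E}[\min\{T,\ell\}]\geq A \geq A(2-\tfrac Ak)$ iff $A\geq k$ — false for $A<k$. So this is still not enough, and the real work is a combined estimate.

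\textbf{The main obstacle} is exactly the regime $\ell < k/2$ with $A$ of moderate size: the factor $k/\ell$ is large but $\mathbb{E}[\min\{T,\ell\}]$ can be as small as $\approx A$ (when all $a_s$ are tiny, $T$ concentrates below $\ell$), so I must show $\tfrac{k}{\ell}\cdot A \gtrsim A(2-\tfrac Ak)$, i.e. essentially $\tfrac k\ell \geq 2-\tfrac Ak$, which holds since $\tfrac k\ell > 2$. That actually works when $T$ stays below $\ell$! And when $A$ is large enough that truncation bites, $\mathbb{E}[\min\{T,\ell\}]$ is close to $\ell$ while $A(2-A/k)\leq k/2 \cdot \text{stuff}$, bounded, so $\tfrac k\ell\cdot\ell = k$ dominates. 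So the clean plan: prove $\mathbb{E}[\min\{T,\ell\}]\geq \min\{A, \text{(lower bound near }\ell)\}$ and check the two cases. I would formalize this by the pointwise bound $\min\{T,\ell\}\ge$ (a piecewise-linear concave lower envelope matching the behavior), take expectations, and finish with two short case analyses — the $\ell<k/2$ case being where I'd spend the most care to get the constants to line up, likely using $\mathbb{E}[\min\{T,\ell\}]\geq A - \tfrac{\mathrm{Var}(T)+ \max\{A-\ell,0\}^2 \text{-type term}}{\ell}$ sharpened appropriately.
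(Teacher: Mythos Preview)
Your reduction in the first paragraph has the inequality backwards. You write that Cauchy--Schwarz gives $\sum_s a_s^2 \geq A^2/k$, hence $\sum_{s<s'} a_s a_{s'} \leq \tfrac{A^2(k-1)}{2k}$; this is correct, but substituting into
\[
\text{RHS} \;=\; 2A - \frac{2}{k-1}\sum_{s<s'} a_s a_{s'}
\]
yields a \emph{lower} bound $\text{RHS}\geq 2A - A^2/k$, not the upper bound you claim. (Concretely, take $k=2$, $a_1=1$, $a_2=0$: then $A=1$, $\text{RHS}=2$, but $A(2-A/k)=3/2$.) So the ``sufficient condition'' you set out to prove, $\max\{2,k/\ell\}\cdot\mathbb{E}[\min\{T,\ell\}]\geq A(2-A/k)$, is not sufficient at all, and everything downstream --- the quadratic lower bound on $\min\{t,\ell\}$, the variance estimate, the case split --- is aimed at the wrong target. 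The subsequent discussion is also largely heuristic (several ``I would verify'' and one claim you yourself note is false pointwise), so even modulo the sign error the argument is far from complete.

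The paper's proof avoids all analytic estimates by a single structural observation: both sides of the inequality are \emph{multilinear} in $a_1,\ldots,a_k$ (each $a_i$ appears to degree at most one in every monomial), so the difference $f(a)=\text{LHS}-\text{RHS}$ attains its minimum on a vertex of $[0,1]^k$. It therefore suffices to check $f(a)\geq 0$ for $a\in\{0,1\}^k$, where if $z$ coordinates equal $1$ the inequality collapses to
\[
\max\{2,k/\ell\}\cdot\min\{z,\ell\} \;\geq\; 2z - \frac{z(z-1)}{k-1},
\]
a one-variable inequality that is immediate for $z\leq\ell$ and follows from a short quadratic check for $z>\ell$. Your probabilistic interpretation $\mathbb{E}[\min\{T,\ell\}]$ is correct and is a good way to \emph{see} the multilinearity (the Bernoulli coupling makes each $a_i$ linear), but the right move is to exploit that structure rather than to chase moment bounds.
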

\begin{proof}
    Define $f:[0,1]^k\longrightarrow\mathbb{R}$ as
\begin{multline*}
f(a_1,\ldots, a_k)=\max \left\{2, \frac{k}{\ell} \right\} \cdot \left(\sum_{S \subseteq [k]} \min \{|S|, \ell\} \, \prod_{s \in S} a_s \prod_{s' \notin S} \left(1 - a_{s'}\right) \right) \\ - \left(2 \, \sum_{s\in[k]} a_s - \frac{2}{k-1}\sum_{s,s'\in[k]\text{, }s<s'} a_s a_{s'} \right)\text{.}
\end{multline*}
We need to prove that $f$ is non-negative on $[0,1]^n$. Since $f$ is multilinear in $a_1,\ldots,a_k$, it is sufficient to prove that $f$ is non-negative on the corners of its domain $[0,1]^n$, that is, $f(a_1,\ldots,a_k)\geq0$ for all $(a_1,\ldots, a_k)\in\{0,1\}^k$ (see Corollary 2.3 of \cite{laneve2010interval}).

Consider an arbitrary $(a_1,\ldots, a_k)\in\{0,1\}^k$. Let $z$ be the number of $a_s$'s are 1 so that $k-z$ is the number of the $a_s$'s are 0. Thus $z\in\{0,\ldots,k\}$. Also,
\[f(a_1,\ldots, a_k)=\max \left\{2, \frac{k}{\ell} \right\}\cdot\min \{z, \ell\} - 2z + \frac{2}{k-1}\cdot\binom{z}{2} = \max \left\{2, \frac{k}{\ell} \right\}\cdot\min \{z, \ell\} - 2z + \frac{z(z-1)}{k-1}\text{.} \] 
If $z\leq\ell$, then
\[f(a_1,\ldots,a_k)=\max \left\{2, \frac{k}{\ell} \right\}\cdot z - 2z + \frac{z(z-1)}{k-1}\geq\frac{z(z-1)}{k-1}>0\text{.}\]
If $z>\ell$, then 
\[f(a_1,\ldots,a_k)=\max \left\{2, \frac{k}{\ell} \right\}\cdot \ell - 2z + \frac{z(z-1)}{k-1}\geq\frac{k}{\ell}\cdot \ell - 2z + \frac{z(z-1)}{k-1}=k-\frac{z(2k-z-1)}{k-1}\text{.}\]
It is easy to check that the function $z\longmapsto k-z(2k-z-1)/(k-1)$ is $0$ at $z=k$ and $z=k-1$, and is decreasing in $[0,k-1]$. Thus $k-z(2k-z-1)/(k-1)\geq0$ for all $z\in[k]$, which implies $f(a_1,\ldots,a_k)\geq0$, as required.
\end{proof}
Having gathered all the necessary prerequisites, we are now prepared to present Theorem~\ref{theorem1}, which ensures the desired competitive ratio.
\begin{theorem}[Restated Theorem~\ref{theorem1}]
    For every $k$, $\ell$, and $\ell'$, \Cref{alg:Online} is a $\min\left\{\frac{1}{2},\frac{\ell}{k}\right\}$-competitive algorithm for the \ProblemTwo{} when the instance is independently distributed (refer \Cref{independently_distributed}).
\end{theorem}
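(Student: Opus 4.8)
The plan is to combine the three lemmas just established. By Corollaries~\ref{lem_k_l_opt} and~\ref{lem_k_l_alg}, it suffices to show that for every independently distributed zero-expectation instance,
\[\mathbb{E}\left[\texttt{top}_{\ell'}\left(X'-X\right)\right] \leq \max\left\{2,\tfrac{k}{\ell}\right\}\cdot\mathbb{E}\left[\texttt{top}_\ell\left(\mu-X\right)\right],\]
since $\min\{\tfrac12,\tfrac{\ell}{k}\} = 1/\max\{2,k/\ell\}$, and by Proposition~\ref{zeroExp} it is enough to treat zero-expectation instances. First I would invoke Lemma~\ref{lem:OPT} to upper-bound the left-hand side by the integral
\[\int_0^\infty\left(2\sum_s\Pr[-X_s\geq x]-\tfrac{2}{k-1}\sum_{s<s'}\Pr[-X_s\geq x]\Pr[-X_{s'}\geq x]\right)dx,\]
and Lemma~\ref{lem:ALG} to rewrite the right-hand side (up to the factor $\max\{2,k/\ell\}$) as
\[\max\left\{2,\tfrac{k}{\ell}\right\}\int_0^\infty\sum_{S\subseteq[k]}\min\{|S|,\ell\}\left(\prod_{s\in S}\Pr[-X_s\geq x]\right)\left(\prod_{s'\notin S}\left(1-\Pr[-X_{s'}\geq x]\right)\right)dx.\]

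Next I would apply Lemma~\ref{inequality1} pointwise in $x$: for each fixed $x\geq 0$, set $a_s := \Pr[-X_s\geq x]\in[0,1]$, and Lemma~\ref{inequality1} gives exactly that the integrand on the algorithm's side (times $\max\{2,k/\ell\}$) dominates the integrand on the optimum's side. Integrating this inequality over $x\in[0,\infty)$ yields the desired comparison of expectations, and hence the competitive ratio $\min\{\tfrac12,\tfrac{\ell}{k}\}$. The reduction to $\ell'=\ell$ is automatic since $\texttt{top}_{\ell'}(X'-X)\leq\texttt{top}_\ell(X'-X)$ is not even needed here — Lemma~\ref{lem:OPT} already bounds $\mathbb{E}[\texttt{top}_{\ell'}(X'-X)]$ by a quantity independent of $\ell'$ (via $\texttt{top}_{\ell'}(y)\leq\sum_s y_s^+$), so the argument handles all $\ell'\leq\ell$ uniformly.

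The only real work has already been front-loaded into the three lemmas, so the theorem itself is essentially an assembly step; the one thing to be careful about is justifying that the pointwise inequality may be integrated, i.e., that both integrands are nonnegative and integrable — nonnegativity of the algorithm's integrand is clear, and integrability follows since both sides are finite (they are dominated by $\mathbb{E}[\sum_s(X'_s-X_s)^+]<\infty$, which holds because the instance has a well-defined expectation). I do not anticipate a genuine obstacle; if anything, the subtle point is keeping the roles of $\Pr[-X_s\geq x]$ versus $\Pr[X_s\leq -x]$ consistent across the two lemmas, but Lemma~\ref{lem:OPT} is already stated in the $\Pr[-X_s\geq x]$ form, matching Lemma~\ref{lem:ALG}, so the substitution into Lemma~\ref{inequality1} is immediate.
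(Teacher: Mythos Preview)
Your proposal is correct and follows essentially the same approach as the paper: invoke Lemma~\ref{lem:OPT} and Lemma~\ref{lem:ALG} to express the offline and online per-step expected profits as integrals, apply Lemma~\ref{inequality1} pointwise in $x$ with $a_s=\Pr[-X_s\geq x]$, and integrate. The paper's proof is exactly this assembly step, so there is no divergence to discuss.
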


\begin{proof}
    In \Cref{lem:OPT}, we obtained the following upper bound on the per-time-step expected profit of offline algorithm:
     \[\mathbb{E}\left[\texttt{top}_{\ell'} \left( X' - X \right)\right] \leq \int_{0}^{\infty} \left( 2 \sum_s \Pr[-X_s \geq x] - \frac{2}{k-1} \sum_{s<s'} \Pr[-X_s \geq x] \Pr[-X_{s'} \geq x] \right) \, d x \text{.}\]
    In \Cref{lem:ALG}, we obtained the following expression for the per-time-step expected profit of \Cref{alg:Online}:
     \[ \mathbb{E}\left[\texttt{top}_\ell \left(\mu - X \right)\right] = \int_{0}^{\infty} \sum_{S \subseteq[k]} \min \{|S|, \ell\} \, \left(\prod_{s \in S} \Pr[-X_s \geq x] \right) \cdot \left(\prod_{s' \notin S} \left(1 - \Pr[-X_{s'} \geq x]\right) \right) \, d x \text{.}\] 
For an arbitrary $x\geq0$, using \Cref{inequality1} with $a_s = \Pr [-X_s \geq x]$ for all $s \in [k]$, we get,
\begin{multline*}
   \max \left\{2, \frac{k}{\ell} \right\} \cdot \sum_{S \subseteq[k]} \min \{|S|, \ell\} \, \left(\prod_{s \in S} \Pr[-X_s \geq x] \right) \cdot \left(\prod_{s' \notin S} \left(1 - \Pr[-X_{s'} \geq x]\right) \right) \\
 \geq 2 \sum_s \Pr[-X_s \geq x] - \frac{2}{k-1} \sum_{s<s'} \Pr[-X_s \geq x] \Pr[-X_{s'} \geq x]\text{.}
\end{multline*}
Integrating both sides of the above inequality with respect to $x$ from $0$ to $\infty$, the left-hand-side is $\max \left\{2, \frac{k}{\ell} \right\}$ times $\mathbb{E}\left[\texttt{top}_\ell \left(\mu - X \right)\right]$, and the right-hand-side is the upper bound on $\mathbb{E}\left[\texttt{top}_{\ell'} \left( X' - X \right)\right]$. Therefore, we get,
\[\max \left\{2, \frac{k}{\ell} \right\}\cdot\mathbb{E}\left[\texttt{top}_\ell \left(\mu - X \right)\right] \geq \mathbb{E}\left[\texttt{top}_{\ell'} \left( X' - X \right)\right]\text{.}\]
Thus, \Cref{alg:Online} is $\min\left\{\frac{1}{2},\frac{\ell}{k}\right\}$-competitive.
\end{proof}

\section{Non-i.i.d.\ Random order Trading Prophet Problem}
\label{sec:random_order}
An instance of the non-i.i.d.\ random order matroid trading prophet is defined by a matroid $\mathcal{M}$ on the ground set $\{1, \ldots, k\}$ for some $k \in \mathbb{N}$, and $n$ (possibly distinct) joint distributions of stock prices over $\mathbb{R}^k$. Price vectors drawn independently from these $n$ distributions are presented in a uniformly random order to a trading algorithm.  
As before, the algorithm holds a feasible set of $\mathcal{M}$ at all times, and at each time step, it must decide which subset of currently held stocks to sell and which subset of stocks not currently held to buy. Again, not selling a held stock is equivalent to selling it and repurchasing it at the same price within the same time step. Therefore, we may assume that at every time step, the algorithm sells all its currently held stocks, and then selects a subset $S$ to buy, ensuring that $S$ is a feasible set of $\mathcal{M}$.

Let $F^1, \ldots, F^n$ denote the joint CDFs of $n$ distributions over $\mathbbm{R}^k$. Let $\sigma = (\sigma(1), \ldots, \sigma(n))$ be a uniformly random permutation of the set $\{1, \ldots, n\}$ and let $X^1, \ldots, X^n$ be the sequence of the revealed prices, where each $X^t$ is the price vector drawn independently from the distribution whose CDF is $F^{\sigma(t)}$.

Consider the equivalent setup where at time step $t$, the algorithm sells all currently held stocks and then buys a subset of stocks. Suppose it buys a subset $S$ of stocks at time $t$ and sells it at time $t+1$. Its profit will be $\sum_{s \in S} (X^{t+1}_s-X^t_s)$, and therefore, at every time $t$, the optimal algorithm buys a maximum weight feasible subset of stocks with respect to the weight function $X^{t+1}-X^t$. Thus, the expression for the total expected profit of the optimal offline algorithm (denoted as OPT) is
\[\text{OPT} = \sum_{t=1}^{n-1} \mathbb{E} \left[\texttt{top}_{\mathcal{M}} \left(X^{t+1} - X^t\right) \right]\text{.}\]
Observe that for each $t$, $\sigma(t)$ is uniformly distributed over $\{1,\ldots,n\}$, while conditioned on $\sigma(t)$, $\sigma(t+1)$ is uniformly distributed over $\{1,\ldots,n\}\setminus\{\sigma(t)\}$. Thus, for each $t$, the pair $(X^{t+1}, X^t)$ is identically distributed as the pair $(X^2, X^1)$. Consequently, we have,
\begin{align*}
    \text{OPT} = (n-1) \, \mathbb{E} \left[\texttt{top}_{\mathcal{M}} \left(X^{2} - X^1\right) \right]\text{.}
\end{align*}
Therefore, the per-time-step expected profit of the offline optimal algorithm is $\mathbb{E} \left[\texttt{top}_{\mathcal{M}} \left(X^{2} - X^1\right) \right]$.

To motivate further analysis, we first describe our online algorithm informally. Observe that the random variables $X^1,\ldots,X^n$ are identically distributed, and the marginal distribution of each of them is the mixture of the $n$ distributions specified by CDFs $F^1,\ldots,F^n$. However, these random variables are not independent. We plan to reuse \Cref{alg:Online} with its parameter $F$ being the CDF of each of the $X^t$'s (i.e., $F=(F^1+\cdots+F^n)/n$). However, the competitive guarantee of \Cref{alg:Online} holds only when the online price vectors passed to it are sampled independently from a fixed distribution. Therefore, in order to make use of the competitive guarantee, it becomes necessary to relate the actual profit of the online algorithm as well as the optimal offline algorithm with the corresponding profit if the price vectors were to be sampled independently from the distribution with CDF $F$.

Let us first relate the actual per-time-step expected profit of the optimal offline algorithm with the per-time-step expected profit if the prices were to be independent. In other words, we relate $\mathbb{E} \left[\texttt{top}_{\mathcal{M}} \left(X^2 - X^1\right) \right]$ with $\mathbb{E} \left[\texttt{top}_{\mathcal{M}} \left(\Tilde{X}^2 - X^1\right) \right]$, where $\Tilde{X}^2$ is an appropriately constructed random variable independent of and identically distributed as $X^1$. To facilitate analysis, we construct the random variable $\Tilde{X}^2$ in such a way that it is heavily correlated with $X^2$, as follows. We toss a biased coin that lands heads with probability $1/n$ and tails with probability $1-1/n$. If the coin lands heads, we set $\Tilde{X}^2$ to be an independent sample drawn from the distribution whose joint CDF is $F^{\sigma(1)}$. If the coin lands tails, we set $\Tilde{X^2} = X^2$. Thus, we have,

\begin{Observation}
\label{obs_unconditional}
$X^1$ and $X^2$ are samples from two \textbf{distinct} distributions out of $F^1, \ldots, F^n$ chosen uniformly at random. In contrast, $X^1$ and $\Tilde{X^2}$ are samples from two \textbf{not necessarily distinct} distributions out of $F^1, \ldots, F^n$ chosen uniformly at random, and are, therefore, independent and identically distributed.
\end{Observation}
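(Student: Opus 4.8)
The plan is to keep track, for each of the three price vectors $X^1$, $X^2$, and $\Tilde{X}^2$, of the random index in $[n]$ identifying the distribution from which it is drawn, and to argue that conditioned on these indices the vectors are mutually independent samples from the corresponding CDFs. The first sentence is essentially immediate: since $\sigma$ is a uniformly random permutation of $[n]$, the ordered pair $(\sigma(1),\sigma(2))$ is uniformly distributed over all ordered pairs of \emph{distinct} elements of $[n]$, and by construction, conditioned on $(\sigma(1),\sigma(2))$, the vectors $X^1$ and $X^2$ are independent samples from $F^{\sigma(1)}$ and $F^{\sigma(2)}$ respectively. This is exactly the assertion that $X^1,X^2$ are drawn from two distinct, uniformly random distributions out of $F^1,\dots,F^n$.

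For the second sentence, I would let $J$ be the (random) index such that $\Tilde{X}^2$ is a sample from $F^J$; thus $J=\sigma(1)$ if the biased coin lands heads, and $J=\sigma(2)$ if it lands tails. The crux is to show that $J$ is uniform on $[n]$ and independent of $\sigma(1)$. Conditioning on $\sigma(1)=i$ and using that the coin toss is independent of $\sigma$: with probability $1/n$ the coin lands heads and $J=i$; with probability $1-1/n$ it lands tails, in which case $J=\sigma(2)$ is uniform over $[n]\setminus\{i\}$, so each $j\neq i$ receives probability $(1-1/n)\cdot\tfrac{1}{n-1}=\tfrac{1}{n}$. Hence $\Pr[J=j\mid\sigma(1)=i]=\tfrac1n$ for every $j$ and every $i$, so $(\sigma(1),J)$ is uniformly distributed over $[n]^2$.

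Finally, I would observe that conditioned on $(\sigma(1),J)=(i,j)$, the pair $(X^1,\Tilde{X}^2)$ is distributed as $F^i\times F^j$: $X^1$ depends only on $\sigma(1)$ and its own sampling randomness, while in the heads branch $\Tilde{X}^2$ is a fresh independent sample from $F^{\sigma(1)}$, and in the tails branch $\Tilde{X}^2=X^2$, which given $\sigma$ is independent of $X^1$ and drawn from $F^{\sigma(2)}=F^j$. Averaging over the uniform i.i.d.\ pair $(\sigma(1),J)$ on $[n]^2$ then shows that $(X^1,\Tilde{X}^2)$ has the law of two independent draws, each from the mixture CDF $F=(F^1+\cdots+F^n)/n$; in particular $X^1$ and $\Tilde{X}^2$ are i.i.d. I do not expect a genuine obstacle here; the only point requiring care is that the tails branch reuses $X^2$ (so that $\Tilde{X}^2$ stays heavily correlated with $X^2$, which is precisely what the later arguments exploit) without destroying conditional independence from $X^1$ — and this holds because $X^1$'s only dependence on the remaining randomness is through $\sigma(1)$.
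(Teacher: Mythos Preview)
Your argument is correct. The paper states this observation without proof, treating it as self-evident from the construction of $\Tilde{X}^2$; your careful verification that $(\sigma(1),J)$ is uniform on $[n]^2$ and that $(X^1,\Tilde{X}^2)$ has law $F^{\sigma(1)}\times F^J$ conditionally on these indices is exactly the justification the paper omits.
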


\begin{lemma}
\label{lem_matroid_random_1}
\[\mathbb{E}\left[\texttt{\textup{top}}_{\mathcal{M}}\left(\Tilde{X^2} - X^1 \right)\right] \geq \frac{n-1}{n}\cdot\mathbb{E}\left[\texttt{\textup{top}}_{\mathcal{M}}\left( X^2 - X^1 \right)\right]\text{.}\]
\end{lemma}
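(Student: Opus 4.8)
The plan is to condition on the auxiliary biased coin that is used to define $\Tilde{X^2}$. Write $C$ for the coin, so that $\Pr[C=\text{heads}]=\tfrac1n$ and $\Pr[C=\text{tails}]=\tfrac{n-1}{n}$, and note that $C$ is drawn independently of the permutation $\sigma$ and of the realized price vectors. By the law of total expectation,
\[
\mathbb{E}\!\left[\texttt{top}_{\mathcal{M}}\!\left(\Tilde{X^2}-X^1\right)\right]
=\tfrac1n\,\mathbb{E}\!\left[\texttt{top}_{\mathcal{M}}\!\left(\Tilde{X^2}-X^1\right)\mid C=\text{heads}\right]
+\tfrac{n-1}{n}\,\mathbb{E}\!\left[\texttt{top}_{\mathcal{M}}\!\left(\Tilde{X^2}-X^1\right)\mid C=\text{tails}\right].
\]
The first step is to discard the heads term: since $\emptyset$ is a feasible set of $\mathcal{M}$ with weight $0$ under any weight function, $\texttt{top}_{\mathcal{M}}(w)\geq 0$ for every $w$, so the heads term is nonnegative.

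The second step is to evaluate the tails term. On the event $\{C=\text{tails}\}$ we have $\Tilde{X^2}=X^2$ by construction, hence $\texttt{top}_{\mathcal{M}}(\Tilde{X^2}-X^1)=\texttt{top}_{\mathcal{M}}(X^2-X^1)$ pointwise on that event. Because $C$ is independent of $(\sigma,X^1,X^2)$, conditioning on $\{C=\text{tails}\}$ leaves the joint law of $(X^1,X^2)$ unchanged, so $\mathbb{E}[\texttt{top}_{\mathcal{M}}(X^2-X^1)\mid C=\text{tails}]=\mathbb{E}[\texttt{top}_{\mathcal{M}}(X^2-X^1)]$. Combining the two steps yields
\[
\mathbb{E}\!\left[\texttt{top}_{\mathcal{M}}\!\left(\Tilde{X^2}-X^1\right)\right]\;\geq\;\tfrac{n-1}{n}\,\mathbb{E}\!\left[\texttt{top}_{\mathcal{M}}\!\left(X^2-X^1\right)\right],
\]
which is the claim.

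I do not expect a serious obstacle here; the argument is essentially a one-line coupling. The only points that need care are making precise that the coin is independent of $(\sigma,X^1,X^2)$ so that conditioning on $\{C=\text{tails}\}$ is harmless, and invoking the nonnegativity of $\texttt{top}_{\mathcal{M}}$ correctly (it follows from $\emptyset\in\mathcal I$ alone, and does not even require looplessness).
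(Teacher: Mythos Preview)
Your proof is correct and follows the same approach as the paper: condition on the auxiliary coin, drop the nonnegative heads term, and use $\Tilde{X^2}=X^2$ on tails together with independence of the coin to recover the factor $\tfrac{n-1}{n}$. If anything, you are slightly more explicit than the paper in justifying both the nonnegativity of $\texttt{top}_{\mathcal{M}}$ (via $\emptyset\in\mathcal{I}$) and the removal of the conditioning on $\{C=\text{tails}\}$.
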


\begin{proof}
    Let $H$ denote the event that the outcome of the coin tossed to generate $\Tilde{X}^2$ is heads, and let $T$ denote the event that the outcome is tails. Then we have,
    \begin{align*}
        \mathbb{E}\left[\texttt{top}_{\mathcal{M}}\left( \Tilde{X^2} - X^1 \right)\right] &= \mathbb{E}\left[\texttt{top}_{\mathcal{M}}\left( \Tilde{X^2} - X^1 \right) \middle|  H \right] \cdot \Pr[H] + \mathbb{E}\left[\texttt{top}_{\mathcal{M}}\left( \Tilde{X^2} - X^1 \right) \middle| T \right] \cdot \Pr[T] \\
        &\geq \mathbb{E}\left[\texttt{top}_{\mathcal{M}}\left( \Tilde{X^2} - X^1 \right) \middle| T \right] \cdot \Pr[T] \\
        &= \left(1 - \frac{1}{n}\right) \, \mathbb{E}\left[\texttt{top}_{\mathcal{M}}\left( X^2 - X^1 \right) \right]\text{,}
    \end{align*}
as required, where the inequality holds because $\texttt{top}_{\mathcal{M}}\left( \Tilde{X^2} - X^1 \right)$ is a non-negative random variable.
\end{proof}

As stated earlier, our online algorithm for non-i.i.d.\ random order trading prophet is \Cref{alg:Online} run with the parameter $F = (F^1 + \cdots + F^n)/n$, the marginal CDF of each $X^t$. Let $\mu$ be the expectation of each $X^t$. Our online algorithm is stated formally as \Cref{alg:rand}. For analysis, let $\mu^i$ denote the expectation of the distribution with CDF $F^i$, and let $\overline{\mu}^i$ denote the expectation of the mixture of $n-1$ distributions whose CDFs are $\{F^1, \ldots, F^n\}\setminus\{F_i\}$. Thus, we have
\begin{equation}\label{eqn_mu1}
\mu=\frac{\sum_{j\in[n]}\mu^j}{n}\text{,}
\end{equation}
and $\overline{\mu}^i=(\sum_{j\in[n]\setminus\{i\}}\mu^j)/(n-1)$, which imply that
\begin{equation}\label{eqn_mu2}
\mu =\frac{\mu^i+(n-1)\overline{\mu}^i}{n}\text{.}
\end{equation}

\begin{algorithm}[t]
\DontPrintSemicolon
  \KwIn{Matroid $\mathcal{M} = ([k], \mathcal{I})$, joint CDFs $F^1,\ldots,F^n:\mathbb{R}^k\longrightarrow[0,1]$.}
  Compute $\mu$, the expectation of the distribution on $\mathbb{R}^k$ whose joint CDF is $(F^1+\cdots+F^n)/n$.\;
  \For{$t = 1$ \KwTo $n$}{
        Read the stock price vector $X^t=(X^t_1,X^t_2,\dots,X^t_k)$.\;
        Sell all currently held stocks at the price vector $X^t$.\;
        $H^t \gets$ maximum weight feasible set of $\mathcal{M}$ with respect to the weight function $\mu - X^t$.\;
        Buy all stocks in $H^t$ at the price vector $X^t$.\;
  }
   \caption{Online algorithm for non-i.i.d.\ random order trading prophet problem.}
   \label{alg:rand}
\end{algorithm}

Consider an arbitrary time $t$ at which the online algorithm receives the price vector $X^t$ sampled from $F^{\sigma(t)}$. After selling all the held stocks, the algorithm buys a subset $H^t$ of stocks that is feasible in $\mathcal{M}$ and that maximizes the weight function $\mu-X^t$. Thus, $\sum_{j\in H^t}(\mu_j-X^t_j)=\texttt{top}_{\mathcal{M}}\left( \mu - X^t \right)$, and this is exactly the profit the algorithm expects to obtain at the next instant of time, if it were to run on i.i.d.\ price vectors drawn from the mixture distribution. However, in reality, conditioned on the value $\sigma(t)$, the distribution of $X^{t+1}$ is the mixture of all $F^i$'s except $F^{\sigma(t)}$, and its expectation is $\overline{\mu}^{\sigma(t)}$, not $\mu$. Thus, conditioned on $\sigma(t)$ and $X^t$, the algorithm's expected profit from selling the set $H^t$ of stocks at time $t+1$ is $\sum_{j\in H^t}(\overline{\mu}^{\sigma(t)}_j-X^t_j)$. Therefore, the expected total profit of the online algorithm is given by

\[\text{ALG} = \sum_{t=1}^{n-1}\mathbb{E}_{\sigma(t),X^t}\left[ \sum_{j \in H^t}(\overline{\mu}^{\sigma(t)}_j - X_j^t) \right]\text{.}\]
Observe that the random variables $\sum_{j \in H^t}(\overline{\mu}^{\sigma(t)}_j - X_j^t)$ are identically distributed, and therefore,
\[\text{ALG} = (n-1) \cdot \mathbb{E}_{\sigma(1),X^1}\left[ \sum_{j \in H^1}(\overline{\mu}^{\sigma(1)}_j - X_j^1) \right]\text{.}\]
Thus, the per-time-step expected profit of \Cref{alg:rand} is $\mathbb{E}_{\sigma(1),X^1}\left[ \sum_{j \in H^1}(\overline{\mu}^{\sigma(1)}_j - X^1_j) \right]$. Our goal is to relate this quantity to $\mathbb{E}_{X^1}\left[\texttt{\textup{top}}_{\mathcal{M}} \left( \mu - X^1 \right)\right]=\mathbb{E}_{X^1}\left[ \sum_{j \in H^1}(\mu_j - X^1_j) \right]$, the per-time-step expected profit of \Cref{alg:Online} if the price vectors were to be drawn independently from the mixture distribution. Towards this, we first show that for a large $n$, the random variable $\overline{\mu}^i$ for a uniformly random $i$ drawn from $[n]$ is close to its expectation $\mu$ in $1$-norm. Specifically,

\begin{lemma}
    \label{lem_matroid_random_3}
    \[
    \mathbb{E}_{i\sim[n]} \left[\sum_{j\in[k]} |\mu_j - \overline{\mu}^i_j| \right] \leq \frac{2d}{n-1}\cdot\mathbb{E}_{X^1}\left[ \texttt{\textup{top}}_{\mathcal{M}}  \left( \mu - X^1 \right) \right].
    \]
\end{lemma}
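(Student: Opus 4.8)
The plan is to express the left-hand side directly in terms of the per-distribution means $\mu^i$, and then bound $\sum_{i,j}|\mu^i_j-\mu_j|$ from above using $\texttt{top}_{\mathcal{M}}(\mu-X^1)$ via \Cref{lemma:matroid2} together with a mean-zero symmetrization identity.

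First I would unwind the definitions. Using \Cref{eqn_mu2}, $n\mu=\mu^i+(n-1)\overline{\mu}^i$, so coordinatewise $\mu-\overline{\mu}^i=(\mu^i-\mu)/(n-1)$. Hence
\[
\mathbb{E}_{i\sim[n]}\left[\sum_{j\in[k]}|\mu_j-\overline{\mu}^i_j|\right]=\frac{1}{n(n-1)}\sum_{i\in[n]}\sum_{j\in[k]}|\mu^i_j-\mu_j|\text{,}
\]
so it suffices to prove $\frac{1}{n}\sum_{i,j}|\mu^i_j-\mu_j|\le 2d\cdot\mathbb{E}_{X^1}\!\left[\texttt{top}_{\mathcal{M}}(\mu-X^1)\right]$.

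Next I would lower-bound the right-hand side. Applying \Cref{lemma:matroid2} to the weight function $\mu-X^1$ gives $\sum_j(\mu_j-X^1_j)^+\le d\cdot\texttt{top}_{\mathcal{M}}(\mu-X^1)$ with probability one, hence $\mathbb{E}_{X^1}[\texttt{top}_{\mathcal{M}}(\mu-X^1)]\ge\frac{1}{d}\sum_j\mathbb{E}[(\mu_j-X^1_j)^+]$. Since $\mu_j=\mathbb{E}[X^1_j]$ by the very definition of $\mu$ as the expectation of the mixture $(F^1+\cdots+F^n)/n$, the random variable $W_j=\mu_j-X^1_j$ has mean zero; writing $y=y^+-(-y)^+$ and $|y|=y^++(-y)^+$ yields $\mathbb{E}[(\mu_j-X^1_j)^+]=\tfrac12\mathbb{E}|\mu_j-X^1_j|$. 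Finally, since $X^1$ is distributed as the mixture $\tfrac1n\sum_iF^i$, we have $\mathbb{E}|\mu_j-X^1_j|=\tfrac1n\sum_i\mathbb{E}_{Y\sim F^i}|\mu_j-Y_j|\ge\tfrac1n\sum_i|\mu_j-\mu^i_j|$ by Jensen's inequality. Chaining these gives $\mathbb{E}_{X^1}[\texttt{top}_{\mathcal{M}}(\mu-X^1)]\ge\frac{1}{2dn}\sum_{i,j}|\mu_j-\mu^i_j|$, which combined with the identity from the first step finishes the proof.

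I do not expect a deep obstacle here; the only point requiring care is the passage from the $1$-norm deviation of the per-distribution means to the matroid objective $\texttt{top}_{\mathcal{M}}$. This is precisely where \Cref{lemma:matroid2} (to replace $\texttt{top}_{\mathcal{M}}$ by a coordinatewise positive-part sum) and the symmetrization identity $\mathbb{E}[W^+]=\tfrac12\mathbb{E}|W|$ valid for mean-zero $W$ do the work, with Jensen's inequality then pushing the expectation inside the absolute value to recover $\mu^i$ from $F^i$.
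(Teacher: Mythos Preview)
Your proof is correct, and it uses the same three ingredients as the paper --- the identity $\mu-\overline{\mu}^i=(\mu^i-\mu)/(n-1)$, the mean-zero symmetrization $\mathbb{E}[W^+]=\tfrac12\mathbb{E}|W|$, and \Cref{lemma:matroid2} --- but you assemble them in a different order. The paper applies symmetrization and \Cref{lemma:matroid2} at the level of the \emph{deterministic} means: it first turns $\mathbb{E}_{i}\bigl[|\mu_j-\mu^i_j|\bigr]$ into $2\,\mathbb{E}_{i}\bigl[(\mu_j-\mu^i_j)^+\bigr]$, then invokes \Cref{lemma:matroid2} with $w=\mu-\mu^i$ to obtain $\sum_j(\mu_j-\mu^i_j)^+\le d\cdot\texttt{top}_{\mathcal{M}}(\mu-\mu^i)$, and finally bounds $\texttt{top}_{\mathcal{M}}(\mu-\mu^i)$ by $\mathbb{E}\bigl[\texttt{top}_{\mathcal{M}}(\mu-X^1)\mid\sigma(1)=i\bigr]$ via the direct observation that the optimal feasible set $Y^i$ for $\mu-\mu^i$ is a fixed set whose weight under $\mu-X^1$ cannot exceed the optimum. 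You instead apply \Cref{lemma:matroid2} and the symmetrization directly to the \emph{random} weight function $\mu-X^1$, and then push the absolute value past the expectation using Jensen's inequality to recover $|\mu_j-\mu^i_j|$. Your route is arguably cleaner in that it avoids introducing the auxiliary sets $Y^i$, at the cost of an explicit appeal to Jensen; the paper's route keeps everything at the level of means until the last step, where the passage to $X^1$ is handled by a feasibility argument rather than convexity.
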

\begin{proof}
Rearranging equation (\ref{eqn_mu2}), we get,
    \[\mu - \overline{\mu}^i= \frac{\mu^i - \mu}{n-1}\text{.}\]
Therefore,
\begin{equation}\label{eqn_1}
\mathbb{E}_{i\sim[n]} \left[\sum_{j\in[k]} |\mu_j - \overline{\mu}^i_j| \right]= \frac{ \mathbb{E}_{i\sim[n]}\left[ \sum_{j\in[k]} \left|\mu^i_j - \mu_j \right| \right]}{n-1}= \frac{  \sum_{j\in[k]}\mathbb{E}_{i\sim[n]}\left[ \left|\mu^i_j - \mu_j \right| \right]}{n-1}\text{.}
\end{equation}
Observe that $\left|\mu^i_j - \mu_j \right|=\left(\mu_j - \mu^i_j \right)^++\left(\mu^i_j - \mu_j \right)^+$. From equation (\ref{eqn_mu1}), we have $\mu=\mathbb{E}_{i\sim[n]}[\mu^i]$, which implies 
\[\mathbb{E}_{i\sim[n]}\left[ \left(\mu_j - \mu^i_j \right)^+-\left(\mu^i_j - \mu_j \right)^+ \right]=\mathbb{E}_{i\sim[n]}\left[\mu_j - \mu^i_j\right]=0\text{.}\]
Thus,
\begin{align*}
\mathbb{E}_{i\sim[n]}\left[ \left|\mu^i_j - \mu_j \right| \right] &= \mathbb{E}_{i\sim[n]}\left[ \left(\mu_j - \mu^i_j \right)^++\left(\mu^i_j - \mu_j \right)^+ \right]\\
 &= \mathbb{E}_{i\sim[n]}\left[ \left(\mu_j - \mu^i_j \right)^++\left(\mu^i_j - \mu_j \right)^+ \right]+\mathbb{E}_{i\sim[n]}\left[ \left(\mu_j - \mu^i_j \right)^+-\left(\mu^i_j - \mu_j \right)^+ \right]\\
 &= 2\cdot\mathbb{E}_{i\sim[n]}\left[ \left(\mu_j - \mu^i_j \right)^+\right]\text{.}
\end{align*}
Substituting in equation (\ref{eqn_1}), we get,
\begin{equation}\label{eqn_2}
\mathbb{E}_{i\sim[n]} \left[\sum_{j\in[k]} |\mu_j - \overline{\mu}^i_j| \right]=\frac{2\cdot  \sum_{j\in[k]}\mathbb{E}_{i\sim[n]}\left[ \left(\mu_j - \mu^i_j \right)^+ \right]}{n-1}\text{.}
\end{equation}
Applying \Cref{lemma:matroid2} with the weight function $w = \mu - \mu^i$, we get,
\begin{equation}\label{eqn_3}
\mathbb{E}_{i\sim[n]}\left[ \sum_{j\in[k]} \left(\mu_j - \mu^i_j \right)^+ \right]\leq d\cdot\mathbb{E}_{i\sim[n]}\left[ \texttt{top}_{\mathcal{M}}  \left( \mu - \mu^i \right)  \right]\text{,}
\end{equation}
where $d$ is the density of the matroid $\mathcal{M}$. Let $Y^i$ be a maximum weight feasible set of $\mathcal{M}$ with respect to the weight function $\mu-\mu^i$. Then we have,
    \begin{align}
        \mathbb{E}_{i\sim[n]}\left[ \texttt{top}_{\mathcal{M}}  \left( \mu - \mu^i \right)  \right] &= \mathbb{E}_{i\sim[n]}\left[ \sum_{y \in Y^i} \left( \mu_y - \mu^i_y \right) \right]\nonumber\\
        &= \frac{1}{n}\cdot\sum_{i=1}^n\sum_{y \in Y^i} \left( \mu_y - \mu^i_y \right)\nonumber\\
        &= \frac{1}{n}\cdot\sum_{i=1}^n\sum_{y \in Y^i}  \mathbb{E}_{X^1} \left[ (\mu_y - X^1_y) \middle| \sigma(1) = i\right]\nonumber\\
        &= \frac{1}{n}\cdot\sum_{i=1}^n\mathbb{E}_{X^1} \left[ \sum_{y \in Y^i} (\mu_y - X^1_y) \middle| \sigma(1) = i\right]\nonumber\\
        &\leq \frac{1}{n}\cdot\sum_{i=1}^n\mathbb{E}_{X^1} \left[\texttt{top}_{\mathcal{M}} \left(\mu - X^1\right)\middle| \sigma(1) = i\right]\nonumber\\
        &= \mathbb{E}_{X^1} \left[\texttt{top}_{\mathcal{M}} \left(\mu - X^1\right) \right]\text{.}\label{eqn_4}
    \end{align}
Equation (\ref{eqn_2}) and inequalities (\ref{eqn_3}) and (\ref{eqn_4}) together imply,
\[
\mathbb{E}_{i\sim[n]} \left[\sum_{j\in[k]} |\mu_j - \overline{\mu}^i_j| \right] \leq \frac{2d}{n-1}\cdot\mathbb{E}_{X^1}\left[ \texttt{\textup{top}}_{\mathcal{M}}  \left( \mu - X^1 \right) \right]\text{,}
\]
as required.
\end{proof}

We use the above lemma to bound the discrepancy between the per-time-step expected profit of \Cref{alg:rand} and the per-time-step expected profit of the optimal online algorithm (\Cref{alg:Online}) if the prices were to be drawn independently from the mixture distribution as follows.

\begin{lemma}
\label{lem_matroid_random_2}
    \[
    \mathbb{E}_{\sigma(1),X^1}\left[\sum_{j \in H^1} \left( \overline{\mu}^{\sigma(1)}_j - X^1_j \right)\right] \geq \left(1 - \frac{2 d}{n-1}\right) \mathbb{E}_{X^1}\left[\texttt{\textup{top}}_{\mathcal{M}} \left( \mu - X^1 \right)\right].
    \]
\end{lemma}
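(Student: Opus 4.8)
The plan is to decompose the per-time-step profit of \Cref{alg:rand} into an ``ideal'' part and an ``error'' part. Since $H^1$ is, by construction, the maximum-weight feasible set of $\mathcal{M}$ for the weight vector $\mu - X^1$ and $\mu$ is a fixed (non-random) quantity, we have, pointwise,
\[
\sum_{j \in H^1} \bigl(\overline{\mu}^{\sigma(1)}_j - X^1_j\bigr) \;=\; \sum_{j \in H^1}\bigl(\mu_j - X^1_j\bigr) + \sum_{j \in H^1}\bigl(\overline{\mu}^{\sigma(1)}_j - \mu_j\bigr) \;=\; \texttt{top}_{\mathcal{M}}(\mu - X^1) + \sum_{j \in H^1}\bigl(\overline{\mu}^{\sigma(1)}_j - \mu_j\bigr).
\]
Taking expectations, the first term contributes exactly $\mathbb{E}_{X^1}\bigl[\texttt{top}_{\mathcal{M}}(\mu - X^1)\bigr]$, because $H^1$ depends only on $X^1$, whose marginal law is the mixture with CDF $F = (F^1 + \cdots + F^n)/n$. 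It therefore remains to lower bound the expectation of the error term $\sum_{j \in H^1}(\overline{\mu}^{\sigma(1)}_j - \mu_j)$.

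For the error term I would use the triangle inequality together with $H^1 \subseteq [k]$ and nonnegativity of absolute values: with probability one,
\[
\sum_{j \in H^1}\bigl(\overline{\mu}^{\sigma(1)}_j - \mu_j\bigr) \;\geq\; -\sum_{j \in H^1}\bigl|\mu_j - \overline{\mu}^{\sigma(1)}_j\bigr| \;\geq\; -\sum_{j \in [k]}\bigl|\mu_j - \overline{\mu}^{\sigma(1)}_j\bigr|.
\]
Taking expectations and noting that $\overline{\mu}^{\sigma(1)}$ is a function of $\sigma(1)$ alone, which is uniformly distributed on $[n]$, the expected error is at least $-\mathbb{E}_{i \sim [n]}\bigl[\sum_{j \in [k]} |\mu_j - \overline{\mu}^i_j|\bigr]$. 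Then I would invoke \Cref{lem_matroid_random_3}, which bounds exactly this quantity by $\frac{2d}{n-1}\,\mathbb{E}_{X^1}\bigl[\texttt{top}_{\mathcal{M}}(\mu - X^1)\bigr]$. Combining the pieces gives
\[
\mathbb{E}_{\sigma(1),X^1}\Bigl[\sum_{j \in H^1}\bigl(\overline{\mu}^{\sigma(1)}_j - X^1_j\bigr)\Bigr] \;\geq\; \mathbb{E}_{X^1}\bigl[\texttt{top}_{\mathcal{M}}(\mu - X^1)\bigr] - \frac{2d}{n-1}\,\mathbb{E}_{X^1}\bigl[\texttt{top}_{\mathcal{M}}(\mu - X^1)\bigr] \;=\; \Bigl(1 - \tfrac{2d}{n-1}\Bigr)\,\mathbb{E}_{X^1}\bigl[\texttt{top}_{\mathcal{M}}(\mu - X^1)\bigr],
\]
as required.

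There is essentially no computation in this argument: it is the decomposition above, one triangle inequality, and a black-box application of \Cref{lem_matroid_random_3} (which is where all the real work lives — the $\ell_1$-closeness of the leave-one-out mean $\overline{\mu}^i$ to the full mixture mean $\mu$). The only point requiring care is the bookkeeping of dependencies: that $H^1$ is determined by $X^1$ alone, that $\overline{\mu}^{\sigma(1)}$ is determined by $\sigma(1)$ alone with $\sigma(1)$ uniform over $[n]$, and that the marginal distribution of $X^1$ is the mixture distribution — all established in the setup preceding the lemma. In other words, the ``hard part'' is just certifying that replacing $\mu$ (the mean the algorithm uses to pick $H^1$) by $\overline{\mu}^{\sigma(1)}$ (the mean governing the realized resale price) costs only a $(1 - \frac{2d}{n-1})$ factor, which is exactly what the triangle-inequality bound plus \Cref{lem_matroid_random_3} deliver.
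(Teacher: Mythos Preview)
Your proposal is correct and follows essentially the same approach as the paper: decompose the sum over $H^1$ into $\texttt{top}_{\mathcal{M}}(\mu - X^1)$ plus an error term, bound the error pointwise by $\sum_{j\in[k]}|\mu_j-\overline{\mu}^{\sigma(1)}_j|$ via the triangle inequality and $H^1\subseteq[k]$, then take expectations and invoke \Cref{lem_matroid_random_3}. The paper presents the same steps in a slightly different order (starting from the per-coordinate inequality $\mu_j - X^1_j \leq (\overline{\mu}^{\sigma(1)}_j - X^1_j) + |\overline{\mu}^{\sigma(1)}_j - \mu_j|$), but the content is identical.
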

\begin{proof}
    For each stock $j \in [k]$, we have
    \[
    \mu_j - X_j^1 \leq (\mu_j - X_j^1) + (\overline{\mu}^{\sigma(1)}_j - \mu_j) + |\overline{\mu}^{\sigma(1)}_j - \mu_j| = (\overline{\mu}^{\sigma(1)}_j - X_j^1) + |\overline{\mu}^{\sigma(1)}_j - \mu_j|.
    \]
    Summing over all $ j \in H^1 $, we obtain
    \[
    \sum_{j \in H^1}(\mu_j - X_j^1) \leq \sum_{j \in H^1}(\overline{\mu}^{\sigma(1)}_j - X_j^1) + \sum_{j \in H^1}|\overline{\mu}^{\sigma(1)}_j - \mu_j|\text{.}
    \]
    Recall from the definition of \Cref{alg:rand} that $H^1$ is a maximum weight feasible set of the matroid $\mathcal{M}$ with respect to the weight function $\mu-X^1$. Therefore, $\sum_{j \in H^1}(\mu_j - X_j^1) = \texttt{top}_{\mathcal{M}} \left( \mu - X^1 \right)$. Moreover, $\sum_{j \in H^1}|\overline{\mu}^{\sigma(1)}_j - \mu_j|\leq\sum_{j\in[k]}|\overline{\mu}^{\sigma(1)}_j - \mu_j|$. Thus,
\[\texttt{top}_{\mathcal{M}} \left( \mu - X^1 \right)\leq \sum_{j \in H^1}(\overline{\mu}^{\sigma(1)}_j - X_j^1) + \sum_{j\in[k]} |\overline{\mu}^{\sigma(1)}_j - \mu_j|\text{.}\]
Taking the expectation on both sides over $\sigma(1),X^1$, we get,
    \begin{align*}
        \mathbb{E}_{X^1}\left[\texttt{top}_{\mathcal{M}} \left( \mu - X^1 \right) \right] &\leq \mathbb{E}_{\sigma(1),X^1} \left[\sum_{j \in Z}(\overline{\mu}^{\sigma(1)}_j - X_j^1) \right] + \mathbb{E}_{\sigma(1)}\left[\sum_{j\in[k]} |\overline{\mu}^{\sigma(1)}_j - \mu_j|\right]\\
        &\leq \mathbb{E}_{\sigma(1),X^1} \left[\sum_{j \in Z}(\overline{\mu}^{\sigma(1)}_j - X_j^1) \right] + \frac{2d}{n-1}\cdot\mathbb{E}_{X^1}\left[ \texttt{\textup{top}}_{\mathcal{M}}  \left( \mu - X^1 \right) \right],
    \end{align*}
    where the last inequality follows from \Cref{lem_matroid_random_3}, because $\sigma(1)$ is uniformly distributed over $[n]$. Rearranging the terms, we get:
    \[
    \mathbb{E}_{\sigma(1),X^1} \left[\sum_{j \in Z}(\overline{\mu}^{\sigma(1)}_j - X_j^1) \right] \geq \left(1 - \frac{2 d}{n-1}\right) \mathbb{E}_{X^1}\left[\texttt{\textup{top}}_{\mathcal{M}} \left( \mu - X^1 \right)\right]\text{,}
    \]
    as required.   
\end{proof}

Having gathered all the necessary prerequisites, we now present the proof of the competitive ratio of our algorithm.

\begin{theorem}[Restated Theorem~\ref{thm:random_order}]
For the non-i.i.d.\ random-order trading prophet problem over a matroid of density $d$ and with $n$ distributions, \Cref{alg:rand} achieves a competitive ratio of at least $\frac{1}{1+d} - \frac{2}{n}$.
\end{theorem}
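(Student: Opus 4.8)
The plan is to chain the three lemmas of this section with the i.i.d.\ competitiveness guarantee of \Cref{alg:Online}. Recall that the per-time-step expected profit of the optimal offline algorithm equals $\mathbb{E}\left[\texttt{top}_{\mathcal{M}}(X^2 - X^1)\right]$, while that of \Cref{alg:rand} equals $\mathbb{E}_{\sigma(1),X^1}\left[\sum_{j \in H^1}(\overline{\mu}^{\sigma(1)}_j - X^1_j)\right]$, and that $\mathrm{OPT}$ and $\mathrm{ALG}$ are each $(n-1)$ times the respective per-time-step quantity; hence it suffices to compare these two per-time-step profits. We may assume $n - 1 > 2d$, since otherwise $\frac{1}{1+d} - \frac{2}{n} \leq 0$ and the statement is vacuous.

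First I would pass from the dependent pair $(X^2, X^1)$ to the i.i.d.\ pair $(\widetilde{X}^2, X^1)$ using \Cref{lem_matroid_random_1}, which gives $\mathbb{E}\left[\texttt{top}_{\mathcal{M}}(X^2 - X^1)\right] \leq \frac{n}{n-1}\,\mathbb{E}\left[\texttt{top}_{\mathcal{M}}(\widetilde{X}^2 - X^1)\right]$. By \Cref{obs_unconditional}, the random variables $\widetilde{X}^2$ and $X^1$ are independent and identically distributed, with common CDF the mixture $F = (F^1 + \cdots + F^n)/n$ and expectation $\mu$; therefore \Cref{lem_perday_opt}, \Cref{lem_perday_alg}, and \Cref{matroid_theorem_proof}, applied to the i.i.d.\ instance with CDF $F$, yield $\mathbb{E}\left[\texttt{top}_{\mathcal{M}}(\widetilde{X}^2 - X^1)\right] \leq (1+d)\,\mathbb{E}\left[\texttt{top}_{\mathcal{M}}(\mu - X^1)\right]$. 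Finally, \Cref{lem_matroid_random_2} relates the right-hand side to the per-time-step profit of \Cref{alg:rand}: since $1 - \frac{2d}{n-1} > 0$, we may rearrange it to $\mathbb{E}\left[\texttt{top}_{\mathcal{M}}(\mu - X^1)\right] \leq \left(1 - \frac{2d}{n-1}\right)^{-1}\mathbb{E}_{\sigma(1),X^1}\left[\sum_{j \in H^1}(\overline{\mu}^{\sigma(1)}_j - X^1_j)\right]$.

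Multiplying these three inequalities (whose intermediate quantities are all non-negative) gives
\[
\mathbb{E}\left[\texttt{top}_{\mathcal{M}}(X^2 - X^1)\right] \;\leq\; \frac{n(1+d)}{n-1-2d}\cdot\mathbb{E}_{\sigma(1),X^1}\left[\textstyle\sum_{j \in H^1}(\overline{\mu}^{\sigma(1)}_j - X^1_j)\right],
\]
so \Cref{alg:rand} is $\frac{n-1-2d}{n(1+d)}$-competitive. It remains only to observe that $\frac{1}{1+d} - \frac{n-1-2d}{n(1+d)} = \frac{1+2d}{n(1+d)} = \frac{1}{n}\cdot\frac{1+2d}{1+d} < \frac{2}{n}$, so the competitive ratio is at least $\frac{1}{1+d} - \frac{2}{n}$, as claimed.

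There is no genuinely difficult step: the theorem is a bookkeeping assembly of \Cref{lem_matroid_random_1}, \Cref{matroid_theorem_proof}, and \Cref{lem_matroid_random_2}. The points that warrant care are (i) confirming that the hypotheses of \Cref{matroid_theorem_proof} actually hold -- namely that $\widetilde{X}^2$ and $X^1$ are i.i.d.\ from a single fixed distribution, which is exactly what the coin-flip construction of $\widetilde{X}^2$ together with \Cref{obs_unconditional} guarantees -- and (ii) disposing of the degenerate regime $n - 1 \leq 2d$ at the outset, so that the factor $\left(1 - \frac{2d}{n-1}\right)^{-1}$ remains well defined and positive throughout the chain.
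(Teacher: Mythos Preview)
Your proof is correct and follows essentially the same route as the paper: chain \Cref{lem_matroid_random_1}, the i.i.d.\ guarantee of \Cref{matroid_theorem_proof}, and \Cref{lem_matroid_random_2} to obtain the competitive ratio $\frac{n-1-2d}{n(1+d)}=\left(1-\frac{2d+1}{n}\right)\cdot\frac{1}{1+d}\geq\frac{1}{1+d}-\frac{2}{n}$. The only cosmetic difference is that you invert the factor $1-\frac{2d}{n-1}$ (hence must dispose of the degenerate regime $n-1\leq 2d$ first), whereas the paper multiplies the three lower bounds directly and never needs that inversion; both arrive at the same final estimate.
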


\begin{proof}
    From \Cref{lem_matroid_random_1}, the per-time-step expected profit of the optimal offline algorithm is bounded as,
\begin{equation}\label{eqn_rand1}
    \mathbb{E}\left[\texttt{\textup{top}}_{\mathcal{M}}\left(\Tilde{X^2} - X^1 \right)\right] \geq \frac{n-1}{n}\cdot\mathbb{E}\left[\texttt{\textup{top}}_{\mathcal{M}}\left( X^2 - X^1 \right)\right]\textit{,}
\end{equation}
    where $X_1,\Tilde{X}_2$ are independent and identically distributed, each having CDF $F$. From \Cref{lem_matroid_random_2}, the per-time-step expected profit of the online algorithm is bounded as,
\begin{equation}\label{eqn_rand2}
    \mathbb{E}_{\sigma(1),X^1}\left[\sum_{j \in Z} \left( \overline{\mu}^{\sigma(1)}_j - X^1_j \right)\right] \geq \left(1 - \frac{2 d}{n-1}\right) \mathbb{E}_{X^1}\left[\texttt{\textup{top}}_{\mathcal{M}} \left( \mu - X^1 \right)\right]\text{,}
\end{equation}
    where $\mu$ is the expectation of the mixture distribution.

    Recall that our online algorithm (given by \Cref{alg:rand}) is nothing but the optimal online algorithm (given by \Cref{alg:Online}) for the i.i.d.\ trading prophet run on the mixture distribution. Consider a random experiment involving the run of \Cref{alg:Online} on the i.i.d.\ trading prophet instance, where the distribution of each price vector is the mixture distribution. In this experiment, the per-time-step expected profit of the optimal offline algorithm is $\mathbb{E}\left[\texttt{\textup{top}}_{\mathcal{M}}\left(\Tilde{X^2} - X^1 \right)\right]$, while that of the online algorithm is $\mathbb{E}_{X^1}\left[\texttt{\textup{top}}_{\mathcal{M}} \left( \mu - X^1 \right)\right]$. The competitiveness guarantee of \Cref{alg:Online} (given by \Cref{matroid_theorem_proof}) ensures,
\begin{equation}\label{eqn_rand3}
    \mathbb{E}_{X^1}\left[\texttt{\textup{top}}_{\mathcal{M}} \left( \mu - X^1 \right)\right]\geq\frac{1}{1+d}\cdot\mathbb{E}\left[\texttt{\textup{top}}_{\mathcal{M}}\left(\Tilde{X^2} - X^1 \right)\right]
\end{equation}
    Putting together inequalities (\ref{eqn_rand1}), (\ref{eqn_rand2}), (\ref{eqn_rand3}), we get,
    \[\mathbb{E}_{\sigma(1),X^1}\left[\sum_{j \in Z} \left( \overline{\mu}^{\sigma(1)}_j - X^1_j \right)\right] \geq \left(1 - \frac{2 d}{n-1}\right)\cdot\frac{1}{1+d}\cdot\frac{n-1}{n}\cdot\mathbb{E}\left[\texttt{\textup{top}}_{\mathcal{M}}\left( X^2 - X^1 \right)\right]\text{.}\]
    Thus, \Cref{alg:rand} achieves competitive ratio at least
    \[\left(1 - \frac{2 d}{n-1}\right)\cdot\frac{1}{1+d}\cdot\frac{n-1}{n}=\left(1 - \frac{2d+1}{n}\right)\cdot\frac{1}{1+d}\geq\frac{1}{1+d} - \frac{2}{n}\text{,}\]
    as required.
\end{proof}

\section{Concluding Remarks}

In this work, we formulated generalizations of the trading prophet problem, expanding the task of trading a single stock to trading multiple stocks under matroid constraints. We identified the behavior of the optimal offline and online trading strategies and, as a consequence, established the optimal competitive ratio in each case.

In the setup where the price vectors are independent and identically distributed, our generalizations were consequences of our simplification of the results of \cite{CCD+23trading} using the two observations. First, any algorithm can be simulated by one that sells all held stocks before buying an appropriate subset of stocks. Second, the analysis of the general problem reduces to one in which the expected price of every stock is zero.

When price vectors are not identically distributed and arrive in a uniformly random order, our results are consequences of the following intuitive observation. Consider two random experiments conducted on a set of size $n$. In the first experiment, two objects are sampled without replacement, while in the second, two objects are sampled with replacement -- uniformly and independently in both experiments. As $n$ approaches $\infty$, the distributions of the outcomes in both experiments become increasingly similar, because the probability of a collision in the second experiment decreases to zero. This idea enables us to reuse the algorithm for the i.i.d.\ setting with a little error analysis.

Our analysis is crucially dependent on the greedy Kruskal's algorithm being able find a maximum weight feasible set in a downward-closed set system. It is known that Kruskal's algorithm has this guarantee if and only if the underlying set system is a matroid. Thus, the design and analysis of trading algorithms over non-matroid constraints will require significantly novel ideas, and we leave this as an open problem. The matching constraint over a bipartite graph is arguably one of the simplest such constraints.

\section*{Acknowledgments}
We are grateful to anonymous reviewers for their helpful comments. RV acknowledges support from DST INSPIRE grant no. DST/INSPIRE/04/2020/000107, SERB grant no. CRG/2022/002621, and iHub Anubhuti IIITD Foundation.


\end{document}